%% file: main.tex
\documentclass[letterpaper, 10 pt, conference]{ieeeconf}

\IEEEoverridecommandlockouts%
\usepackage{cite}
\usepackage{amsmath}
\usepackage{amssymb}
\usepackage{amsfonts}
\usepackage{algorithmic}
\usepackage{graphicx}
\usepackage{textcomp}
\usepackage{xcolor}

\def\BibTeX{{\rm B\kern-.05em{\sc i\kern-.025em b}\kern-.08em
    T\kern-.1667em\lower.7ex\hbox{E}\kern-.125emX}}

\input{my_packages}

\input{my_commands}

\input{Styles/colors}
\input{Styles/environments}
\input{Styles/graphs}
\input{Styles/spacing}

\begin{document}

\title{Quantifying Control Performance Loss for a\\ Least Significant Bits Authentication Scheme}

% 1: IEEEtran.cls
% \author{\IEEEauthorblockN{1\textsuperscript{st} Given Name Surname}
% \IEEEauthorblockA{\textit{dept. name of organization (of Aff.)} \\
% \textit{name of organization (of Aff.)}\\
% City, Country \\
% email address or ORCID}
% \and
% \IEEEauthorblockN{2\textsuperscript{nd} Given Name Surname}
% \IEEEauthorblockA{\textit{dept. name of organization (of Aff.)} \\
% \textit{name of organization (of Aff.)}\\
% City, Country \\
% email address or ORCID}
% \and
% \IEEEauthorblockN{3\textsuperscript{rd} Given Name Surname}
% \IEEEauthorblockA{\textit{dept. name of organization (of Aff.)} \\
% \textit{name of organization (of Aff.)}\\
% City, Country \\
% email address or ORCID}
% \and
% \IEEEauthorblockN{4\textsuperscript{th} Given Name Surname}
% \IEEEauthorblockA{\textit{dept. name of organization (of Aff.)} \\
% \textit{name of organization (of Aff.)}\\
% City, Country \\
% email address or ORCID}
% \and
% \IEEEauthorblockN{5\textsuperscript{th} Given Name Surname}
% \IEEEauthorblockA{\textit{dept. name of organization (of Aff.)} \\
% \textit{name of organization (of Aff.)}\\
% City, Country \\
% email address or ORCID}
% \and
% \IEEEauthorblockN{6\textsuperscript{th} Given Name Surname}
% \IEEEauthorblockA{\textit{dept. name of organization (of Aff.)} \\
% \textit{name of organization (of Aff.)}\\
% City, Country \\
% email address or ORCID}
% }
% 2: CSSconf.cls
% FIXME: This thanks{} is too far indented..
\author{Bart Wolleswinkel and Riccardo Ferrari%
	\thanks{
		This work has been partially supported by the EU Horizon program through the project TWAIN, grant id 101122194.}
	\thanks{The authors are with the Delft Center for Systems and Control (DCSC), Mechanical Engineering Faculty, Delft University of Technology, Delft, The Netherlands (email:\{\texttt{b.wolleswinkel},\texttt{r.ferrari}\}\texttt{@tudelft.nl}).
	}
}

\maketitle%

\begin{abstract}
    \Acfp{ICS} often consist of many legacy devices, which were designed without security requirements in mind. With the increase in cyberattacks targeting critical infrastructure, there is a growing urgency to develop legacy-compatible security solutions tailored to the specific needs and constraints of real-time control systems. We propose a~\acfp{LSB} coding scheme providing message authentication and integrity, which is compatible with legacy devices and never compromises availability. The scheme comes with provable security guarantees, and we provide a simple yet effective method to deal with synchronization issues due to packet dropouts. Furthermore, we quantify the control performance loss for both a fixed-point and floating-point quantization architecture when using the proposed coding scheme. We demonstrate its effectiveness in detecting cyberattacks, as well as the impact on control performance, on a hydro power turbine control system.
    %
    %  by relaxing the constraint that the measurement as induced by the attack needs to be identically zero
    %
    % By doing so, the adversary can take saturation constraints on the actuators, stealthiness of the measured output, and disruptiveness of the final state into consideration.
\end{abstract}

% Include main matter
\input{Sections/ecc_2026}

% Include acronyms
\input{Sections/acronyms}

% ------------ BIBLIOGRAPHY ------------

\bibliographystyle{IEEEtran}
\bibliography{references}

% ==== DRAFT ====
% \clearpage%
% \input{Sections/reviews}
% ==== DRAFT ====

\end{document}

%% file: my_packages.tex
\usepackage{amsmath}
\usepackage{amssymb}
\usepackage{amsthm}  % For better theorems
\usepackage{mathtools}
\usepackage{accents} % For tilde symbol under a letters
\usepackage{scalerel} % For stretching symbols
\usepackage{etoolbox} % For Arabic numerals in sub-equations
\usepackage{siunitx}  % For better unit handling
\usepackage[normalem]{ulem}  % Underlining
\patchcmd{\subequations}{\def\theequation{\theparentequation\alph{equation}}}%
{\def\theequation{\theparentequation.\arabic{equation}}}{}{}

\usepackage{hyperref}
\hypersetup{
    colorlinks=true,  % If false, text will be black but boxes will appear
    linkcolor=black,  % For changing \ref{} AND \ac{}
    urlcolor=mydarkblue,  % For changing \href{}{}
    citecolor=mydarkblue,  % For changing \cite{}
    pdftitle={Your title: go to packages > \textbackslash hyperref}
}

\usepackage{cleveref}

\usepackage[nolist]{acronym}

\usepackage{booktabs}
\usepackage{multirow}
\usepackage{multicol}
\usepackage{array}
    \newcolumntype{L}{>{$}l<{$}}
    \newcolumntype{C}{>{$}c<{$}}
    \newcolumntype{R}{>{$}r<{$}}

\usepackage{enumerate}

\usepackage{xargs}
\usepackage{ifthen} % Provides \ifthenelse test  
\usepackage{xifthen} % Provides \isempty test
\usepackage{lettrine}  % For large first letter
\usepackage{bm}  % For better bold math

\usepackage{tikz}
\usetikzlibrary{positioning, backgrounds} % For relative positioning and drawing things on a background
\usepackage{pgfplots} 
\usepackage{pgfgantt}
\usepackage{pdflscape}
\pgfplotsset{compat=newest} 
\pgfplotsset{plot coordinates/math parser=false}
\usepgfplotslibrary{fillbetween}
\usetikzlibrary{positioning}

\usepackage[hang]{footmisc} % For better spacing
\usepackage[framemethod=TikZ]{mdframed} % For better coloured frames

%% file: my_commands.tex
\newcommand{\R}{\mathbb{R}}

\newcommand{\Real}[2][]{%
    {\mathrm{Re} \mspace{-2mu}}
    \ifthenelse{\isempty{#1}}
        {\left\{ \mspace{1mu} #2 \mspace{1mu} \right\}} % if no number is given
    {\ifthenelse{\equal{#1}{*}}
        {\left\{ \mspace{1mu} #2 \mspace{1mu} \right\}} % if * is given
    {\ifthenelse{\equal{#1}{0}}
        {\{ \mspace{1mu} #2 \mspace{1mu} \}} % if 0 is given
    {\ifthenelse{\equal{#1}{1}}
        {\big\{ \mspace{1mu} #2 \mspace{1mu} \big\}} % if 1 is given
    {\ifthenelse{\equal{#1}{2}}
        {\Big\{ \mspace{1mu} #2 \mspace{1mu} \Big\}} % if 2 is given
    {\ifthenelse{\equal{#1}{3}}
        {\bigg\{ \mspace{1mu} #2 \mspace{1mu} \bigg\}} % if 3 is given
    {\ifthenelse{\equal{#1}{4}}
        {\Bigg\{ \mspace{1mu} #2 \mspace{1mu} \Bigg\}} % if 4 is given
    {\PackageError{mycommands}{#1 is an invalid optional argument}{Values should be between 0 and 4}}}}}}}} % else
}

\newcommand{\Imag}[2][]{%
    {\mathrm{Im} \mspace{-2mu}}
    \ifthenelse{\isempty{#1}}
        {\left\{ \mspace{1mu} #2 \mspace{1mu} \right\}} % if no number is given
    {\ifthenelse{\equal{#1}{*}}
        {\left\{ \mspace{1mu} #2 \mspace{1mu} \right\}} % if * is given
    {\ifthenelse{\equal{#1}{0}}
        {\{ \mspace{1mu} #2 \mspace{1mu} \}} % if 0 is given
    {\ifthenelse{\equal{#1}{1}}
        {\big\{ \mspace{1mu} #2 \mspace{1mu} \big\}} % if 1 is given
    {\ifthenelse{\equal{#1}{2}}
        {\Big\{ \mspace{1mu} #2 \mspace{1mu} \Big\}} % if 2 is given
    {\ifthenelse{\equal{#1}{3}}
        {\bigg\{ \mspace{1mu} #2 \mspace{1mu} \bigg\}} % if 3 is given
    {\ifthenelse{\equal{#1}{4}}
        {\Bigg\{ \mspace{1mu} #2 \mspace{1mu} \Bigg\}} % if 4 is given
    {\PackageError{mycommands}{#1 is an invalid optional argument}{Values should be between 0 and 4}}}}}}}} % else
}

\newcommand\vctr[1]{%
    \ifcat\noexpand#1\relax % Check if the argument is a control sequence
        \bm{#1} % Probably Greek
    \else
        \bm{#1} % Single character
    \fi
} % Change to either boldface or boldface + italic

\newcommand{\mtrx}[1]{%
    \ifcat\noexpand#1\relax % Check if the argument is a control sequence
        \bm{#1} % Probably Greek
    \else
        \bm{#1} % Single character
    \fi
} % Change to either boldface or boldface + italic

\newcommand{\T}{\mathsf{T}}

\newcommand{\ccdot}{\mathchoice
  {\hspace{-0.15em}\cdot\hspace{-0.15em}}%
  {\hspace{-0.15em}\cdot\hspace{-0.15em}}%
  {\cdot}%
  {\cdot}%
}

\newcommand{\vnorm}[2][]{%
    \ifthenelse{\isempty{#1}}
        {\left\Vert \mspace{1mu} #2 \mspace{1mu} \right\Vert} % if no number is given
    {\ifthenelse{\equal{#1}{*}}
        {\left\Vert \mspace{1mu} #2 \mspace{1mu} \right\Vert} % if * is given
    {\ifthenelse{\equal{#1}{0}}
        {\Vert \mspace{1mu} #2 \mspace{1mu} \Vert} % if 0 is given
    {\ifthenelse{\equal{#1}{1}}
        {\big\Vert \mspace{1mu} #2 \mspace{1mu} \big\Vert} % if 1 is given
    {\ifthenelse{\equal{#1}{2}}
        {\Big\Vert \mspace{1mu} #2 \mspace{1mu} \Big\Vert} % if 2 is given
    {\ifthenelse{\equal{#1}{3}}
        {\bigg\Vert \mspace{1mu} #2 \mspace{1mu} \bigg\Vert} % if 3 is given
    {\ifthenelse{\equal{#1}{4}}
        {\Bigg\Vert \mspace{1mu} #2 \mspace{1mu} \Bigg\Vert} % if 4 is given
    {\PackageError{mycommands}{#1 is an invalid optional argument}{Values should be between 0 and 4}}}}}}}} % else
}

\newcommand{\abs}[2][]{%
    \ifthenelse{\isempty{#1}}
        {\left\vert \mspace{1mu} #2 \mspace{1mu} \right\vert} % if no number is given
    {\ifthenelse{\equal{#1}{*}}
        {\left\vert \mspace{1mu} #2 \mspace{1mu} \right\vert} % if * is given
    {\ifthenelse{\equal{#1}{0}}
        {\vert \mspace{1mu} #2 \mspace{1mu} \vert} % if 0 is given
    {\ifthenelse{\equal{#1}{1}}
        {\big\vert \mspace{1mu} #2 \mspace{1mu} \big\vert} % if 1 is given
    {\ifthenelse{\equal{#1}{2}}
        {\Big\vert \mspace{1mu} #2 \mspace{1mu} \Big\vert} % if 2 is given
    {\ifthenelse{\equal{#1}{3}}
        {\bigg\vert \mspace{1mu} #2 \mspace{1mu} \bigg\vert} % if 3 is given
    {\ifthenelse{\equal{#1}{4}}
        {\Bigg\vert \mspace{1mu} #2 \mspace{1mu} \Bigg\vert} % if 4 is given
    {\PackageError{mycommands}{#1 is an invalid optional argument}{Values should be between 0 and 4}}}}}}}} % else
}

\DeclareMathOperator*{\argmin}{arg\,min}
\newcommand{\Exp}[2][]{%
    \ifthenelse{\isempty{#1}}
        {\mathbb{E} \mspace{-2mu} \left[ \mspace{1mu} #2 \mspace{1mu} \right]} % if no number is given
    {\ifthenelse{\equal{#1}{*}}
        {\mathbb{E} \mspace{-1mu} \left[ \mspace{1mu} #2 \mspace{1mu} \right]} % if * is given
    {\ifthenelse{\equal{#1}{0}}
        {\mathbb{E} [ \mspace{1mu} #2 \mspace{1mu} ]} % if 0 is given
    {\ifthenelse{\equal{#1}{1}}
        {\mathbb{E} \big[ \mspace{1mu} #2 \mspace{1mu} \big]} % if 1 is given
    {\ifthenelse{\equal{#1}{2}}
        {\mathbb{E} \Big[ \mspace{1mu} #2 \mspace{1mu} \Big]} % if 2 is given
    {\ifthenelse{\equal{#1}{3}}
        {\mathbb{E} \bigg[ \mspace{1mu} #2 \mspace{1mu} \bigg]} % if 3 is given
    {\ifthenelse{\equal{#1}{4}}
        {\mathbb{E} \Bigg[ \mspace{1mu} #2 \mspace{1mu} \Bigg]} % if 4 is given
    {\PackageError{mycommands}{#1 is an invalid optional argument}{Values should be between 0 and 4}}}}}}}} % else
}

\newcommand{\Var}[2][]{%
    \ifthenelse{\isempty{#1}}
        {\mathbb{V}\mathrm{ar} \mspace{-2mu} \left[ \mspace{1mu} #2 \mspace{1mu} \right]} % if no number is given
    {\ifthenelse{\equal{#1}{*}}
        {\mathbb{V}\mathrm{ar} \mspace{-1mu} \left[ \mspace{1mu} #2 \mspace{1mu} \right]} % if * is given
    {\ifthenelse{\equal{#1}{0}}
        {\mathbb{V}\mathrm{ar} [ \mspace{1mu} #2 \mspace{1mu} ]} % if 0 is given
    {\ifthenelse{\equal{#1}{1}}
        {\mathbb{V}\mathrm{ar} \big[ \mspace{1mu} #2 \mspace{1mu} \big]} % if 1 is given
    {\ifthenelse{\equal{#1}{2}}
        {\mathbb{V}\mathrm{ar} \Big[ \mspace{1mu} #2 \mspace{1mu} \Big]} % if 2 is given
    {\ifthenelse{\equal{#1}{3}}
        {\mathbb{V}\mathrm{ar} \bigg[ \mspace{1mu} #2 \mspace{1mu} \bigg]} % if 3 is given
    {\ifthenelse{\equal{#1}{4}}
        {\mathbb{V}\mathrm{ar} \Bigg[ \mspace{1mu} #2 \mspace{1mu} \Bigg]} % if 4 is given
    {\PackageError{mycommands}{#1 is an invalid optional argument}{Values should be between 0 and 4}}}}}}}} % else
}

\newcommand{\Cov}[2][]{%
    \ifthenelse{\isempty{#1}}
        {\mathbb{C}\mathrm{ov} \mspace{-2mu} \left[ \mspace{1mu} #2 \mspace{1mu} \right]} % if no number is given
    {\ifthenelse{\equal{#1}{*}}
        {\mathbb{C}\mathrm{ov} \mspace{-1mu} \left[ \mspace{1mu} #2 \mspace{1mu} \right]} % if * is given
    {\ifthenelse{\equal{#1}{0}}
        {\mathbb{C}\mathrm{ov} [ \mspace{1mu} #2 \mspace{1mu} ]} % if 0 is given
    {\ifthenelse{\equal{#1}{1}}
        {\mathbb{C}\mathrm{ov} \big[ \mspace{1mu} #2 \mspace{1mu} \big]} % if 1 is given
    {\ifthenelse{\equal{#1}{2}}
        {\mathbb{C}\mathrm{ov} \Big[ \mspace{1mu} #2 \mspace{1mu} \Big]} % if 2 is given
    {\ifthenelse{\equal{#1}{3}}
        {\mathbb{C}\mathrm{ov} \bigg[ \mspace{1mu} #2 \mspace{1mu} \bigg]} % if 3 is given
    {\ifthenelse{\equal{#1}{4}}
        {\mathbb{C}\mathrm{ov} \Bigg[ \mspace{1mu} #2 \mspace{1mu} \Bigg]} % if 4 is given
    {\PackageError{mycommands}{#1 is an invalid optional argument}{Values should be between 0 and 4}}}}}}}} % else
}

\DeclareMathSymbol{\shortminus}{\mathbin}{AMSa}{"39}

\newcommand{\set}[2][]{%
    \ifthenelse{\isempty{#1}}
        {\left\{ \mspace{1mu} #2 \mspace{1mu} \right\}} % if no number is given
    {\ifthenelse{\equal{#1}{*}}
        {\left\{ \mspace{1mu} #2 \mspace{1mu} \right\}} % if * is given
    {\ifthenelse{\equal{#1}{0}}
        {\{ \mspace{1mu} #2 \mspace{1mu} \}} % if 0 is given
    {\ifthenelse{\equal{#1}{1}}
        {\big\{ \mspace{1mu} #2 \mspace{1mu} \big\}} % if 1 is given
    {\ifthenelse{\equal{#1}{2}}
        {\Big\{ \mspace{1mu} #2 \mspace{1mu} \Big\}} % if 2 is given
    {\ifthenelse{\equal{#1}{3}}
        {\bigg\{ \mspace{1mu} #2 \mspace{1mu} \bigg\}} % if 3 is given
    {\ifthenelse{\equal{#1}{4}}
        {\Bigg\{ \mspace{1mu} #2 \mspace{1mu} \Bigg\}} % if 4 is given
    {\PackageError{mycommands}{#1 is an invalid optional argument}{Values should be between 0 and 4}}}}}}}} % else
}

\newcommand{\forwhich}[1][]{%
    \ifthenelse{\isempty{#1}}
        {\mspace{3mu} \vert \mspace{3mu}} % if no number is given
    {\ifthenelse{\equal{#1}{*}}
        {\mspace{3mu} \vert \mspace{3mu}} % if * is given
    {\ifthenelse{\equal{#1}{0}}
        {\mspace{3mu} \vert \mspace{3mu}} % if 0 is given
    {\ifthenelse{\equal{#1}{1}}
        {\mspace{3mu} \big\vert \mspace{3mu}} % if 1 is given
    {\ifthenelse{\equal{#1}{2}}
        {\mspace{3mu} \Big\vert \mspace{3mu}} % if 2 is given
    {\ifthenelse{\equal{#1}{3}}
        {\mspace{3mu} \bigg\vert \mspace{3mu}} % if 3 is given
    {\ifthenelse{\equal{#1}{4} }
        {\mspace{3mu} \Bigg\vert \mspace{3mu}} % if 4 is given
    {\PackageError{mycommands}{#1 is an invalid optional argument}{Values should be between 0 and 4}}}}}}}} % else
}

\newcommand{\given}[1][]{%
    \ifthenelse{\isempty{#1}}
        {\mspace{3mu} \vert \mspace{3mu}} % if no number is given
    {\ifthenelse{\equal{#1}{*}}
        {\mspace{3mu} \vert \mspace{3mu}} % if * is given
    {\ifthenelse{\equal{#1}{0}}
        {\mspace{3mu} \vert \mspace{3mu}} % if 0 is given
    {\ifthenelse{\equal{#1}{1}}
        {\mspace{3mu} \big\vert \mspace{3mu}} % if 1 is given
    {\ifthenelse{\equal{#1}{2}}
        {\mspace{3mu} \Big\vert \mspace{3mu}} % if 2 is given
    {\ifthenelse{\equal{#1}{3}}
        {\mspace{3mu} \bigg\vert \mspace{3mu}} % if 3 is given
    {\ifthenelse{\equal{#1}{4} }
        {\mspace{3mu} \Bigg\vert \mspace{3mu}} % if 4 is given
    {\PackageError{mycommands}{#1 is an invalid optional argument}{Values should be between 0 and 4}}}}}}}} % else
}

\newcommand{\diag}[2][]{%
    {\mathrm{diag} \mspace{-2mu}}
    \ifthenelse{\isempty{#1}}
        {\left( \mspace{1mu} #2 \mspace{1mu} \right)} % if no number is given
    {\ifthenelse{\equal{#1}{*}}
        {\left( \mspace{1mu} #2 \mspace{1mu} \right)} % if * is given
    {\ifthenelse{\equal{#1}{0}}
        {(\mspace{1mu} #2 \mspace{1mu})} % if 0 is given
    {\ifthenelse{\equal{#1}{1}}
        {\big( \mspace{1mu} #2 \mspace{1mu} \big)} % if 1 is given
    {\ifthenelse{\equal{#1}{2}}
        {\Big( \mspace{1mu} #2 \mspace{1mu} \Big)} % if 2 is given
    {\ifthenelse{\equal{#1}{3}}
        {\bigg( \mspace{1mu} #2 \mspace{1mu} \bigg)} % if 3 is given
    {\ifthenelse{\equal{#1}{4}}
        {\Bigg( \mspace{1mu} #2 \mspace{1mu} \Bigg)} % if 4 is given
    {\PackageError{mycommands}{#1 is an invalid optional argument}{Values should be between 0 and 4}}}}}}}} % else
}

\newcommand{\Stuxnet}{\textup{S}\textsc{tuxnet}}

\newcommand{\rndm}[2][]{%
    \ifthenelse{\isempty{#1}}
        {\underaccent{\tilde}{#2}} % if no number is given
    {\ifthenelse{\equal{#1}{*}}
        {\undertilde{#2}} % if * is given
    {\underaccent{\hstretch{#1}{\tilde{}}}{#2}}} % else
}

\newcommand{\range}[2][]{%
    {\mathrm{range} \mspace{-2mu}}
    \ifthenelse{\isempty{#1}}
        {\left( \mspace{1mu} #2 \mspace{1mu} \right)} % if no number is given
    {\ifthenelse{\equal{#1}{*}}
        {\left( \mspace{1mu} #2 \mspace{1mu} \right)} % if * is given
    {\ifthenelse{\equal{#1}{0}}
        {( \mspace{1mu} #2 \mspace{1mu} )} % if 0 is given
    {\ifthenelse{\equal{#1}{1}}
        {\big( \mspace{1mu} #2 \mspace{1mu} \big)} % if 1 is given
    {\ifthenelse{\equal{#1}{2}}
        {\Big( \mspace{1mu} #2 \mspace{1mu} \Big)} % if 2 is given
    {\ifthenelse{\equal{#1}{3}}
        {\bigg( \mspace{1mu} #2 \mspace{1mu} \bigg)} % if 3 is given
    {\ifthenelse{\equal{#1}{4}}
        {\Bigg( \mspace{1mu} #2 \mspace{1mu} \Bigg)} % if 4 is given
    {\PackageError{mycommands}{#1 is an invalid optional argument}{Values should be between 0 and 4}}}}}}}} % else
}

\newcommand{\rank}[2][]{%
    {\mathrm{rank} \mspace{-2mu}}
    \ifthenelse{\isempty{#1}}
        {\left( \mspace{1mu} #2 \mspace{1mu} \right)} % if no number is given
    {\ifthenelse{\equal{#1}{*}}
        {\left( \mspace{1mu} #2 \mspace{1mu} \right)} % if * is given
    {\ifthenelse{\equal{#1}{0}}
        {( \mspace{1mu} #2 \mspace{1mu} )} % if 0 is given
    {\ifthenelse{\equal{#1}{1}}
        {\big( \mspace{1mu} #2 \mspace{1mu} \big)} % if 1 is given
    {\ifthenelse{\equal{#1}{2}}
        {\Big( \mspace{1mu} #2 \mspace{1mu} \Big)} % if 2 is given
    {\ifthenelse{\equal{#1}{3}}
        {\bigg( \mspace{1mu} #2 \mspace{1mu} \bigg)} % if 3 is given
    {\ifthenelse{\equal{#1}{4}}
        {\Bigg( \mspace{1mu} #2 \mspace{1mu} \Bigg)} % if 4 is given
    {\PackageError{mycommands}{#1 is an invalid optional argument}{Values should be between 0 and 4}}}}}}}} % else
}

\renewcommand{\qedsymbol}{{\color{mydarkblue}\scriptsize\ensuremath{\blacksquare}}}

\newcommand\ellipsis{\ifmmode\hspace{0.1em}...\else\makebox[1em][c]{.\hfil.\hfil.}\thinspace\fi}

\newcommand*{\SavedEqref}{}
\let\SavedEqref\eqref
\renewcommand*{\eqref}[1]{%
\begingroup
    \hypersetup{%
        linkcolor=black,
        linkbordercolor=black,
    }%
    \SavedEqref{#1}%
\endgroup
}

\newcommand{\mf}[1]{\textit{\ttfamily #1}}

\renewcommand{\leq}{\leqslant}
\renewcommand{\geq}{\geqslant}
\renewcommand{\succeq}{\succcurlyeq}

\makeatletter
\DeclareRobustCommand{\sqcdot}{\mathbin{\scaleobj{1.2}{\mathpalette\morphic@sqcdot\relax}}}
\newcommand{\morphic@sqcdot}[2]{%
    \sbox\z@{$\m@th#1\centerdot$}%
    \ht\z@=.33333\ht\z@
    \vcenter{\box\z@}%
}
\makeatother

\newcommand{\dtdyn}[1]{#1}

\newsavebox{\foobox}

\newcommand\specialnote[2][3pt]{\smash{%
    \begin{array}{@{}l@{}}
        \\[#1] % empty line
        #2 
    \end{array}
}}

\newcommand{\ul}[1]{\underline{#1}}

\makeatletter
\newcommand{\coloruline}[2]{%
        \UL@protected\def\temp@uline{\leavevmode \bgroup
            \markoverwith{\textcolor{#1}{\rule[-0.7ex]{2pt}{1pt}}}%
        \ULon}%
        \temp@uline{#2}%
    }
    \newcommand{\coloruuline}[2]{%
        \UL@protected\def\temp@uuline{\leavevmode \bgroup
            \UL@setULdepth
            \ifx\UL@on\UL@onin \advance\ULdepth2.8\p@\fi
            \markoverwith{\textcolor{#1}{\lower\ULdepth\hbox
                {\kern-.03em\vbox{\hrule width.2em\kern1\p@\hrule}\kern-.03em}}}%
        \ULon}%
        \temp@uuline{#2}%
    }

    \newcommand{\coloruwave}[2]{%
        \UL@protected\def\temp@uwave{\leavevmode \bgroup 
        \ifdim \ULdepth=\maxdimen \ULdepth 3.5\p@
        \else \advance\ULdepth2\p@ 
        \fi \markoverwith{\textcolor{#1}{\lower\ULdepth\hbox{\sixly \char58}}}\ULon}
        \font\sixly=lasy6 % does not re-load if already loaded, so no memory drain.
        \temp@uwave{#2}%
    }
\makeatother

\makeatletter

\newcommand{\disableAcronymHyperlink}{%
    \def\AC@hyperlink##1##2{##2}%
    \def\AC@hyperref[##1]##2{##2}%
    \def\AC@hypertarget##1##2{##2}%
    \def\AC@phantomsection{}%
}
\makeatother

\newlength{\CapLenGreek}
\AtBeginDocument{\settoheight{\CapLenGreek}{$\mathbb{E}$}}
\newcommand{\ubar}[1]{\bar{#1}}

\makeatletter

\makeatletter

\newcommand{\FIXME}[1]{}

\newcommand{\textcomma}{\text{,}}
\newcommand{\textperiod}{\text{.}}

\newcommand{\sysn}[1]{\mathcal{#1}}
\newcommand{\setn}[1]{\mathbb{#1}}

\newcommand{\dimx}{n}
\newcommand{\dimu}{p}
\newcommand{\dimy}{n}
\newcommand{\nbitsm}{m}
\newcommand{\nbitse}{q}

\newcommand{\nbitstotal}{N}
\newcommand{\bias}{b(\nbitse)}
\newcommand{\smallesstepsize}{\delta}
\newcommand{\windowsize}{r}
\newcommand{\attacklength}{T}
\newcommand{\nbitswm}{L}

\newcommand\x{0}

\let\originalleft\left
\let\originalright\right
\renewcommand{\left}{\mathopen{}\mathclose\bgroup\originalleft}
\renewcommand{\right}{\aftergroup\egroup\originalright}

\newcommand{\recieved}[1]{#1^{\downarrow}}

\crefname{equation}{}{}
\creflabelformat{equation}{\textup{(#2#1#3)}}

\ExplSyntaxOn
\makeatletter
\renewcommand{\@firstupper}[1]{\text_titlecase_first:n {#1}}
\makeatother
\ExplSyntaxOff

\newcommand{\tn}[1]{\textnormal{#1}}

\let\oldsetminus\setminus
\renewcommand{\setminus}{\mathchoice
  {\mkern -1mu\oldsetminus{}\mkern -1mu}%
  {\mkern -1mu\oldsetminus{}\mkern -1mu}%
  {\oldsetminus}%
  {\oldsetminus}%
}

\let\oldfrac\frac
\renewcommand{\frac}[2]{\mathchoice
  {\oldfrac{#1}{#2}}%
  {#1/#2}%
  {#1/#2}%
  {#1/#2}%
}

\DeclareSIUnit\rpm{RPM}

\makeatletter%
\newcommand{\labelgroup}[1]{%
  \cref@label[equation]{#1}{\theequation}%
}
\makeatother%

%% file: Styles/colors.tex
\colorlet{mydarkblue}{blue!30!black}
\colorlet{mydarkred}{red!80!black}
\colorlet{mylightgray}{white!98!black}
\colorlet{mymediumgray}{white!85!black}
\colorlet{mydarkgray}{white!50!black}

\definecolor{tableau1}{HTML}{4E79A7}  % Blue
\definecolor{tableau2}{HTML}{F28E28}  % Orange
\definecolor{tableau3}{HTML}{E15759}  % Red
\definecolor{tableau4}{HTML}{76B7B2}  %
\definecolor{tableau5}{HTML}{59A14F}  % Green
\definecolor{tableau6}{HTML}{EDC948}  % 
\definecolor{tableau7}{HTML}{A0CBE8}  % Light-blue
\definecolor{tableau8}{HTML}{B07AA1}  % Purple

\definecolor{pycharm1}{RGB}{22,43,58}       % Dark gray background
\definecolor{pycharm2}{HTML}{8FA6F5}        % Baby blue
\definecolor{pycharm3}{HTML}{C8957B}        % Teal green
\definecolor{pycharm4}{HTML}{85A658}        % Bright green
\definecolor{pycharm5}{HTML}{4765B0}        % Purple
\definecolor{pycharm6}{HTML}{F2C786}        % Sand yellow

\definecolor{matlab1}{HTML}{F2C786}
\definecolor{matlab2}{HTML}{AA04F9}         % Purple
\definecolor{matlab3}{HTML}{028009}         % Green

\definecolor{arduino1}{HTML}{028009}
\definecolor{arduino2}{HTML}{AA04F9}

\definecolor{definitionbackgroundcolor}{RGB}{250,250,250}
\definecolor{theorembackgroundcolor}{RGB}{250,250,250}

\definecolor{tudelftmain}{HTML}{0076C2}
\definecolor{tudelftaccent}{HTML}{00A6D6}
\definecolor{tudelftlightgray}{HTML}{E7E7E7}
\definecolor{tudelftmediumgray}{HTML}{BABABA}

\pgfplotsset{%
    colormap={inferno}{%
        rgb255=(0, 0, 4)
        rgb255=(12, 8, 38)
        rgb255=(36, 12, 79)
        rgb255=(66, 10, 104)
        rgb255=(93, 18, 110)
        rgb255=(120, 28, 109)
        rgb255=(147, 38, 103)
        rgb255=(174, 48, 92)
        rgb255=(199, 62, 76)
        rgb255=(221, 81, 58)
        rgb255=(237, 105, 37)
        rgb255=(248, 133, 15)
        rgb255=(252, 165, 10)
        rgb255=(250, 198, 45)
        rgb255=(242, 230, 97)
        rgb255=(252, 255, 164)
    }
}

%% file: Styles/environments.tex
\creflabelformat{equation}{(#2#1#3)}
\Crefname{equation}{}{}
\crefname{equation}{}{}

\creflabelformat{section}{\S#2#1#3}
\Crefname{section}{}{}
\crefname{section}{}{}

\creflabelformat{figure}{#2#1#3}
\Crefname{figure}{Fig.}{Fig.}
\crefname{figure}{Fig.}{Fig.}

\creflabelformat{table}{#2#1#3}
\Crefname{table}{Table}{Table}
\crefname{table}{Table}{Table}

\pgfplotsset{%
    every axis plot/.append style={line width=0.4pt}
}

\pgfplotsset{%
    legend style={nodes={scale=0.85, transform shape}},
    legend style={/tikz/every even column/.append style={column sep=1mm}},
    legend style={%
        at={($(0,1) + (1mm,-1mm)$)},
        anchor=north west,
        inner sep=2pt
    }
}

\pgfplotsset{
    legend image code/.code={
        \draw[mark repeat=2,mark phase=2]
        plot coordinates {
        (0cm,0cm)
        (0.15cm,0cm)        %% default is (0.3cm,0cm)
        (0.3cm,0cm)         %% default is (0.6cm,0cm)
        };%
    }
}

\newtheorem{exampleNoBackground}{Example}

\newtheorem{remarkNoBackground}{Remark}

\newenvironment{remark}
    {
    \begin{remarkNoBackground}
        }
        {
    \end{remarkNoBackground}
    }

\creflabelformat{remarkNoBackground}{#2#1#3}
\Crefname{remarkNoBackground}{Rmk.}{Rmk.}
\crefname{remarkNoBackground}{Rmk.}{Rmk.}

\mdfdefinestyle{definition}{
    backgroundcolor = mylightgray,
    leftmargin = 5,
    rightmargin = 5,
    innerleftmargin = 10,
    innerrightmargin = 10,
    innertopmargin = 5,
    innerbottommargin = 5,
    outerlinewidth = 1,
    linecolor = mylightgray}

\newtheorem{definitionNoBackground}{Definition}

\newenvironment{definition}
    {
    \begin{definitionNoBackground}
        }
        {
    \end{definitionNoBackground}
    }

\creflabelformat{definitionNoBackground}{#2#1#3}
\Crefname{definitionNoBackground}{Def.}{Def.}
\crefname{definitionNoBackground}{Def.}{Def.}

\renewenvironment{proof}{{\bfseries \itshape \color{mydarkblue} Proof:}}{}

\AtEndEnvironment{proof}{\null\hfill{\color{mydarkblue}\qedsymbol}\par\medskip}

\mdfdefinestyle{theorem}{
    backgroundcolor = mylightgray,
    leftmargin = 5,
    rightmargin = 5,
    innerleftmargin = 10,
    innerrightmargin = 10,
    innertopmargin = 5,
    innerbottommargin = 5,
    outerlinewidth = 1,
    linecolor = mylightgray}

\newtheorem{theoremNoBackground}{Theorem}

\creflabelformat{theoremNoBackground}{#2#1#3}
\Crefname{theoremNoBackground}{Prop.}{Prop.}
\crefname{theoremNoBackground}{Prop.}{Prop.}

\newtheorem{propositionNoBackground}[theoremNoBackground]{Proposition}

\newenvironment{proposition}
    {
    \begin{propositionNoBackground}
    }
    {
    \end{propositionNoBackground}
    }

\creflabelformat{propositionNoBackground}{#2#1#3}
\Crefname{propositionNoBackground}{Prop.}{Prop.}
\crefname{propositionNoBackground}{Prop.}{Prop.}

\newtheorem{corollaryNoBackground}{Corollary}

\newenvironment{corollary}
    {
    \begin{corollaryNoBackground}
    }
    {
    \end{corollaryNoBackground}
    }

\newtheorem{lemmaNoBackground}[theoremNoBackground]{Lemma}

\creflabelformat{lemmaNoBackground}{#2#1#3}
\Crefname{lemmaNoBackground}{Lem.}{Lem.}
\crefname{lemmaNoBackground}{Lem.}{Lem.}

\newtheorem{assumptionNoBackground}{Assumption}

\newenvironment{assumption}
    {
    \begin{assumptionNoBackground}
        }
        {
    \end{assumptionNoBackground}
    }

\AtEndEnvironment{assumption}{\null\hfill\ensuremath{\lozenge}}

\creflabelformat{assumptionNoBackground}{#2#1#3}
\Crefname{assumptionNoBackground}{Asm.}{Asm.}
\crefname{assumptionNoBackground}{Asm.}{Asm.}
\newtheorem*{problemNoBackground}{Problem statement}

\newenvironment{problem}
    {
    \begin{problemNoBackground}
        }
        {
    \end{problemNoBackground}
    }

\creflabelformat{problemNoBackground}{#2#1#3}
\Crefname{problemNoBackground}{Prb.}{Prb.}
\crefname{problemNoBackground}{Prb.}{Prb.}

%% file: Styles/graphs.tex
\pgfplotsset{
    /pgfplots/layers/Bowpark/.define layer set={
        axis background,axis grid,main,axis ticks,axis lines,axis tick labels,
        axis descriptions,axis foreground
    }{/pgfplots/layers/standard},
}

\pgfplotsset{
    /pgfplots/ylabel style={rotate=-90},
}

\pgfplotsset{
    /pgfplots/grid style={mylightgray},
}

\pgfplotsset{
    every axis plot/.append style={line width=0.4pt}
}

\pgfplotsset{
    legend style={nodes={scale=0.7, transform shape}},
    legend style={/tikz/every even column/.append style={column sep=1mm}}
}

\pgfplotsset{
    every axis/.append style={
        label style={font=\footnotesize},
        tick label style={font=\footnotesize}  
    }
}

%% file: Styles/spacing.tex
\newlength{\beforetabularxspacing}
\newlength{\aftertabularxspacing}
\newcommand{\tikzfont}{\footnotesize}  % For font size in Tikz pictures

\newcommand{\channelabelsize}{\tiny}  % For font size of the channel labels

\newlength{\dynamicswidth}  
\newlength{\dynamicsheigth}
\newlength{\horizontalblocksep}
\newlength{\verticalblocksep}
\newlength{\verticalsependpoints}
\newlength{\headersep}
\newlength{\verticalsep}
\newlength{\networkspacingsep}
\newlength{\networklinewidth}
\newlength{\squareblock}
\newlength{\networkarrowheight}
\newlength{\networkarrowwidth}
\newlength{\summationsize}
\newlength{\switchsize}
\newlength{\attackcirclesize}
\newlength{\horizontalsepinputsmall}  % TTC, STC, etc.
\newlength{\phasespacesize}  
\newlength{\phasespacesideplotwidth}
\newlength{\phasespacesideplotwidthtwo}  % for the paper versions
\newlength{\phasespacesideplotheight}
\newlength{\phasespacesidespacing}
\newlength{\phasespacesidespacingtwo}
\newlength{\zdaplotwidth}
\newlength{\zdaplotheight}
\newlength{\zdaplotsep}
\tikzset{
    pointdown/.style args={#1,#2,#3}{
        code={
            \draw (0,0) -- (1,-1) -- (2,0) -- (2,2) -- (0,2) -- (0,0);
        \node[#3] (#1) at (1,1) {#2};
        }
    }
}

\tikzset{
  mydashdot/.style={dash pattern=on 8pt off 2pt on 1pt off 2pt},
}

\tikzset{
  mynetworkdash/.style={dash pattern=on 1pt off 1pt on 1pt off 1pt},
}

%% file: Sections/ecc_2026.tex
\section{Introduction}\label{sec:introduction}

% Intro ICSs and NCSs
\lettrine{O}{ver} the past decades, the use of communications technology in \acfp{ICS} has seen an exponential increase due to cost and implementation benefits. The resulting \acp{NCS} provide many benefits, but also expose systems to new risks; one of these is their vulnerability to cyberattacks, the number of which targeting critical infrastructure has also risen sharply.  

% Legacy systems and requirements
Traditionally, these~\ac{ICS} architectures, many which were designed decades ago, were constructed without security requirements in mind~\cite{fovinoDesignImplementationSecure2009}. \Acp{ICS} are characterize by stringent requirements on continuous operation and strict real-time constraints, making them difficult to modify. Furthermore, these system use (proprietary) industrial communication protocols that provide no security guarantees.
% such as Modbus-over-TCP (Modbus/TCP)
% , and insecure and outdated industrial communication protocols, are still widely used for their efficiency and ease-of-use

% Types of attacks targetting them
In recent years, due to the aforementioned vulnerabilities, we have seen more attacks targeting~\acp{NCS}, the archetypical example being the \Stuxnet{} worm in 2010~\cite{katulicProtectingModbusTCPBased2023}\FIXME{I removed the citation~\cite{langnerStuxnetDissectingCyberwarfare2011} for space}. In response, this has lead to research from the control community, investigating control-theoretical attacks such as replay attacks~\cite{moPhysicalAuthenticationControl2015} and \acp{ZDA}~\cite{teixeiraSecureControlFramework2015}, leading to the field of secure control. 
% More specifically, we consider \acp{NCS} subject to \ac{MITM} attacks, in particular \ac{FDI} and replay attacks. Other attacks, such as \acp{ZDA} and \ac{DoS} attacks, are out of scope.

% Traditional security and CIA triad
% FIXME: Make this a better bridge.
Conventional~\ac{IT} cybersecurity solutions, such as encryption, focus on the goals of \ac{CIA}, with confidentiality being the primary goal. However, in many~\acp{ICS} with strict real-time constraint, availability takes absolutely priority, with the additional need for authentication~\cite{katulicProtectingModbusTCPBased2023}. Even a intermediate loss of availability, as with encryption and related schemes~\cite{ferrariSwitchingMultiplicativeWatermarking2021}, could lead to problems with synchronization and previously installed monitoring systems~\cite{bernieriTAMBUSNovelAuthentication2020}\FIXME{Removed~\cite{smithCryptographyConceptsEffects2018,katulicProtectingModbusTCPBased2023} for space}. Furthermore, as these~\acp{ICS} consist mostly of legacy systems, controllers and communications protocols have already been designed, allowing only minor modifications with minimal computational overhead. Therefore, data authentication is essential to prevent attacks on~\ac{ICS} communications, but should be developed with the availability requirement in mind~\cite{katulicProtectingModbusTCPBased2023}.
% ~\cite{katulicProtectingModbusTCPBased2023}
% , and should always be preserved
% Without integrity and authenticity, protocols such as Modbus/TCP are vulnerable to \ac{MITM} attacks~\cite{katulicProtectingModbusTCPBased2023}.
% Furthermore, the goal of confidentiality is often associated with encryption~\cite{fovinoDesignImplementationSecure2009}: however, field devices like~\acp{RTU} and~\acp{PLC} do not have the computational capacity to perform significant cryptographic operations on network packets~\cite{hayesSecuringModbusTransactions2013}, and confidentiality controls are therefore often not recommended~\cite{hayesSecuringModbusTransactions2013}.

% Proprosed proir solutions
To address the specific cybersecurity needs of~\acp{ICS}, several solutions have been proposed. \Ac{IT} based literature suggests modifying the communication layer, adding message authentication, or changing header information~\cite{katulicProtectingModbusTCPBased2023,hayesSecuringModbusTransactions2013,fovinoDesignImplementationSecure2009}\FIXME{Removed~\cite{taylorEnhancingIntegrityModbus2017} for space}. From a control-theoretical perspective, previous works have exploited model-based detection techniques, which we can crudely devise into additive watermarking~\cite{moPhysicalAuthenticationControl2015}\FIXME{Removed~\cite{liuOnlineApproachPhysical2020,sanchezDetectionReplayAttacks2019} for space} and multiplicative watermarking~\cite{ferrariSwitchingMultiplicativeWatermarking2021}\FIXME{Removed~\cite{galloDesignMultiplicativeWatermarking2021,fangTwowayCodingControl2019} for space}.
\FIXME{Missing the paper about adding noise and removing noise}
% a predominant communication protocol, or

% Problems with these solutions
The aforementioned solutions do suffer from various drawbacks. The~\ac{IT} solutions all require modifications to the network protocol layer, but this clashes with the requirement of continuous operation, as most~\acp{ICS} cannot be taken offline for maintenance~\cite{hayesSecuringModbusTransactions2013}. Furthermore, several schemes increase the size of each packet~\cite{fovinoDesignImplementationSecure2009,katulicProtectingModbusTCPBased2023}, require that every packet is checked before processing~\cite{katulicProtectingModbusTCPBased2023}, or require additional hardware, introducing additional latency or cost\FIXME{, which might be prohibitive for control loops with small sampling times}. From a control-theoretical perspective, additive watermarking~\cite{moPhysicalAuthenticationControl2015} does not provide authentication guarantees per channel, and detectors require detailed plant knowledge. As for multiplicative watermarking, these schemes rely on a separate channel to perform key synchronization, which often is not available in legacy~\acp{ICS}. Furthermore, if desynchronization ever occurs, the transmitted data becomes incomprehensible\FIXME{Removed footnote above robust controller for space,\footnotemark{}}, which can simply be unacceptable.

To mitigate these issues, we propose a coding scheme that does not suffer from the aforementioned drawbacks.
%We aim to design an extremely lightweight, legacy compatible, authentication scheme prioritizing availability above everything else. The scheme does not require any modification to the communications networks nor the controller. Our solution bridges the gap between control-theoretical methods and protocol layer countermeasures.

\textit{Contributions:} We propose an authentication scheme based on modifying the \acp{LSB} of the measurement signal \FIXME{(i.e., the control and measurement signals)} sent over the communications network.\FIXME{Contrary to other schemes proposed in the literature} The scheme is legacy compatible and does not introduce additional traffic overhead. This means that neither the communications protocol nor the controller needs to be redesigned, and the security is provided as an extra layer on already existing infrastructure. Compared to~\cite{ferrariSwitchingMultiplicativeWatermarking2021,katulicProtectingModbusTCPBased2023,hayesSecuringModbusTransactions2013}, we do not require a separate secure channel for synchronization, but instead propose a rudimentary yet effective look-ahead window resynchronization scheme, capable of handling packet dropouts\FIXME{, the effectiveness of which we demonstrate in simulation}. Most importantly, even if desynchronization occurs, availability is maintained, providing an advantage over other schemes~\cite{ferrariSwitchingMultiplicativeWatermarking2021,fovinoDesignImplementationSecure2009}.
Secondly, whilst~\acp{LSB} coding schemes have been considered before as a use for covert channels~\cite{bernieriTAMBUSNovelAuthentication2020}, to the best of the authors knowledge, we are the first to quantify its effect on control performance\FIXME{, by deriving the appropriate methodology and corresponding metrics}. We do this both for fixed-point and floating-point methodologies, as both implementation need to be considered when dealing with legacy systems. 

\textbf{Notation:} Let $\setn{B}_{N} = \{ \mf{b}_{1} \circ \cdots \circ \mf{b}_{N} \forwhich{} \mf{b}_{i} \in \{\texttt{0}, \texttt{1} \} \}$ (where $\circ$ denotes concatenation) denote the set of bit strings of length $N$, and $\setn{B}_{*}$ the set of bit strings of arbitrary length. For a symmetric matrix $\mtrx{P} = \mtrx{P}^{\T} \in \R^{\dimx \times \dimx}$, let $\mtrx{P} \succ 0$ ($\mtrx{P} \succeq 0$) denote that the matrix $\mtrx{P}$ is~\acf{PD} (\acf{PSD}). Given a vector $\vctr{v} \in \R^{\dimx}$, let $v_{i} \in \R$ denote the $i$-th component of $\vctr{v}$. Let $\vnorm{\vctr{w}(t)}_{\ell_{\infty}} = \sup_{t} \vnorm{\vctr{w}(t)}_{\infty}$ denote the $\ell_{\infty}$ norm of the signal $\vctr{w}(t)$. The $L_{1}$ norm of a system $\mtrx{G}(z)$ is the $\ell_{\infty}$-induced norm given by $\vnorm{\mtrx{G}(z)}_{1} = \sup_{\vnorm{\vctr{w}}_{\infty} \leq 1} \frac{\vnorm{ \mtrx{G}(z) \vctr{w} }_{\infty} }{\vnorm{\vctr{w}}_{\infty} }$.

\section{Problem Formulation}\label{sec:problem_formulation}

Consider the discrete-time~\ac{LTI} system given by:
\begin{subequations}\label{eqn:lti_plant}\begin{align}
    \label{eqn:lti_plant_dynamics}
    \specialnote{\dtdyn{\sysn{P}}:\qquad} \vctr{x}(t+1) &= \mtrx{A} \vctr{x}(t) + \mtrx{B} \vctr{u}(t) + \mtrx{B}_{\tn{w}} \vctr{w}(t) \textcomma{}\\
    \label{eqn:lti_plant_output}
    \vctr{y}(t) &= Q(\vctr{x}(t) ) \textcomma{}
\end{align}\end{subequations}
with physical state $\vctr{x}(t) \in \R^{\dimx}$, measurements $\vctr{y}(t) \in \R^{\dimx}$, and quantizer $Q$, which will be discussed in~\cref{sec:quantization}. The process noise $\vctr{w}(t)$ satisfies:

% \FIXME{Note that we are here assuming the worst-case behavior of the system, unknown disturbances, and no sensor noise (meaning any additional modification we make to the sensor induces a loss of information, and extra error) }

\begin{assumption}[Bounded noise]\label{ass:bounded_disturbance}
    The vector $\vctr{w}(t)$ is bounded in magnitude, meaning $\vnorm{\vctr{w}(t)}_{\ell_{\infty}} \leq \ubar{w}$.
\end{assumption}

% an \ac{RTU} gathers telemetry data from the physical process (e.g., pumps, temperature sensors, pressure sensors, and valves)~\cite{hayesSecuringModbusTransactions2013}

The sensors gather telemetry data from the physical process, and the measurement $\vctr{y}(t)$ is sent over a communications channel to the controller, forming a~\ac{NCS}. For brevity, we consider full-state feedback, although extensions to the more general output-feedback case are possible. The controller $\sysn{C}$ is given by:
\begin{equation}\label{eqn:controller}
    \sysn{C}: \qquad\qquad \vctr{u}(t) = -\mtrx{K} \recieved{\vctr{y}}(t) \textcomma{} \quad\qquad\qquad\strut
\end{equation}
where $\mtrx{K} \in \R^{\dimu \times \dimx}$ is a static gain, $\vctr{u}(t) \in \R^{\dimu}$ is the control signal, and $\recieved{\vctr{y}}(t)$ is the received measurement.

% \FIXME{We consider the objective of regulation, i.e., getting the state $\vctr{x}(t)$ close to the origin, and that of disturbance rejection, i.e., mitigating the impact $\vctr{w}(t)$ has on the deviation of the state from the origin. However, note that it is imperative that we assume the controller has already been designed with these specifications in mind.}

Between the sensors and the controller, an adversary $\sysn{A}$ is present, which can modify the transmitted measurements. Following the taxonomy of~\cite{sandbergSecureNetworkedControl2022}, the received measurement $\recieved{\vctr{y}}(t)$ is given by:
\begin{equation}\label{eqn:transmitted_attacked}
    \sysn{A}: \qquad\qquad \recieved{\vctr{y}}(t) = \vctr{y}(t) + \vctr{a}(t) \textcomma{} \qquad\qquad\strut
\end{equation}
where $\vctr{a}(t)$ is an adversarial signal chosen by the adversary. We model the concept of safety as a convex set $\setn{X}_{\tn{safe}} \subset \R^{\dimx}$. Under nominal conditions, the closed-loop control system has been designed such that $\vctr{x}(t) \in \setn{X}_{\tn{safe}}$ holds for all $t$. This leads to the following attack objective:

%\FIXME{Do we want the assumption that the state is in the safe set under nominal performance? We can make this assumption I think?}

\begin{definition}[Successful attack]
    A~\emph{successful attack} of length $\attacklength$ is defined as $\vctr{x}(\attacklength) \notin \setn{X}_{\tn{safe}}$\FIXME{Do we want a safe set? Could be sufficiently large instead? Safe-set is maybe somewhat bothersome...} (disruptive) and $g(t) = 0$ for $t_{\tn{a}} \leq t < \attacklength$ (stealthy), where $t_{\tn{a}}$ denotes the beginning of the attack, and $g(t)$ is a detection signal (see~{\textup{\cref{sec:lsb_coding}}}).\FIXME{Is this look-ahead annoying?}
\end{definition}

\subsection{Quantization}\label{sec:quantization}

%  on the sensors to be transmitted over the network, and for it to be used by the controller

As the measurement $\vctr{y}(t)$ is a digital signal, the physical state $\vctr{x}(t)$ needs to be quantized. The quantizer $Q : \R \to \setn{F} \subset \R$ is given by:
\begin{equation}\label{eqn:sensor_as_quantizer}
    \sysn{S}: \qquad y(t) = Q(x(t)) = \argmin_{y \in \setn{F} \subset \R} \vert{} x(t) - y \vert{} \textcomma{}
\end{equation}
corresponding to quantization with round-off~\cite{franklinDigitalControlDynamic2002}\FIXME{Do we want to explain this is a footnote, with the infinite base-2 representation?}. Whenever $\vctr{x}(t)$ is a vector, we imply that $Q(\vctr{x}(t))$ is computed element-wise. Here, $\setn{F}$ is a finite, symmetric set of quantization levels, meaning $y \in \setn{F}$ implies $ -y \in \setn{F}$.
% $\mathrm{card}(\setn{F}) < \infty$ and $-\setn{F} = \setn{F}$
% , meaning there are a finite number of quantization levels, and that $y \in \setn{F} \implies -y \in \setn{F}$.\FIXME{Can be shorter maybe}
%  (rather then truncation)

% \begin{remark}\label{rmk:qunatization_and_controller_model_precision_assumptions}
%     In this work, for simplicity, we only consider a~\ac{S2C} channel, as controllers are often connected to plants through wired networks\FIXME{Adress comment of Riccardo about wired.}. On the other hand, the measurement signal are collected by sensors, which may be connected through wireless networks~\cite{ohnoMinMaxDesignFeedback2017}. Thus, here we focus on the quantization error of the measurement signal, assuming that there is  no quantization error at the controller. However, the latter, as well as error induced to fixed/floating-point arithmetic, can be included in the analysis as well, see e.g.~\cite{vanwingerdenInfluenceFiniteWord1984}.\FIXME{Also note: roundoff errors are roughly proportional to the amplitude of the represented quantity.~\cite{widrowQuantizationNoiseRoundoff2008}}
% \end{remark}

% \begin{remark}\label{rmk:qunatization_and_controller_model_precision_assumptions}
%     For brevity, we only consider a~\ac{S2C} channel, as the sensors are often not collocated with the controller~\cite{ohnoMinMaxDesignFeedback2017}, but the framework can readily be expanded to deal with input quantization as well as error introduced due to fixed/floating-point arithmetic (see, .e.g.,~\cite{vanwingerdenInfluenceFiniteWord1984}).
% \end{remark}
\FIXME{For space, I've removed this remark for now....}

The quantization set $\setn{F}$ depends on an implementation with either a fixed-point or floating-point number format. Whilst floating-point is nowadays the \emph{de facto} standard in modern computing~\cite{widrowQuantizationNoiseRoundoff2008}, fixed-point arithmetic is still used in embedded systems\FIXME{Removed~\cite{ohnoMinMaxDesignFeedback2017} for space}, and certain edge devises only support fixed-point arithmetic~\cite{nikolicPerformanceAnalysisTwo2025}. Traditionally,~\ac{A/D} converters are based on uniform quantization\FIXME{, i.e., fixed-point~\cite{kontroFloatingpointArithmeticSignal1992}}, and fixed-point representation is therefore typical in real-time control system\FIXME{said computer before}~\cite{franklinDigitalControlDynamic2002}.
% Similar to~\cite{millerQuantizerEffectsSteadystate1989}, here we will consider both.

% and computation and systems simulation are almost always done in floating-point~\cite{widrowQuantizationNoiseRoundoff2008}
% (such as sensor logic)
% , especially in legacy systems, due to its advantages\FIXME{Chips can be smaller and take up less space. They also use less power. They also require less memory (so also smaller size), and they are faster in their computations. Fewer transistors, so they are also cheaper.}, and the are still used for control signals~\cite{widrowQuantizationNoiseRoundoff2008}, as to why it is of interest to us

%This means that, for fixed-point, if we display $(x)_{2}$ (which might mean we need an infinite number of non-zero bits in the mantissa), that is the $\nbitsd + 1$-the bit is $\mf{1}$, then the $\nbitsd$-th bit is rounded to $\mf{1}$ as well.
% For floating point, the procedure is the same, by first selecting the `correct' exponent. Then we have $\setn{F}_{\tn{FL}}(\nbitse, \nbitsm) \subset \R$ for floating-point and $\setn{F}_{\tn{FX}}$ for fixed-point.

As the quantization levels $\setn{F}$ are symmetric around zero, a single sign bit $\mf{s} \in \setn{B}_{1}$ is used to store $\mathrm{sign}(x) = (-1)^{\mf{s}}$. With some abuse of notation, we use $\nbitse$ to denote the number of bits in the integer or exponent part for fixed-point and floating-point, respectively, and $\nbitsm$ the number of bits in the fractional part or mantissa, respectively. Letting $C(n) = 2^{n} - 1$ denote the number of combinations possible with $n$ bits, the set $\setn{F}_{\tn{FX}}$ for fixed-point implementation is given by
%
% As such, the format can be specified by $(\tn{type},1,\nbitse, \nbitsm)$, where the first denotes the sign bit\FIXME{I think, remove this}. Note that we always assume one sign bit.
%
\begin{equation}
    \setn{F}_{\tn{FX}}\! =\! \{\pm(i + d \ccdot 2^{-\nbitsm}) \forwhich{} i \leq C(\nbitse), d \leq C(\nbitsm) \} \textcomma{}
\end{equation}
and for floating-point, this implementation is given by~\cite{nikolicPerformanceAnalysisTwo2025}\FIXME{Footnote about IEEE\footnotemark{}}
\begin{equation}
    \setn{F}_{\tn{FL}}\! =\! \{ \pm (2^{e - b} \ccdot (1 + p \ccdot 2^{-\nbitsm})) \forwhich{} e \leq C(\nbitse), p \leq C(\nbitsm) \} \textcomma{}
\end{equation}
where $\bias = C(\nbitse) - 1$ is called the \emph{bias}. The quantization error is defined as 
\begin{equation}
    q(x(t)) = x(t) - Q(x(t)) \textperiod{}
\end{equation}
Let $\smallesstepsize \in \setn{F}$ denote the smallest step size, which is given by
\begin{equation}
    \delta_{\tn{FX}} = 2^{-\nbitsm}\textcomma{} \qquad \delta_{\tn{FL}} = 2^{-\bias + 1} \ccdot 2^{-\nbitsm} \textperiod{}
\end{equation}

For a floating-point number format, define $\Delta = 2^{\nbitsm} \ccdot \smallesstepsize_{\tn{FL}}$~\cite{widrowQuantizationNoiseRoundoff2008} as the spacing of the cycles (see~\cref{fig:graph_mapping_quantization}). Note that $\nbitse$ and $\nbitsm$ fulfill different roles in the aforementioned number formats. Particularly, $\nbitse$ determines the largest numbers presentable (dynamic range), whilst $\nbitsm$ determines the resolution of the numbers that can be represented.

% \footnotetext{The IEEE number format defines subnormal, or denormalized numbers whenever $e = 0$, providing gradual underflow to zero, which we ignore in this work.}

% \FIXME{Do I even want to include this?}For floating-point numbers, several commonly used standards exist, such as the IEEE 754-2019 standard implementing a single and double precision format, defined by $(\tn{FL},1,8,23)$ and $(\tn{FL},1,11,52)$. For fixed point, no widely-used standards are available.

\begin{figure}
    \centering%
    \input{Graphs/graph_mapping_quantization}
    \caption{Mapping for fixed-point and floating-point}\label{fig:graph_mapping_quantization}
\end{figure}
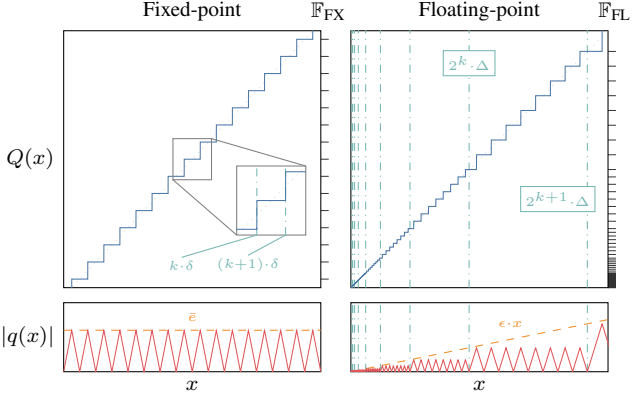

\begin{assumption}[No overflow]\label{ass:no_overflow}
    The number of exponent/integer bits $\nbitse$ is sufficiently large, such that \FIXME{under nominal operation }no overflow occurs. Specifically, for all $\vctr{x}(t) \in \setn{X}_{\tn{safe}}$, $\vnorm{\vctr{x}(t)}_{\infty} \leq 2^{\nbitse} - 2^{-\nbitsm} \approx 2^{\nbitse}$ for fixed-point, and $\vnorm{\vctr{x}(t)}_{\infty} \leq 2^{\bias + 1} \ccdot (2 - 2^{-\nbitsm} ) \approx 2^{\bias + 2}$ for floating point.
\end{assumption}

\Cref{ass:no_overflow} motivates to focus our analysis solely on $\nbitsm$. As $y(t) \in \setn{F}$, it can be represented exactly with $\nbitstotal = 1 + \nbitse + \nbitsm$ bits. We will denote the corresponding bit string of $y(t)$ as $\mf{y}(t) \in \setn{B}_{\nbitstotal}$. In a big-endian system, the rightmost bits of $\mf{y}(t)$ are the~\acp{LSB}. Considering that the adversary $\sysn{A}$ modifies the digital signal in transit (by changing the bits in the representation),\FIXME{ from~\cref{eqn:transmitted_attacked} is must hold that} $\vctr{a}(t)$ is restricted such that $\recieved{\vctr{y}}(t) \in \setn{F}^{\dimy}$. 
% The resulting overall architecture can be seen in~\cref{fig:diagram_ncs_diagram_quantization.tex}.

% (see~\cref{fig:diagram_float_32})

% \begin{figure}[htb]
%     \centering%
%     \input{Diagrams/diagram_float_32}
%     \caption{$y(t) = 13.5390625$ represented in a floating-point format.\FIXME{These number are not true}}\label{fig:diagram_float_32}
% \end{figure}

% \begin{figure}[htb]
%     \centering%
%     \input{Diagrams/diagram_ncs_diagram_quantization.tex}
%     \caption{Considered architecture}\label{fig:diagram_ncs_diagram_quantization.tex}
% \end{figure}

\section{\texorpdfstring{\acsp{LSB} Authentication Scheme}{LSBs Authentication Scheme}}\label{sec:lsb_coding}

To detect attacks on the control system, we propose using the \acp{LSB} of the measurement $\mf{y}(t)$ after $y(t)$ has been quantized. In this section, we consider scalar values, as vector-valued inputs are quantized element-wise\FIXME{(and thus the framework is straightforwardly extended)}. To provide both authentication and integrity, we utilize a~\ac{HMAC}~\cite{hayesSecuringModbusTransactions2013}\FIXME{Removed~\cite{yingCovertChannelBasedTransmitter2022} for space}:\FIXME{Removed~\cite{menezesHandbookAppliedCryptography2018} for space}
% Below, we provide a definition tailored to this work (more detail can be found in~\cite{menezesHandbookAppliedCryptography2018}).

% , we gloss over plenty technical details, for which we refer the reader to[\S{}B.2]\cite{mraihiHOTPHMACBasedOneTime2005}\cite{krawczykCryptographicExtractionKey2010}

% whilst , we use a lightweight (cryptographic) hashing function $H$, specifically an~\ac{HMAC}.\FIXME{For brevity, we gloss over plenty technical details, for which we refer the reader to[\S{}B.2]\cite{mraihiHOTPHMACBasedOneTime2005}\cite{krawczykCryptographicExtractionKey2010}}\FIXME{Note that the definition below are not the defining characteristics of a an HMAC. Instead, they are the properties which are important for our anaylsis.}

% We give a brief, non-technical definition of an~\ac{HMAC} (for further details the reader is referred to~\cite{bellareKeyingHashFunctions1996}, \cite[\S9.5]{menezesHandbookAppliedCryptography2018}).

\begin{definition}[\Ac{HMAC}]\label{def:hmac}
    An~\ac{HMAC} $H : \setn{B}_{*} \times \setn{B}_{K} \to \setn{B}_{D}$ is a deterministic one-way function, mapping a~\emph{query} $(\mf{m},\mf{k})$ to a fixed-length \emph{digest} $\mf{h} = H(\mf{m},\mf{k})$.
    \begin{enumerate}[i)]
        \item{}\emph{Uniformity\FIXME{Removed~\cite[\S9.7.1]{menezesHandbookAppliedCryptography2018} for space}:} The probability of a particular digest $h \in \setn{B}_{D}$ for a given key $\mf{k}$ and a random fixed-length message $\mf{m} \in \setn{B}_{\ell}$ is close to $2^{-D}$.
        \item{}\emph{Key non-recovery\FIXME{Removed~\cite[\S9.2.2]{menezesHandbookAppliedCryptography2018} for space}:} Given the digest $\mf{h} = H(\mf{m}, \mf{k})$ and message $\mf{m}$, it is computationally infeasible to reconstruct the key $\mf{k}$.
        \item{}\emph{Avalanche effect:} Changing a single bit in either $\mf{m}$ or $\mf{k}$ results in about half the bits being changed in the resulting digest $\mf{h}'$ compared to the original digest $\mf{h}$.
    \end{enumerate}
\end{definition}

At the sensors $\sysn{S}$, a digest of the current measurement $\mf{y}_{i}(t)$ is computed using an incremental key $\mf{k}_{\tn{s}, i}(t)$, resulting in:
%
% The authentication scheme we propose is as follows. After the sensor have quantized the state $x(t)$ using either a fixed-point of floating-point implementation, resulting in a number $\mf{y}(t)$, we take the first $1 + \nbitse + \nbitsm - L$ of $\mf{y}(t)$, resulting in $\mf{z}(t)$. Then, we compute
% %
% \begin{equation}
%     \mf{c}(t) = H(\mf{z}(t), \mf{key}_{\tn{s}}(t) ) \textcomma{} 
% \end{equation}
% %
% where $\mf{key}_{\tn{s}}(t)$ is an incremental key which is shared between the detector $\sysn{D}$ and sensors $\breve{\sysn{S}}$. We replace the last $L \leq \nbitsm$ bits of $\mf{y}(t)$ with the first $L$ digits of $\mf{h}_{i}(t)$, resulting in $\breve{\mf{y}}(t)$, which is then sent over the network (see~\cref{fig:diagram_networked_control_with_watermarking}).\FIXME{Note that the update of the \ac{PLC} can be done online, without interrupting operations.} The resulting sensor $\breve{\sysn{S}}$ is given by:
%
\begin{subequations}\label{eqn:modified_sensors}\begin{align}
    \mf{y}_{i}(t) &= Q(x_{i}(t))\FIXME{Notation base two is not totally correct} \textcomma{} \\
    \label{eqn:modified_sensors_digest}
    \breve{\sysn{S}}: \qquad\qquad\mf{h}_{i}(t) &= H([\![\mf{y}_{i}(t)]\!]_{1:(N-L)}, \mf{k}_{\tn{s},i}(t) ) \textcomma{} \quad\strut \\
    \label{eqn:modified_sensors_last_conc_step}
    {\breve{\mf{y}\hphantom{.}}\!}_{\!i}(t) &= [\![\mf{y}_{i}(t)]\!]_{1:(N - L)} \circ [\![\mf{h}_{i}(t)]\!]_{1:L} \textperiod{}  
\end{align}\end{subequations}
Here, ${[\![ \mf{x} ]\!]}_{a:b}$ denotes the bit string between (and including) $a$ and $b$. The resulting modified measurement $\breve{\vctr{y}}(t)$ in~\cref{eqn:modified_sensors_last_conc_step} is then sent over the network.

\begin{remark}
    Note that time-varying keys and a different key per component $i$ is necessary to adequately defend against replay attacks\FIXME{Removed for space~\cite{moPhysicalAuthenticationControl2015}} and routing attacks\FIXME{Removed for space~\cite[\S{}V.A]{milosevicEstimatingImpactCyberAttack2020}}, respectively.
%     %
%     \begin{itemize}
%         \item{}\textbf{\Acf{FDI} attack~\cite{teixeiraSecureControlFramework2015}:} If a regular keyless digest $H(\mf{y}_{i}(t))$ is used, as by~\cref{ass:key_secrecy} $\breve{\sysn{S}} \in \setn{I}_\tn{a}(t) \implies H \in \setn{I}_{\tn{a}}(t)$, the adversary $\sysn{A}$ can simply compute $H(\mf{y}_{i}(t) + \mf{a}_{i}(t))$ and replace the last $L$ bits of the message with the digest, which will not raise an alarm at $\sysn{D}$. 
%         \item{}\textbf{Replay attack:} Considering that an key-based~\ac{HMAC} function is used, but with a static key, as according to~\cref{ass:key_secrecy}, we have $\vctr{y}(t') \in \setn{I}_{\tn{a}}(t)$ for all $t' < t$. As such, replacing $\vctr{y}(t)$ with $\vctr{y}(t')$ will not raise an alarm at~$\sysn{D}$.
%         \item{}\textbf{Routing attack:} A routing attack exchanges component $i$ and $j$ in a vector-valued signals, such that $\mf{y}_{i}(t) = \mf{y}_{j}(t)$ and vise versa~\cite[\S{}V.A]{milosevicEstimatingImpactCyberAttack2020}. If the keys are time-varying but shared for all components $i$, then due to the birthday paradox, after $t \approx \frac{2^{L +1}}{\dimx \ccdot (\dimx + 1)}$\FIXME{This is from Gemini, not sure if its correct} time steps on average, the vector $\vctr{y}(t)$ will have two entries with an identical hash, which can be swapped without $\sysn{D}$ raising an alarm.
%     \end{itemize}
% \end{remark}
\end{remark}

As the fixed-length measurement $\mf{y}_{i}(t) \in \setn{B}_{N-L}$ is fed into the~\ac{HMAC},~\cref{def:hmac}.i implies the output of the~\ac{HMAC} closely resembles a uniform distribution. For analysis purposes, it is therefore customary\FIXME{Do we need a source here?} to make the following assumption:

% \FIXME{REALLY IMPORTANT: A perfect hash gives a uniform output over a static, fixed set of inputs: this set of inputs is the $8$ or $16$ or $32$ bit fixed/floating-point number}

\begin{assumption}[Random oracle~\protect{\cite{mraihiHOTPHMACBasedOneTime2005}}]\label{ass:random_oracle}
    We model $H$ in~\textup{\cref{eqn:modified_sensors_digest}} as a \emph{random oracle}, a deterministic function mapping a query $(\mf{m}, \mf{k})$ to a digest $\mf{h}$ chosen uniformly from its output domain $\setn{B}_{D}$. Repeated queries are mapped to the same digest.
\end{assumption}

Note that~\cref{ass:random_oracle} also justifies the truncation of the digest $\mf{h}_{i}(t)$ to the first $L$ bits as in~\cref{eqn:modified_sensors_last_conc_step}. Evidently, the information in the last $L$~\acp{LSB} is lost, and replaced by random noise, as far as the control system is concerned. At the controller $\sysn{C}$, the received signal $\recieved{\breve{\vctr{y}}}(t)$ is processed as normal. The received measurement is also fed to a detector $\sysn{D}$, defined as follows:
%
% \FIXME{Note that this also means every bit in the output of $H$ has a equal chance of being a $\texttt{0}$ or a $\texttt{1}$. This motivates our assumption that we can truncate the digest.}

% \begin{figure}[htb]
%     \centering%
%     \input{Diagrams/diagram_number_format_with_watermark}
%     \caption{Representation of $\breve{\mf{y}\hphantom{.}}\!(t)$}\label{fig:diagram_number_format_with_watermark}
% \end{figure}
%
\begin{equation}
    \sysn{D}: \quad g(t) = \begin{cases}
        0 \textcomma{} & H(\recieved{\mf{m}}_{i}(t), \mf{k}_{\text{d},i}(t)) = \recieved{\mf{d}}_{i}(t) \textcomma{} \\
        1 \textcomma{} & \text{otherwise} \textcomma{}
    \end{cases}
\end{equation}
where $\recieved{\mf{m}}_{i}(t) = {[\![\recieved{{\breve{\mf{y}\hphantom{.}}\!}}_{\!i}(t)]\!]}_{1:(N-L)}$ denotes the received message, $\recieved{\mf{d}}_{i}(t) = {[\![\recieved{{\breve{\mf{y}\hphantom{.}}\!}}_{\!i}(t)]\!]}_{(N-L+1):N}$ denotes the received digest, and $g(t) = 1$ raises an alarm. Similar to~\cite{bernieriTAMBUSNovelAuthentication2020,katulicProtectingModbusTCPBased2023}\FIXME{Removed~\cite{yingTACANTransmitterAuthentication2019} for space}, we suppose that $\mf{k}_{\tn{s},i}(0) = \mf{k}_{\tn{d},i}(0)$ are pre-shared for all $i$, and that the keys are updated after every message sent. This can be achieved by a~\ac{KDF} $K$, where\FIXME{Made this shorter~\cite[Definition 5]{krawczykCryptographicExtractionKey2010}}\FIXME{Maybe remove this for space?} $\mf{k}(t+1) = K(\mf{k}(0) , \ell) $~\cite{krawczykCryptographicExtractionKey2010}. Here, $\ell$ is an iteration counter, where ideally $\ell = t$ when no packet dropouts have occurred. The full scheme, with modifications highlighted in blue, can be seen in~\cref{fig:diagram_networked_control_with_watermarking}.

\begin{figure}[htb]
    \centering%
    \input{Diagrams/diagram_networked_control}
    \caption{Overview of the~\ac{NCS} with~\acp{LSB} authentication scheme\FIXME{Replace this with $\sysn{S}'$ instead?}}\label{fig:diagram_networked_control_with_watermarking}
\end{figure}

The $L$-bit coding scheme can be implemented on existing hardware, as \ac{HMAC} compatible hashing schemes for\FIXME{removed this'~\ac{FPGA}~\cite{deepakumaraFPGAImplementationMD52001} and microcontrollers' for space} embedded system are available~\cite{mouhaChaskeyEfficientMAC2014}. Furthermore, the proposed scheme does not require a controller redesign.
% , and only slight modifications to sensors, the software of which can be updated without interrupting continuous operation. 
% The proposed transition from $\sysn{S}$ to $\breve{\sysn{S}}$ can be done without interrupting continuous operation, as it constitutes a change in the programmable logic.
% The former is the case as there is no modification to the communications protocol (rather, its only the data which is directly modified), meaning its also protocol agnostic, which is good as these are often proprietary~\FIXME{SOURCE}.\FIXME{Also way too verbose according to Riccardo, we may gloss over these details}
% \FIXME{Also check out the 32 bit MAA as mentioned in~\cite{menezesHandbookAppliedCryptography2018}}

The proposed~\ac{HMAC} scheme can be used to detect~\ac{MITM} attacks even if the adversary is aware of the $L$-bit coding scheme. As such, the protocol is \emph{secure-by-design}, rather than providing security through obscurity by means of a covert channel. 
% approaches do (which can be easily bypassed~\cite{bernieriTAMBUSNovelAuthentication2020}).

%\FIXME{Here we also need to talk about the quantifiable relationship with the number of bits $L$ and how easy it is to forge attacks}\FIXME{Also mention: Kerckhoffs's principle, also metioned in~\cite{sandbergSecureNetworkedControl2022}} This leads to the following assumption:

\begin{assumption}[Kerckhoffs's principle~\cite{sandbergSecureNetworkedControl2022}]\label{ass:key_secrecy}
    The information $\setn{I}_{\tn{a}}(t)$ known to the adversary $\sysn{A}$ satisfies $\{\vctr{y}(0),\ldots, \vctr{y}(t) \land \breve{\sysn{S}}, \sysn{D} \} \subseteq \setn{I}_{\tn{a}}(t)$, $\mf{k}_{\tn{s},i}(t) \notin \setn{I}_{\tn{a}}(t)$, for all $i$, meaning the adversary $\sysn{A}$ can eavesdrop on the measurements $\vctr{y}(t)$ and is aware of the $L$-bit coding scheme, but the keys $\mf{k}_{\tn{s},i}(t)$ are assumed secret.
\end{assumption}

\subsection{Synchronization}

One challenge is ensuring synchronization between the keys $\mf{k}_{\tn{s},i}(t)$ and $\mf{k}_{\tn{d},i}(t)$. Whilst ideally both would update simultaneously, due to real-life network induced phenomena such as packet dropouts, desynchronization might occur.\FIXME{As such, the key at the sensors might update, whilst the key at the detector lags behind.}\FIXME{I removed all of this... 'This problem is not addressed in similar works~\cite{bernieriTAMBUSNovelAuthentication2020}, where synchronization is simply assumed, and~\cite{ferrariSwitchingMultiplicativeWatermarking2021}, where packet dropout is not considered. This provides another motivation why we implement our scheme at the data level, rather than the protocol level.'}
% (even though we modify data, we ensure availability).
% \FIXME{This is a bit weird, because according to~\cite{katulicProtectingModbusTCPBased2023}, most use ModBus TCP, and according to Gemini, these have a three-way handshake.}

One simple yet effective solution is to implement a look-ahead window of size $\windowsize$~\cite{mraihiHOTPHMACBasedOneTime2005}\FIXME{Removed chapter~\cite[\S7.4]{mraihiHOTPHMACBasedOneTime2005}}\FIXME{as also mentioned in~\cite{katulicProtectingModbusTCPBased2023}}. Whenever $H(\recieved{\mf{m}}_{i}(t), \mf{k}_{\text{d},i}(t)) \neq \recieved{\mf{d}}_{i}(t)$, instead of immediately raising an alarm, the values $H(\recieved{\mf{m}}_{i}(t), \mf{k}_{\text{d},i}(t + \tau)) = \recieved{\mf{d}}_{i}(t)$, with $\tau \leq \windowsize$, are also checked, considering a packet dropout might have occurred. Most importantly, even if $\mf{k}_{\text{d},i}(t) \neq \mf{k}_{\text{s},i}(t)$ due to desynchronization, availability is not lost and the control system will continue to operate nominally. The detector $\sysn{D}$ will raise a false alarm, which is undesirable, but it not detrimental for a control system with continuous operation requirements.
% The detector $\sysn{D}$ will raise a false alarm, which need to be inspected, possibly by human staff, but most importantly, it is not detrimental for the control system with continuous operation requirements.

Evidently, the former look-ahead scheme increases the likelihood of a successful attack, whilst a larger $L$ is expected to decrease its likelihood.

% Specifically, under~\cref{ass:random_oracle}, the chance of a successful attack of length $k$ is given by $1 - (1 - 2^{-L})^{\windowsize}$.

% Due to the key-based design of an~\ac{HMAC} and~\cref{ass:key_secrecy}, the scheme is practically resistant against brute force attacks. However, the design still allows for attack with success rates that are probabilistic in nature.

%\FIXME{This kind of proof is related to~\cite[Definition 14]{krawczykCryptographicExtractionKey2010}}

\begin{proposition}\label{thm:security_hmac}
    Suppose~\textup{\cref{ass:random_oracle}} and~\textup{\cref{ass:key_secrecy}} hold. Then, the best strategy of the adversary $\sysn{A}$ guaranties a probability of a successful attack of length $T$ of at most $(1 - (1 - 2^{-\nbitswm })^{\windowsize})^{\attacklength}$.
\end{proposition}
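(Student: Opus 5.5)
The plan is to reduce a successful attack to $\attacklength$ per-step forgery events, bound each step's forgery probability under the random oracle model, and then multiply across steps.

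\textbf{Step 1: what a successful attack requires.} A successful attack of length $\attacklength$ needs stealthiness, i.e. $g(t)=0$ at every attacked time step. Disruption also requires the adversary to actually change the transmitted measurement; if $\vctr{a}(t)=0$ throughout, nominal design keeps $\vctr{x}(t)\in\setn{X}_{\tn{safe}}$. So I would argue, or simply adopt, that at each of the $\attacklength$ attacked steps the adversary sends at least one component whose message part $\recieved{\mf{m}}_{i}(t)$ differs from the genuine $[\![\mf{y}_{i}(t)]\!]_{1:(N-L)}$.

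An attack that alters only the digest bits leaves the received measurement almost unchanged. Such an attack only hurts the adversary, since any mismatch raises an alarm at that step.

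\textbf{Step 2: per-step forgery bound.} Fix one attacked step and a modified message $\recieved{\mf{m}}_{i}(t)$. The detector accepts iff the $L$ received digest bits equal the first $L$ bits of $H(\recieved{\mf{m}}_{i}(t),\mf{k}_{\tn{d},i}(t+\tau))$ for some $\tau\in\{0,\dots,\windowsize-1\}$, that is, for $\windowsize$ candidate keys.

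By \cref{ass:key_secrecy} the keys are unknown to $\sysn{A}$. By \cref{ass:random_oracle} the digest of a fresh query is uniform on $\setn{B}_{D}$. The query $(\recieved{\mf{m}}_{i}(t),\mf{k})$ is fresh: the adversary has only observed digests of genuine messages, and for the current key those messages differ from $\recieved{\mf{m}}_{i}(t)$. Hence the truncated $L$-bit digest is uniform and independent of everything $\sysn{A}$ knows, so each candidate key matches with probability $2^{-\nbitswm}$.

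Treating the $\windowsize$ window keys as giving independent uniform digests (distinct keys yield distinct random-oracle queries), the probability that none matches is $(1-2^{-\nbitswm})^{\windowsize}$. The acceptance probability is therefore at most $1-(1-2^{-\nbitswm})^{\windowsize}$, whatever digest string the adversary chooses. If more than one component is modified, the probability only decreases, since all modified components must pass.

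\textbf{Step 3: combine over time.} The keys at different time steps, generated by the \ac{KDF}, give distinct and hence independent oracle queries. Conditioned on the adversary's information after stealthy passes at steps $t_{\tn{a}},\dots,t-1$, the step-$t$ acceptance probability is still bounded as in Step 2. Knowing past accepted digests reveals nothing about fresh queries, because repeated queries are the only correlation the random oracle allows. Chaining these conditional probabilities gives the bound $(1-(1-2^{-\nbitswm})^{\windowsize})^{\attacklength}$ for any strategy, including adaptive ones.

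\textbf{Main obstacle.} The delicate point is Step 3's independence together with the window overlap. Consecutive time steps share $\windowsize-1$ window keys, so a digest accepted under key $\mf{k}_{\tn{d},i}(t+\tau)$ might seem to inform later steps. My plan is to argue that every query at time $t$ involves the new message $\recieved{\mf{m}}_{i}(t)$, which generically differs from earlier queried messages. The query pairs therefore remain distinct and the random oracle outputs stay independent.

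If the adversary reuses an identical forged message across steps, I would handle that case separately. Either it fails the per-step freshness argument only when the measurement is repeated, or it yields correlated but not more favorable probabilities. One would make the bound rigorous by conditioning on the sequence of distinct queries.
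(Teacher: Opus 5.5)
Your skeleton is the paper's argument. At each attacked step the forged digest faces $r$ independent trials, each succeeding with probability $2^{-L}$, so it is accepted with probability $1-(1-2^{-L})^{r}$. A successful attack needs $T$ consecutive acceptances, and the paper simply asserts the resulting product. The window overlap you single out in Step~3 is a real obstacle, but your way of disposing of it fails. In your model, at step $t$ the detector checks $k(t),\dots,k(t+r-1)$, so consecutive windows share $r-1$ keys. You cannot rely on forged messages that ``generically'' differ: the adversary chooses them and may repeat them on purpose, as a freeze or constant-bias attack does. The claim that repetition yields ``correlated but not more favorable probabilities'' is false. Let the adversary send the same pair $(\tilde m,d)$ at every step, and let $X_j$ indicate that the first $L$ bits of $H(\tilde m,k(j))$ equal $d$. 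The $X_j$ are i.i.d.\ Bernoulli$(p)$ with $p=2^{-L}$. For $r=T=2$ the attack succeeds iff $(X_0\lor X_1)\land(X_1\lor X_2)$, which has probability $p+(1-p)p^{2}$, and
\[
p+(1-p)p^{2}-\bigl(1-(1-p)^{2}\bigr)^{2}=p(1-p)^{3}>0 .
\]
So the bound fails for every $L\geq 1$; for $L=4$ the success probability is about $0.066$ against a bound of $0.015$. More generally, the event $X_{r-1}=X_{2r-1}=\dots=X_{\lceil T/r\rceil r-1}=1$ alone gives success probability at least $2^{-L\lceil T/r\rceil}$, whereas the claimed bound is roughly $(r2^{-L})^{T}$. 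The mechanism is exactly the conditioning you hoped to exclude: for $r=2$, given acceptance at one step, the key shared with the next window is a match with conditional probability $1/(2-p)\approx 1/2$.

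The missing ingredient is the resynchronization semantics of the look-ahead window. This is how the HOTP scheme the paper cites works, and it is implied by ``resynchronization is achieved'' in the paper's simulation. If the detector with counter $c_t$ accepts at key index $c_t+\tau^{*}$ (first match), its counter moves to $c_{t+1}=c_t+\tau^{*}+1$. Then, for $s<t$, the decision at step $s$ depends only on oracle values with key index below $c_{s+1}\leq c_t$. Moreover, starting from $c_{t_{\mathrm a}}=t_{\mathrm a}$, you have $c_t\geq t$. So every genuine digest the adversary has observed uses a key index at most $t$, with equality only for the genuine message at time $t$, which differs from the forged one. Hence the $r$ queries examined at step $t$ are fresh however the messages are chosen. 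They are i.i.d.\ uniform conditioned on any history of accepted steps, and your chaining in Step~3 goes through. State this semantics explicitly and base freshness on the strictly increasing key indices, not on distinct messages. Under the sliding-window reading you adopted, the proposition is actually false.
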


% \begin{proof}
%     See proof of~\cite[Proposition 1]{wolleswinkelQuantifyingControlPerformance2026a}.
% \end{proof}

\input{Sections/arXiv/proof_prop_1}

Having quantified security, we turn our attention to the impact on control performance. We define the virtual performance output
\begin{equation}\label{eqn:virtual_performance_output}
    \vctr{z}(t) = \mtrx{Q}^{\frac{1}{2}} \vctr{x}(t) \textcomma{}
\end{equation}
where $\mtrx{Q} \succ 0$ is a matrix indicating relative importance of each state. Due to the $L$-bit coding scheme, the error under the modified sensor~$\breve{\sysn{S}}$ as in~\eqref{eqn:modified_sensors} is given by

% Note that success probability exponentially decreases as a function of the number of bits $L$\FIXME{What about multiple states???}, indicating why more bits are important. Evidently, if $L = 0$, the attack will guarantied to be successful\FIXME{Should we mention there are no other countermeasures in place?}. Using more then $L=1$ bits is motivated in~\cite{bernieriTAMBUSNovelAuthentication2020}, since in several circumstances the sensibility of sensors do not go beyond $10^{-3}$ (meaning there is actually no loss of information). 
%
\begin{equation}\label{eqn:error_two_independent_source}
    \vctr{e}(t) = \vctr{x}(t) - \breve{Q}(\vctr{x}(t)) = \vctr{x}(t) - (Q(\vctr{x}(t)) + \vctr{d}(t)) \textcomma{}
\end{equation}
where $\vctr{d}(t)$ is the additional error introduced by the coding scheme, and the modified quantizer is denoted by $\breve{Q}$. This leads to the following problem statement:

\begin{problem}
    Consider the closed-loop system as in~\textup{\cref{fig:diagram_networked_control_with_watermarking}}, with the original sensors $\sysn{S}$ as in~\textup{\cref{eqn:sensor_as_quantizer}} updated to $\breve{\sysn{S}}$ as in~\eqref{eqn:modified_sensors}. How do we quantify the impact of the $L$-bit coding scheme on the virtual performance output $\vctr{z}(t)$ as in~\textup{\cref{eqn:virtual_performance_output}}?
\end{problem}

% \FIXME{Include: The above allows the system designed to make a trade-off between $L$ and $r$. Specifically, if it is known that the smallest $k$ needed for a successful attack s $\lbar{k}$ (or an estimate/lower bound is known), and the accepted probability of an attack is given a $1 \ccdot 10^{-6}$, we can decide how many $L$ we need. If we also have some estimate on the probability $\mathbb{P}[\sum_{i} \Xi(t) > r]$, data we gather from the network, we can estimate our false alarm probability. If we also have some (monetary) cost on performance loss, and some cost on false alarm detection, we can do that.}

\section{Control Performance}\label{sec:control_performance}

In order to quantify loss in control performance, we first need to identify an appropriate metric $\rho$. A suitable metric will depend on our implementation of either fixed-point or floating-point, due to qualitative differences in closed-loop behavior. Due to space constraints, we omit the effect of packet dropouts from this analysis, which we leave to future work.
%which will turn out to depend whether our application uses a fixed-point of floating-point format. Note that whilst we may be able to ignore the effect of quantization errors with the implementation by floating-point arithmetic with a sufficient number of bits, there still remain requirements on the implementation by fixed-point arithmetic with a small number of bits due to, e.g., the hardware limitations and the implementation and running costs~\cite{ohnoStateSpaceRealizations2019}. Whilst for scientific computing these errors are usually negligible, They could be significant using a computer with 8 bits or less~\cite{franklinDigitalControlDynamic2002}.\FIXME{This is kind of a modified quote... Riccardo: which controller uses 8 bits?}

\FIXME{Maybe here introduce the noise model? Motivation, they do it in~\cite{bessaFormalNonFragileStability2017} as well. Also, we're interested in worst-case, as that's the strongest requirement.}

\subsection{Fixed-point}

When implementing a fixed-point quantizer $Q_{\tn{FX}}$, it is well know that the resulting (nonlinear) closed-loop system will exhibit undesirable limit cycles\FIXME{Removed \FIXME{Removed~\cite{fuSectorBoundApproach2005} for space}, which is undesirable}. As such, a useful indication of control performance is the size of the smallest set that bounds these limit cycles. Therefore, we turn to invariant ellipsoidal sets given by
\begin{equation}
    \setn{X}_{\tn{reach}} = \{ \vctr{z}(t) \in \R^{\dimx} \forwhich{} \vctr{z}(t)^{\T} \mtrx{E} \vctr{z}(t) \leq 1 \}\textcomma{}
\end{equation}
defined by some matrix $\mtrx{E} \succ 0$, where $\vctr{z}(t) \in \setn{X}_{\tn{reach}} $ implies $ \vctr{z}(t + \tau) \in \setn{X}_{\tn{reach}}$ for all $\tau \geq 0$. Our metric will be the size of this invariant ellipsoid, which is given by
\begin{equation}\label{eqn:set_ellipsoid}
    \rho_{\tn{FX}}(\nbitsm, \nbitswm) = \textstyle\frac{V(\dimx)}{\sqrt{\mathrm{det}(\mtrx{E}^{\star})}} \textcomma{} \FIXME{ Should we use \rho(L \forwhich{} \nbitsm ) instead? }
\end{equation}
where $V(\dimx)$ is the volume of the $n$-ball\FIXME{Removed~\cite[\S8.3.1]{boydConvexOptimization2004} for space}, and $\mtrx{E}^{\star} \succ 0$ is the matrix corresponding to the ellipsoid of minimal volume.

\begin{proposition}\label{thm:volume_ellipsoid}
    Consider a fixed-point quantizer $\breve{Q}_{\tn{FX}}$ with $\nbitsm$ decimal bits and the $L$-bit coding scheme. Then,
    \begin{equation}
        \rho_{\tn{FX}}(\nbitsm, \nbitswm) = \textstyle\frac{V(\dimx)}{\sqrt{\mathrm{det}(\mtrx{P}) \ccdot \mathrm{det}{(\mtrx{Q})}}} \textcomma{}
    \end{equation}
    where the minimum volume ellipsoid that contains the limit cycles caused by $\breve{Q}_{\tn{FX}}$ is given by the matrix $\mtrx{P} \succ 0$, which is a solution to the~\ac{BMI}
    \begin{subequations}\label{eqn:bmi_ellipsoid}\begin{align}
        &\qquad\min_{\alpha,\mtrx{P}} -\log \mathrm{det} \mtrx{P} \quad \tn{s.t.} \quad \alpha \in (0,1) \textcomma{} \\
        \label{eqn:bmi_ellipsoid_bmi}
        &\begin{bmatrix}
            \alpha \ccdot \mtrx{P} - \mtrx{A}_{\tn{cl}}^{\T} \mtrx{P} \mtrx{A}_{\tn{cl}} & -\mtrx{A}_{\tn{cl}}^{\T} \mtrx{P} \bar{\mtrx{B}} \\
            -\bar{\mtrx{B}}^{\T} \mtrx{P} \mtrx{A}_{\tn{cl}} & (1 - \alpha) \ccdot \mtrx{R} - \bar{\mtrx{B}}^{\T} \mtrx{P} \bar{\mtrx{B}}
        \end{bmatrix} \succeq 0 \textcomma{}
    \end{align}\end{subequations}
    where $\mtrx{A}_{\tn{cl}} = \mtrx{A} - \mtrx{B} \mtrx{K}$, $\bar{\mtrx{B}} = \begin{bmatrix} -\mtrx{B} \mtrx{K} & \mtrx{B}_{\tn{w}} \end{bmatrix}$, and $\mtrx{R} = \mathrm{diag}(\frac{1}{\ubar{e}^{2}}, \ldots, \frac{1}{\ubar{w}}^{2}, \ldots)$, with error bound
    \begin{equation}\label{eqn:upper_bound_quantization_error}
        \ubar{e} = 2^{-(\nbitsm + 1)} + 2^{-(\nbitsm - L)} - 2^{-\nbitsm} \approx 2^{-(\nbitsm - L)} \textcomma{}
    \end{equation}
    implying $\rho_{\tn{FX}}(\nbitsm, L) \approx \rho_{\tn{FX}}(\nbitsm - L)$.
\end{proposition}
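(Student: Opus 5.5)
We first rewrite the closed loop under the modified sensor $\breve{\sysn{S}}$ as a linear system driven by bounded exogenous signals. Since $\recieved{\vctr{y}}(t) = \breve{Q}_{\tn{FX}}(\vctr{x}(t)) = \vctr{x}(t) - \vctr{e}(t)$ in the absence of attacks, substituting into~\eqref{eqn:controller} and~\eqref{eqn:lti_plant_dynamics} gives
\begin{equation*}
    \vctr{x}(t+1) = \mtrx{A}_{\tn{cl}}\vctr{x}(t) + \bar{\mtrx{B}}\,\vctr{v}(t), \qquad \vctr{v}(t) = \begin{bmatrix} \vctr{e}(t) \\ \vctr{w}(t)\end{bmatrix},
\end{equation*}
with $\bar{\mtrx{B}} = \begin{bmatrix} -\mtrx{B}\mtrx{K} & \mtrx{B}_{\tn{w}}\end{bmatrix}$. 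The sign of the $\mtrx{B}\mtrx{K}$ block does not matter, since only bounds on $\vctr{e}$ enter. The nonlinearity is thus absorbed into the disturbance $\vctr{e}$, and limit cycles are contained in any invariant set of this linear system under all admissible $\vctr{v}$.

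Next we establish the componentwise bound~\eqref{eqn:upper_bound_quantization_error}. Round-off to the grid $2^{-\nbitsm}\mathbb{Z}$ gives $|x_i - Q(x_i)| \le 2^{-(\nbitsm+1)}$, by~\cref{ass:no_overflow} (no saturation). Replacing the last $L$ bits by arbitrary digest bits changes the fractional field by $d_i(t) = 2^{-\nbitsm}(\tilde d - d)$, where $d,\tilde d \in \{0,\ldots,2^L-1\}$ are the original and the new trailing integers. The worst case is $|d_i| \le 2^{-\nbitsm}(2^{L}-1) = 2^{-(\nbitsm-L)} - 2^{-\nbitsm}$. By the triangle inequality in~\eqref{eqn:error_two_independent_source}, $|e_i| \le \ubar{e}$. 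The sign bit is never touched, and $L\le\nbitsm$ keeps the modification within the fractional field, so the bound holds in sign-magnitude form. Together with~\cref{ass:bounded_disturbance}, every component of $\vctr{v}$ obeys $v_j^2 R_{jj} \le 1$, hence $\vctr{v}^{\T}\mtrx{R}\vctr{v} \le \dimx + \dimx_w$. Here one must be careful: the clean S-procedure uses one multiplier per component. The simplest route to the stated single-$\alpha$ form is to use the implied quadratic constraint, possibly with $\mtrx{R}$ rescaled by the number of channels. We would check whether the stated $\mtrx{R}$ is meant with that normalization, or else use the per-channel bound $v_j^2 R_{jj}\le 1$ aggregated with equal weights.

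The invariance argument is the standard S-procedure. Let $V(\vctr{x}) = \vctr{x}^{\T}\mtrx{P}\vctr{x}$. Multiplying~\eqref{eqn:bmi_ellipsoid_bmi} by $[\vctr{x};\vctr{v}]$ on both sides yields
\begin{equation*}
    V(\vctr{x}(t+1)) \le \alpha V(\vctr{x}(t)) + (1-\alpha)\,\vctr{v}^{\T}\mtrx{R}\vctr{v}.
\end{equation*}
Hence if $V(\vctr{x}(t))\le 1$ and $\vctr{v}^{\T}\mtrx{R}\vctr{v}\le 1$, then $V(\vctr{x}(t+1))\le 1$. Moreover, trajectories starting outside the set converge to it geometrically with rate $\alpha$, so the set contains all limit cycles. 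Minimizing $-\log\det\mtrx{P}$ gives the smallest such ellipsoid in $\vctr{x}$-coordinates. In $\vctr{z} = \mtrx{Q}^{1/2}\vctr{x}$ we have $\mtrx{E} = \mtrx{Q}^{-1/2}\mtrx{P}\mtrx{Q}^{-1/2}$. Then $\det \mtrx{E} = \det\mtrx{P}/\det\mtrx{Q}$, and the volume is rescaled by $\sqrt{\det\mtrx{Q}}$. We expect a sign/convention mismatch with the stated product $\det\mtrx{P}\cdot\det\mtrx{Q}$, which we will reconcile by tracking whether $\mtrx{E}$ is defined via $\mtrx{Q}^{\pm 1/2}$.

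Finally, the approximation $\ubar{e}\approx 2^{-(\nbitsm-L)}$ holds because the other terms are dominated by it for $L\ge1$. Since $\ubar{e}$ enters only through $\mtrx{R}$, the BMI and hence $\rho_{\tn{FX}}$ depend on $(\nbitsm,L)$ only through $\ubar{e}$. This gives $\rho_{\tn{FX}}(\nbitsm,L)\approx\rho_{\tn{FX}}(\nbitsm-L)$, which matches the pure round-off bound $2^{-(\nbitsm-L+1)}$ up to a factor of two. The main obstacle is the third step: making the single-multiplier S-procedure rigorous for the componentwise box bounds, and pinning down the determinant bookkeeping for $\vctr{z}$.
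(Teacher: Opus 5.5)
Your route is the paper's: treat $\vctr{v} = [\vctr{e};\vctr{w}]$ as a bounded exogenous input to the closed loop. Bound $|e_{i}| \le \ubar{e}$ by the triangle inequality (round-off $2^{-(\nbitsm+1)}$ plus the worst-case rewrite of the last $\nbitswm$ bits, $2^{-(\nbitsm-\nbitswm)}-2^{-\nbitsm}$). Then get the invariant ellipsoid from the BMI and change coordinates to $\vctr{z}$. Your remark that the sign of the $\mtrx{B}\mtrx{K}$ block is immaterial is also right. The paper settles the invariance step only by citing Kafash et al., Theorem~1, so it never confronts the two issues you flag. Both are genuine defects of the statement as printed, not gaps in your reasoning, so you should resolve them rather than try to match the printed formulas.

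For the multiplier, your worry is justified. The componentwise bounds give only $\vctr{v}^{\T}\mtrx{R}\vctr{v} \le n_{v}$, where $n_{v}$ is the number of scalar components of $\vctr{v}$. The BMI with $(1-\alpha)\mtrx{R}$ therefore certifies invariance of $\{\vctr{x} : \vctr{x}^{\T}\mtrx{P}\vctr{x} \le n_{v}\}$, not of the unit level set. The unnormalized claim is in fact false. Take a scalar state with $\mtrx{A}_{\tn{cl}} = 0$, $\bar{\mtrx{B}} = [\,1\;\;1\,]$ and unit bounds. The BMI reduces to $p \le (1-\alpha)/2$, so it admits ellipsoids $|x| \le 1/\sqrt{p}$ with $p$ arbitrarily close to $1/2$. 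Yet $e(t) = w(t) = 1$ is admissible and gives $x(t+1) = 2$. Commit to replacing $(1-\alpha)\mtrx{R}$ by $((1-\alpha)/n_{v})\,\mtrx{R}$, the equal-weight S-procedure. Equivalently, keep the printed BMI and use $\mtrx{P}/n_{v}$, which inflates $\rho_{\tn{FX}}$ by the constant factor $n_{v}^{\dimx/2}$. Your inequality then reads $V(\vctr{x}(t+1)) \le \alpha V(\vctr{x}(t)) + 1 - \alpha$. This gives $\limsup_{t} V(\vctr{x}(t)) \le 1$ on every trajectory, so every limit cycle lies in the ellipsoid.

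For the determinant, your bookkeeping is the correct one and should not be reconciled away. Since $\vctr{z} = \mtrx{Q}^{1/2}\vctr{x}$, we get $\mtrx{E} = \mtrx{Q}^{-1/2}\mtrx{P}\mtrx{Q}^{-1/2}$, hence $\rho_{\tn{FX}} = V(\dimx)\sqrt{\det\mtrx{Q}}/\sqrt{\det\mtrx{P}}$. The paper's proof multiplies the $\vctr{x}$-volume $V(\dimx)/\sqrt{\det\mtrx{P}}$ by $\det(\mtrx{Q}^{-1/2})$. That is the Jacobian of $\vctr{z}\mapsto\vctr{x}$, i.e.\ the wrong direction, and it is the origin of the printed product $\det\mtrx{P}\ccdot\det\mtrx{Q}$.

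Neither correction affects the conclusion that matters. Both $n_{v}$ and $\det\mtrx{Q}$ are independent of $(\nbitsm,\nbitswm)$, and $\ubar{e}$ still enters only through $\mtrx{R}$. So $\rho_{\tn{FX}}(\nbitsm,\nbitswm)$ depends on $(\nbitsm,\nbitswm)$ only through $\ubar{e} \approx 2^{-(\nbitsm-\nbitswm)}$, with the factor-of-two slack relative to pure round-off that you already noted.
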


% \begin{proof}
%     See proof of~\cite[Proposition 2]{wolleswinkelQuantifyingControlPerformance2026a}.
% \end{proof}

\input{Sections/arXiv/proof_prop_2}
\input{Sections/arXiv/corollary}

\subsection{Floating-point}

For floating point, the dynamic range induced by the quantizer $Q_{\tn{FL}}$ can amplify the process noise $\vctr{w}(t)$, and thereby its impact on the virtual performance output $\vctr{z}(t)$\FIXME{Here, we need stronger motivation of dynamic range.}. Defining $\mtrx{G}(z)$ as the closed-loop transfer function from $\vctr{w}(t)$ to $\vctr{z}(t)$, the metric of interest is
%
% and as~\cref{ass:bounded_disturbance} implies $\vnorm{\vctr{z}(t)}_{\ell_{\infty}} \leq \vnorm{\mtrx{G}(z)}_{1} \ccdot \vnorm{\vctr{w}(t)}_{\ell_{\infty}}$
%
% \begin{equation}
%     \rho_{\tn{FL}}(\nbitsm, \nbitswm) = \vnorm{\mtrx{G}(z)}_{1} = \sup_{\vnorm{\vctr{w}}_{\infty} \leq 1} \frac{\vnorm{ \mtrx{G}(z) \vctr{w} }_{\infty} }{\vnorm{\vctr{w}}_{\infty} } \textcomma{}
% \end{equation}
\begin{equation}\label{eqn:metric_floating_gain}
    \rho_{\tn{FL}}(\nbitsm, \nbitswm) = \vnorm{\mtrx{G}(z)}_{1} \textcomma{}
\end{equation}
the worst-case amplification of the process noise to the performance output.

\begin{proposition}\label{thm:induced_gain}
    Consider a floating-point quantizer $\breve{Q}_{\tn{FL}}$ with $\nbitsm$ mantissa bits and the $L$-bit coding scheme. Then,
    \begin{equation}
        \rho_{\tn{FL}}(\nbitsm, \nbitswm) = \frac{\vnorm{\mtrx{Q}}_{\infty} \ccdot \vnorm{(z \ccdot \mtrx{I} - \mtrx{A}_{\tn{cl}})^{\shortminus{}1} \mtrx{B}_{\tn{w}}}_{1}}{1 - \gamma_{\tn{e}} \ccdot \vnorm{(z \ccdot \mtrx{I} - \mtrx{A}_{\tn{cl}})^{\shortminus{}1} \mtrx{B} \mtrx{K}}_{1}} \textcomma{}
    \end{equation}
    where 
    \begin{equation}
        \gamma_{\tn{e}} = 2^{-(\nbitsm + 1)} + 2^{-(\nbitsm - L)} - 2^{-\nbitsm} \approx 2^{-(\nbitsm - L)} \textcomma{}
    \end{equation}
    implying $\rho_{\tn{FL}}(\nbitsm, L) \approx \rho_{\tn{FL}}(\nbitsm - L)$.
\end{proposition}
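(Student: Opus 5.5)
The proof will model the floating-point quantization-plus-coding error as a multiplicative (relative) perturbation and close the loop with a small-gain argument in the $\ell_{\infty}$-induced ($L_{1}$) norm.

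\emph{Step 1: bound the relative error.} For floating point, round-off to nearest gives $\vert x - Q_{\tn{FL}}(x)\vert \leq 2^{-(\nbitsm+1)}\vert x\vert$ componentwise, because the spacing within a cycle is $2^{-\nbitsm}$ relative to the leading power of two. Subnormals are ignored and \cref{ass:no_overflow} excludes overflow. Replacing the last $L$ mantissa bits by digest bits moves the quantized value by at most $(2^{L}-1)\,2^{-\nbitsm}$ in relative terms, i.e.\ at most $2^{-(\nbitsm-L)} - 2^{-\nbitsm}$. The triangle inequality then gives $\vert e_{i}(t)\vert \leq \gamma_{\tn{e}} \vert x_{i}(t)\vert$ with $\gamma_{\tn{e}}$ as stated, mirroring the additive bound \cref{eqn:upper_bound_quantization_error} used in \cref{thm:volume_ellipsoid}. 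Hence $\vnorm{\vctr{e}(t)}_{\infty} \leq \gamma_{\tn{e}}\vnorm{\vctr{x}(t)}_{\infty}$ at every $t$, and so $\vnorm{\vctr{e}}_{\ell_{\infty}} \leq \gamma_{\tn{e}}\vnorm{\vctr{x}}_{\ell_{\infty}}$.

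\emph{Step 2: loop decomposition.} With $\recieved{\vctr{y}} = \vctr{x} - \vctr{e}$ (no attack), the closed loop reads $\vctr{x}(t+1) = \mtrx{A}_{\tn{cl}}\vctr{x}(t) + \mtrx{B}\mtrx{K}\vctr{e}(t) + \mtrx{B}_{\tn{w}}\vctr{w}(t)$. Write $\mtrx{G}_{\tn{w}}(z) = (z\mtrx{I}-\mtrx{A}_{\tn{cl}})^{-1}\mtrx{B}_{\tn{w}}$ and $\mtrx{G}_{\tn{e}}(z) = (z\mtrx{I}-\mtrx{A}_{\tn{cl}})^{-1}\mtrx{B}\mtrx{K}$, assuming $\mtrx{A}_{\tn{cl}}$ is Schur so that both have finite $L_{1}$ norm. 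Taking $\ell_{\infty}$ norms (zero initial state, or a transient that decays) gives
\begin{equation*}
\vnorm{\vctr{x}}_{\ell_{\infty}} \leq \vnorm{\mtrx{G}_{\tn{w}}}_{1}\vnorm{\vctr{w}}_{\ell_{\infty}} + \vnorm{\mtrx{G}_{\tn{e}}}_{1}\gamma_{\tn{e}}\vnorm{\vctr{x}}_{\ell_{\infty}} .
\end{equation*}

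\emph{Step 3: small gain.} Under the standing condition $\gamma_{\tn{e}}\vnorm{\mtrx{G}_{\tn{e}}}_{1} < 1$, I rearrange to get $\vnorm{\vctr{x}}_{\ell_{\infty}} \leq \vnorm{\mtrx{G}_{\tn{w}}}_{1}\vnorm{\vctr{w}}_{\ell_{\infty}}/(1-\gamma_{\tn{e}}\vnorm{\mtrx{G}_{\tn{e}}}_{1})$. Multiplying by the weight, $\vnorm{\vctr{z}}_{\infty} \leq \vnorm{\mtrx{Q}^{1/2}}_{\infty}\vnorm{\vctr{x}}_{\infty}$, yields the stated expression. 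The paper writes $\vnorm{\mtrx{Q}}_{\infty}$ here, and I would flag that the weight factor is really the induced norm of the output map. Taking the supremum over $\vnorm{\vctr{w}}_{\ell_{\infty}}\leq 1$ gives the bound on $\rho_{\tn{FL}}$; I would read the equality in the statement as this worst-case certified gain. The approximation $\rho_{\tn{FL}}(\nbitsm,L)\approx\rho_{\tn{FL}}(\nbitsm-L)$ follows because $\gamma_{\tn{e}}\approx 2^{-(\nbitsm-L)}$ is exactly the round-off relative bound of a mantissa with $\nbitsm-L$ bits, up to the factor $2$.

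\emph{Main obstacle.} The main obstacle is the rigorous justification of Step 3. Treating $\vctr{e}$ as an independent exogenous signal would ignore its dependence on $\vctr{x}$. The a priori finiteness of $\vnorm{\vctr{x}}_{\ell_{\infty}}$ needed to rearrange should come from \cref{ass:no_overflow}, which bounds states in $\setn{X}_{\tn{safe}}$. Failing that, I would apply the inequality to truncated signals on $[0,t]$, which are always finite, and then let $t\to\infty$. A secondary subtlety is the worst-case relative error when the coding bits flip the value across a binade boundary; I would handle it by measuring the error relative to the exponent of $Q(x)$, which the coding leaves unchanged.
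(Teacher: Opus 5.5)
Your proposal is correct and follows the paper's proof essentially step for step. It uses the same triangle-inequality split of $\vctr{e}(t)$ into a relative round-off bound $\epsilon = 2^{-(\nbitsm+1)}$ and a relative coding bound $\eta = 2^{-(\nbitsm-L)} - 2^{-\nbitsm}$, the same $L_{1}$ small-gain closure around the static nonlinearity, and the same final weighting. Your extra care (truncating before rearranging, using $\vnorm{\mtrx{Q}^{1/2}}_{\infty}$ instead of the paper's $\vnorm{\mtrx{Q}}_{\infty}$, reading the equality as an upper bound, checking the binade boundary) fixes loose steps in the paper's argument. The one idealization the paper makes explicit and you leave implicit is that the relative round-off bound fails near zero. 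The paper writes $\epsilon \vnorm{\vctr{x}(t)}_{\ell_{\infty}} + c$ with $c = \smallesstepsize_{\tn{FL}}/2$ and then drops $c$; ignoring subnormals does not remove that additive term, so you should state this assumption explicitly.
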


% \begin{proof}
%     See proof of~\cite[Proposition 3]{wolleswinkelQuantifyingControlPerformance2026a}.
% \end{proof}

\input{Sections/arXiv/proof_prop_3}

The prior results allow us to quantify the worst-case performance loss due to the $L$-bit coding scheme. Similarly, we can quantify the impact in the average sense, by modeling quantization error as sources of random noise\FIXME{Removed~\cite{widrowStatisticalTheoryQuantization1996} for space}\FIXME{Removed~\cite{sripadQuantizationErrorsFloatingpoint1978} for space}, the so-called~\ac{PQN} model~\cite{widrowQuantizationNoiseRoundoff2008}. We define the metric
\begin{equation}\label{eqn:distribution_cost}
    J(\nbitsm, \nbitswm) = \mathbb{E}\Bigg[\sum_{t=0}^{\infty} \vctr{z}(t)^{\T} \vctr{z}(t)\Bigg] \textcomma{}
\end{equation}
leading to the following assumption on the process noise: 

%\FIXME{About uniform, note that whilst it's worst-case, they state:``This bound is only likely to be approached when a system has a constant input and has settled to its steady-state value.''~\cite{franklinDigitalControlDynamic2002} We are in steady-state!}

\begin{assumption}\label{ass:stochastic_process_noise}
    The~\acs{i.i.d.} process noise satisfies $\mathbb{E}[\vctr{w}(t)] = \vctr{0}$, $\mathbb{E}[\vctr{w}(t) \vctr{w}(t)^{\T}] = \mtrx{\Sigma}_{\tn{w}}$, and $\mathbb{E}[\vctr{e}(t) \vctr{w}(t)^{\T}] = \vctr{0}$.
\end{assumption}

Combining the~\ac{PQN} model with~\cref{ass:stochastic_process_noise}, the quantization error $\vctr{q}(t) = \vctr{x}(t) - Q(\vctr{x}(t))$ is modeled to good accuracy as coming from a distribution with zero mean\FIXME{Removed section~\cite[\S12.7]{widrowQuantizationNoiseRoundoff2008}} and covariance $\mtrx{\Sigma}_{\tn{q}} = \mathbb{E}[ \vctr{q}(t)\vctr{q}(t)^{\T} ]$, given by  
\begin{equation}\label{eqn:quantization_error_statistics}
    \mtrx{\Sigma}_{\tn{q}} = \frac{1}{12} \ccdot 2^{-2 \ccdot (\nbitsm + 1)} \ccdot \mtrx{I} \textcomma{} \qquad%
    \mtrx{\Sigma}_{\tn{q}} \approx 0.180 \ccdot 2^{-2\ccdot \nbitsm} \ccdot \mtrx{\Sigma}_{\tn{x}} \textcomma{}
\end{equation}
for a fixed-point\FIXME{~\cite[\S4.2]{widrowQuantizationNoiseRoundoff2008}} and floating-point\FIXME{~\cite[Eq. 12.24]{widrowQuantizationNoiseRoundoff2008}} number format, respectively~\cite{widrowQuantizationNoiseRoundoff2008}. Here, $\mtrx{\Sigma}_{\tn{x}} = \mathbb{E}[\vctr{x}(t)\vctr{x}(t)^{\T}] $ denotes the stationary covariance matrix of $\vctr{x}(t)$.

\begin{proposition}
    Consider the fixed-point or floating-point quantizer $\breve{Q}$ and the $L$-bit coding scheme. Then,
    \begin{equation}
        J(\nbitsm, L) = \mathrm{trace}(\mtrx{Q}\mtrx{\Sigma}_{\tn{x}}) \textcomma{}
    \end{equation}
    where the covariance matrix $\mtrx{\Sigma}_{\tn{x}} \succ 0$ is the solution to the generalized Lyapunov equation
    \begin{equation}\label{eqn:generalized_lyaponuv}
        \mtrx{A}_{\tn{cl}} \mtrx{\Sigma}_{\tn{x}} \mtrx{A}_{\tn{cl}}^{\T} + \mtrx{B} \mtrx{K} \mtrx{\Sigma}_{\tn{e}} (\mtrx{B} \mtrx{K})^{\T} + \mtrx{B}_{\tn{w}} \mtrx{\Sigma}_{\tn{w}} \mtrx{B}_{\tn{w}}^{\T} - \mtrx{\Sigma}_{\tn{x}} = \mtrx{0} \textcomma{}
    \end{equation}
    %
    % with $\mtrx{\Sigma}_{\tn{e}} \approx \frac{1}{12} \ccdot \ubar{e}^{2} \ccdot \mtrx{I}$ for a fixed-point number format, and $\mtrx{\Sigma}_{\tn{e}} \approx (0.180 \ccdot 2^{-2 \ccdot \nbitsm } + \frac{1}{12} \ccdot 2^{-2\ccdot(\nbitsm - L)}) \ccdot \mtrx{\Sigma}_{\tn{x}}$ for a floating-point number format.
    with
    \begin{equation*}
        \mtrx{\Sigma}_{\tn{e}} \approx \frac{1}{12} \ccdot \ubar{e}^{2} \ccdot \mtrx{I} \textcomma{} \;\;\, \mtrx{\Sigma}_{\tn{e}} \approx (0.180 \ccdot 2^{-2 \ccdot \nbitsm } + \frac{1}{12} \ccdot 2^{-2\ccdot(\nbitsm - L)}) \ccdot \mtrx{\Sigma}_{\tn{x}}
    \end{equation*}
    for fixed-point and floating-point, respectively.
    
    % \begin{equation}
    %     \mtrx{\Sigma}_{\tn{e}} \approx
    %     \begin{cases}
    %          \frac{1}{12} \ccdot \ubar{e}^{2} \ccdot \mtrx{I} \textcomma{} & ... \\
    %          (0.180 \ccdot 2^{-2 \ccdot \nbitsm } + \frac{1}{12} \ccdot 2^{-2\ccdot(\nbitsm - L)}) \ccdot \mtrx{\Sigma}_{\tn{x}} \textcomma{} & \breve{Q}_{\tn{FL}}\textperiod{}
    %     \end{cases}
    % \end{equation}
\end{proposition}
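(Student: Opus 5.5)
The plan is to write the closed loop as a linear system driven by two independent stochastic inputs, the process noise and the total measurement error, and then read off the stationary covariance from a Lyapunov recursion. Substituting \cref{eqn:controller} with $\recieved{\vctr{y}}(t) = \breve{Q}(\vctr{x}(t)) = \vctr{x}(t) - \vctr{e}(t)$ (no attack, no dropouts) into \cref{eqn:lti_plant_dynamics} and using \eqref{eqn:error_two_independent_source} gives
\begin{equation*}
    \vctr{x}(t+1) = \mtrx{A}_{\tn{cl}} \vctr{x}(t) + \mtrx{B}\mtrx{K} \vctr{e}(t) + \mtrx{B}_{\tn{w}} \vctr{w}(t) \textperiod{}
\end{equation*}
Under the \ac{PQN} model, $\vctr{e}(t)$ is treated as zero-mean white noise with covariance $\mtrx{\Sigma}_{\tn{e}}$, uncorrelated with $\vctr{x}(t)$. \Cref{ass:stochastic_process_noise} gives that $\vctr{w}(t)$ is zero-mean, white, and uncorrelated with $\vctr{e}(t)$. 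Since $\vctr{x}(t)$ depends only on past inputs, the three terms on the right are mutually uncorrelated. Taking $\mathbb{E}[\vctr{x}(t+1)\vctr{x}(t+1)^{\T}]$ then yields a covariance recursion. Because $\mtrx{A}_{\tn{cl}}$ is Schur (the controller is assumed stabilizing), this recursion converges to a unique stationary point, which is \eqref{eqn:generalized_lyaponuv}. Positive definiteness follows from $\mtrx{B}_{\tn{w}}\mtrx{\Sigma}_{\tn{w}}\mtrx{B}_{\tn{w}}^{\T} \succeq 0$ together with the quantization term, or one simply assumes excitation. Finally, $\mathbb{E}[\vctr{z}^{\T}\vctr{z}] = \mathrm{trace}(\mtrx{Q}\mtrx{\Sigma}_{\tn{x}})$ by \eqref{eqn:virtual_performance_output}. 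I would interpret $J$ as the stationary (per-step) cost, since the infinite sum with persistent noise diverges; the stationary value is the meaningful quantity.

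Next I would characterize $\mtrx{\Sigma}_{\tn{e}}$ by splitting $\vctr{e}(t) = \vctr{q}(t) - \vctr{d}(t)$. By \cref{ass:random_oracle}, the $L$ replaced \acp{LSB} are i.i.d.\ uniform bits independent of everything else, so $\vctr{d}(t)$ is independent of $\vctr{q}(t)$ and the covariances add.

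For fixed-point, the overwritten bits contribute $\sum_{j} b_j 2^{-(\nbitsm - L + j)}$ minus the original bits. I would argue that the combined error $\vctr{x} - \breve{Q}(\vctr{x})$ is approximately uniform on an interval of width about $2\ubar{e}$, with $\ubar{e}$ as in \cref{eqn:upper_bound_quantization_error}. This interval is essentially the coarser quantization with $\nbitsm - L$ bits plus a random offset, and it gives $\frac{1}{12}(2\ubar{e})^{2}$ per component. The paper's stated $\frac{1}{12}\ubar{e}^{2}$ corresponds to treating $\ubar{e}$ as the step size. I would match this convention by noting $\ubar{e}\approx 2^{-(\nbitsm-L)}$, which is the effective step size, so $\frac{1}{12}\Delta^2$ is the uniform-variance formula.

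For floating-point, the original $\vctr{q}$ has covariance $0.180\cdot 2^{-2\nbitsm}\mtrx{\Sigma}_{\tn{x}}$ by \eqref{eqn:quantization_error_statistics}. The coding error is relative: a uniform perturbation of mantissa bits scaled by $2^{e-b}$. Its variance is therefore $\frac{1}{12}2^{-2(\nbitsm-L)}$ times $\mathbb{E}[2^{2(e-b)}]\approx$ the second moment of $x$, and adding the two gives the stated $\mtrx{\Sigma}_{\tn{e}}$.

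The main obstacle is this floating-point step. Approximating the squared exponent scale $2^{2(e-b)}$ by $\mtrx{\Sigma}_{\tn{x}}$ requires the same heuristic as in the Widrow derivation, namely that the mantissa is log-uniform, and there the constant for the truncated bits differs from $\frac{1}{12}$. I would accept the approximation at the level of \eqref{eqn:quantization_error_statistics}. Because $\mtrx{\Sigma}_{\tn{e}}$ then depends on $\mtrx{\Sigma}_{\tn{x}}$, \eqref{eqn:generalized_lyaponuv} becomes a generalized Lyapunov equation. Its solvability requires that the spectral radius of $X\mapsto \mtrx{A}_{\tn{cl}}X\mtrx{A}_{\tn{cl}}^{\T} + c\,\mtrx{B}\mtrx{K}X(\mtrx{B}\mtrx{K})^{\T}$ be less than one, which holds for $c$ small, that is, for $\nbitsm - L$ large enough. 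I would state this as a standing condition and solve by vectorization.
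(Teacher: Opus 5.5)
Your overall route is the paper's. You close the loop as $\vctr{x}(t+1)=\mtrx{A}_{\tn{cl}}\vctr{x}(t)+\mtrx{B}\mtrx{K}\vctr{e}(t)+\mtrx{B}_{\tn{w}}\vctr{w}(t)$, take stationary second moments to obtain \eqref{eqn:generalized_lyaponuv}, build $\mtrx{\Sigma}_{\tn{e}}$ from the PQN model and the random-oracle bits, and finish with $\mathbb{E}[\vctr{z}^{\T}\vctr{z}]=\mathrm{trace}(\mtrx{Q}\mtrx{\Sigma}_{\tn{x}})$. Two of your side remarks are correct and go beyond the paper's proof. First, with persistent noise the infinite sum defining $J$ diverges, so the trace formula really gives the stationary per-step cost. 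Second, in the floating-point case $\mtrx{\Sigma}_{\tn{e}}$ depends on $\mtrx{\Sigma}_{\tn{x}}$, so a stationary covariance exists only under the spectral-radius condition you impose.

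The step that fails as written is the fixed-point covariance. Modelling $e_i(t)$ as uniform on $[-\bar{e},\bar{e}]$ gives variance $\tfrac{1}{3}\bar{e}^{2}$, four times the claimed $\tfrac{1}{12}\bar{e}^{2}$. Calling $\bar{e}$ ``the step size'' is a relabelling rather than an argument, and it silently drops the random offset you had just introduced. Your own picture is not uniform on the doubled interval either. Combining a coarse truncation error with an independent random offset, each roughly uniform of width $2^{-(m-L)}$, gives a triangular distribution with variance about $\tfrac{1}{6}2^{-2(m-L)}$.

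The paper gets the stated form by exactly the additivity you announced in the preceding paragraph and then did not use. It takes $\mtrx{\Sigma}_{\tn{q}}=\tfrac{1}{12}2^{-2(m+1)}\mtrx{I}$ from \eqref{eqn:quantization_error_statistics}. It adds $\tfrac{1}{12}2^{-2(m-L)}\mtrx{I}$ for the coding error, modelled as an independent uniform of width $2^{-(m-L)}$. Then $2^{-2(m+1)}+2^{-2(m-L)}\approx 2^{-2(m-L)}\approx\bar{e}^{2}$ once $L$ exceeds a couple of bits. Replace your fixed-point paragraph with this computation. Keep in mind that it accounts only for the injected bits and not for the removed original ones, so the constant $\tfrac{1}{12}$ is only as reliable as the PQN heuristics themselves.
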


\begin{table*}[t]
    \vspace{5mm}
    \centering%
    \caption{Effect of $L$-bit coding scheme on control performance and design specification}\label{tab:table_metric_computation}
    \input{Tables/table_metric_computation}
\end{table*}

% \begin{proof}
%     See proof of~\cite[Proposition 4]{wolleswinkelQuantifyingControlPerformance2026a}.
% \end{proof}

\input{Sections/arXiv/proof_prop_4}

\section{Illustrative Example}\label{sec:illustrative_example}

Consider the model of a hydro power turbine from~\cite{parkStealthyAdversariesUncertain2019}\FIXME{Removed~\cite{andersonPowerSystemControl2003}}, discretized using~\ac{ZOH} with a frequency of 10 Hz. The matrices $\mtrx{A}$, $\mtrx{B}$, and $\mtrx{B}_{\tn{w}}$ are given by
\begin{equation}
    \mtrx{A} = \begin{bmatrix*}[r]
        0.917 & 0.016 & -0.012 \\
        0.450 & 0.964 &  0.090 \\
        7.560 & 0.069 &  0.550
    \end{bmatrix*} \textcomma{} \quad
    \mtrx{B}_{\tn{w}} = 
    \begin{bmatrix}
        1 & 0 \\
        0 & 1 \\
        0 & 0
    \end{bmatrix} \textcomma{}
\end{equation}
and $\mtrx{B} = \begin{bmatrix} 0 & 0 & 1 \end{bmatrix}^{\T}$. The states $x_{1}$, $x_{2}$, and $x_{3}$ denote the frequency deviation (in Hz), the change in generator output (in Watt), and the change in governor valve position (in rad), respectively. The controller $\sysn{C}$ is given by
\begin{equation}
    \sysn{C}: \qquad\qquad \mtrx{K} = \begin{bmatrix}
        20.498 & 2.092 & 1.529
    \end{bmatrix} \textperiod{} \qquad\quad\strut
\end{equation}
A bounded process noise $\vctr{w}(t)$ acts on the plant, with known bound $\vnorm{\vctr{w}(t)}_{\ell_{\infty}} \leq \ubar{w} = 0.05$. Furthermore, $\mtrx{\Sigma}_{\tn{w}} = 2 \ccdot 10^{\shortminus{} 3} \ccdot \mtrx{I}$. As a design specification, it is imperative that the magnitude of the frequency deviation does not exceed $0.5$ Hz, meaning
\begin{equation}\label{eqn:safe_set_example}
    \setn{X}_{\tn{safe}} = \{ \vctr{x} \in \R^{\dimx} \forwhich{} \vert{} x_{1} \vert{} \leq 0.5 \} \textperiod{}
\end{equation}
The relative weighting of the states is specified as
\begin{equation}
    \mtrx{Q} = \begin{bmatrix*}[r]
        2 & -2 & 0 \\
        -2 & 10 & 0 \\
        0 & 0 & 1
    \end{bmatrix*} \succ 0 \textperiod{}
\end{equation}
We investigate two number formats, namely a \texttt{Q7.8} fixed-point format ($\nbitse = 7$, $\nbitsm = 8$), and a \emph{half-precision} floating-point format ($\nbitse = 5$, $\nbitsm = 10$). We calculate the effect of the $L$-bit coding scheme on control performance, and whether the design specification is maintained. For \texttt{Q7.8}, the resulting values for $\rho_{\tn{FX}}(8, \nbitswm)$ are shown in~\cref{tab:table_metric_computation}. Given $\mtrx{P}$ as in~\eqref{eqn:bmi_ellipsoid}, we can compute the point $\vctr{x} \in \setn{X}_{\tn{reach}}$ with the largest $x_{1}$ magnitude as $\ubar{x}_{1} = \sqrt{[\mtrx{P}^{\shortminus{}1}]_{11}}$, where $[\mtrx{P}^{\shortminus{}1}]_{ij} \in \R$ denotes the entry at the $i$-th row and $j$-th column of $\mtrx{P}^{\shortminus{}1}$. In~\cref{tab:table_metric_computation}, a red underline denotes $\ubar{x}_{1} > 0.5$, implying the design specification~\cref{eqn:safe_set_example} is not met. The values $J(8,\nbitswm)$ for $L = 0, \ldots, 8$ are also shown in~\cref{tab:table_metric_computation}.

%Note that $\vnorm{\vctr{x}(t)} \leq R \implies \frac{1}{\sqrt{\lambda_{\tn{min}}(\mtrx{P}) }} \leq R$.\FIXME{Make this maybe more practical, say the pressure state $x_{2} \leq 6$ at all times, the rated pressure.}
%Furthermore, an illustrative sample of the fixed-point number format limit-cycles, as well as the reachable and minimal volume ellipsoidal invariant set can be seen in~\cref{fig:graph_fixed_point_trajectories}\FIXME{This will only work if we consider a 2D system...}

% \begin{figure}
%     \centering%
%     \input{Graphs/graph_fixed_point_trajectories}
%     \caption{Fixed-point number format effects and reachable and invariant sets}\label{fig:graph_fixed_point_trajectories}
% \end{figure}

Next, we consider the half-precision floating-point format. For the given plant $\sysn{P}$ and controller $\sysn{C}$, we find\FIXME{Here, I removed\cite{kimComputingInducedNorm2020} for space}
% Using the methodology from~\cite{kimComputingInducedNorm2020}\footnotemark{}, we find
%
\begin{equation}
    \vnorm{\mtrx{E}(z)}_{1} = 0.456 \textcomma{} \qquad \vnorm{\mtrx{W}(z)}_{1} = 1.227 \textperiod{}
\end{equation}
Utilizing~\cref{thm:induced_gain}, the resulting metric $\rho_{\tn{FL}}(10, L)$ can be seen in~\cref{tab:table_metric_computation}. Given that $\ubar{w} = 0.05$, we can compute $\ubar{x}_{1} = \rho_{\tn{FL}}(10, L) \ccdot \ubar{w}$, where a red underline in~\cref{tab:table_metric_computation} indicates $\ubar{x}_{1} > 0.5$. Finally, the values $J(10,\nbitswm)$ for $L = 0, \ldots, 10$ are also shown in~\cref{tab:table_metric_computation}.

\subsection{Attack detection and synchronization}\label{sec:attack_detection}

We demonstrate the efficacy of our proposed scheme by means of simulation. Given that the control system operates continuously, we take $\vctr{x}(0) = \vctr{0}$. The process noise $\vctr{w}(t)$ is modeled as a uniform distribution on $[-0.05,0.05]^{2}$. We consider the half-precision floating-point format, and an $L=4$ bit coding scheme. 
% The \ac{HMAC} $H$ is simulated using the \texttt{hmac} package in \Python~\cite{krawczykHMACKeyedHashingMessage1997}.\FIXME{Check out \href{https://docs.python.org/3/library/hmac.html}{Python Doc}}.

We perform a simulation of $T=150$ time steps, and consider two types of attacks. First, consider a replay attack at $t_{\tn{a}} = 20$, given by
\begin{equation}
    \sysn{A}: \qquad\qquad\quad\quad \recieved{\vctr{y}}(t) = \vctr{y}(t - \tau) \textcomma{} \qquad\qquad\qquad\:\:\strut
\end{equation}
with $\tau = 10$. Then, we consider a bias injection attack~\cite{teixeiraSecureControlFramework2015} (a type of~\ac{FDI} attack) at $t_{\tn{a}} = 80$, given by
\begin{equation}
    \sysn{A}: \qquad\qquad \recieved{\vctr{y}}(t) = Q(\beta \ccdot \vctr{y}(t) + (1 - \beta) \ccdot \vctr{x}_{\infty} ) \textcomma{} \quad \strut 
\end{equation}
with $\beta = 0.95 \in (0,1)$ and $\vctr{x}_{\infty} = \vctr{1}$. In line with~\cref{ass:key_secrecy} and~\cref{ass:random_oracle}, we assume the adversary $\sysn{A}$ is aware of the $L$-bit coding scheme, but as $\mf{k}_{\tn{s},i}(t) \notin \setn{I}_{\tn{a}}(t)$, he chooses the last $L$ bits at random, all with equal probability.

% Whenever either attack get detected, we simply stop the attack in this simulation\FIXME{Risky, don't mention}

Lastly, we demonstrate how the proposed authentication scheme handles packet dropouts. From $t \geq 110$ onward, the~\acs{S2C} channel (see~\cref{fig:diagram_networked_control_with_watermarking}) is modeled as
\begin{equation}
    \recieved{\vctr{y}}(t) = \Xi(t) \ccdot \vctr{y}(t) + (1 - \Xi(t)) \ccdot \recieved{\vctr{y}}(t-1) \textcomma{}
\end{equation}
where $\Xi(t)$ follows a Bernoulli distribution with success probability $p = 0.8$, corresponding to a \emph{to-hold} design. We employ a look-ahead window of size $\windowsize = 2$, such that up to two consecutive packet dropouts can be tolerated. 

The simulation results can be seen in~\cref{fig:graph_attacks_synchronization}, where both attack are detected. Under packet dropouts, resynchronization is achieved, until at $t = 131$ three consecutive packets are dropped, causing the detector $\sysn{D}$ to raise a false alarm. Note, however, that the control system is unaffected by this desynchronization.

% For the replay attack at $t = 100$, it becomes evident that the attack is detected and foiled after the very first attack instance. For the~\ac{FDI} attack, the attack is not detected after the first two attack instances, as the adversary happens to have chosen the right digits in the coding scheme. This demonstrates why longer coding schemes with larger $L$ and smaller $\windowsize$ are beneficial, as the former becomes less likely to happen (see~\cref{thm:security_hmac}), but at the cost of loss of control performance and less robustness against network-induced phenomena, respectively. Finally, from $t \geq 300$ the effectiveness of the look-ahead window of size $\windowsize$ is demonstrated. It can be seen that up until $t < 361$ multiple packets are dropped, yet no alarm is raised and synchronization is not lost. However, for $t = 361, \ldots, 365$, four consecutive packets are dropped, and synchronization is lost. As such, a (false) alarm is raised, which is undesirable. Note, however, that the control-loop itself remains functioning normally, demonstrating that the proposed scheme prioritizes availability above all.

\section{Conclusions}\label{sec:conclusions}

In this work, we propose an $L$-bit coding scheme that modifies the~\acp{LSB} of the measurement signals, and analyze the impact of this scheme on control performance. The coding scheme provides message authentication and integrity, whilst prioritizing availability above all else, making it suitable for legacy system with stringent real-time and continuous operation requirements. We also devise a rudimentary yet effective look-ahead window to deal with synchronization issues. Importantly, even under loss of synchronization, the control system remains unaffected.
% Interestingly, for the two respective formats and associated metrics considered, their respective quantities $\ubar{e} = \gamma_{\tn{e}} \approx 2^{-(\nbitsm - L)}$ turned out to be identical.\FIXME{Also mention, that all metrics can be computed independent of the choice of matrix $\mtrx{Q}$} Although not surprising, the control performance loss depend roughly only on the number of bits in the mantissa left.\FIXME{Mention that although worst-case is conservative, for systems which have high performance standards, this is actually vitally important.}

% For future work, we would like to further quantify the impact on control performance not from a worst-case perspective, but rather a distribution-based approach in the same vein as~\cite{vanwingerdenInfluenceFiniteWord1984,millerQuantizerEffectsSteadystate1989,sripadQuantizationErrorsFloatingpoint1978}\FIXME{The last source\cite{sripadQuantizationErrorsFloatingpoint1978} is on signal processing, not on control}, using, e.g.,~\cite[Eq. 12.27]{widrowQuantizationNoiseRoundoff2008} and the~\ac{PQN} model. 
For future work, we would like to test the application of the scheme on a real industrial testbed. Furthermore, inspired by~\cite{ferrariSwitchingMultiplicativeWatermarking2021}, we would also like to investigate more sophisticated synchronization schemes making use of only the measurement channel. Finally, investigating and incorporating packet dropouts directly into the performance metrics would be of interest, as well as considering both output-feedback and dynamic controllers.
% also like to extend and formalize the argument regarding packet dropouts, and its effect on control performance beyond illustrative examples in simulation.

\begin{figure}[htb]
    \centering%
    \input{Graphs/graph_attacks_synchronization}
    \caption{\Acf{MITM} attacks and synchronization}\label{fig:graph_attacks_synchronization}
\end{figure}

%% file: Graphs/graph_mapping_quantization.tex
\begin{tikzpicture}
    % === FIXED, MAP ===
    \begin{axis}[%
        name=fixed_map,
        height=50mm,
        width=50mm,
        xmin = 0,
        xmax = 1,
        ymin = 0.03125,
        ymax = 0.96875,
        title style={%
            at={(0.5,1)},
            font=\footnotesize,
            anchor=south,
            yshift=-2mm},
        title={Fixed-point},
        xmajorticks=false,
        ylabel={$\setn{F}_{\tn{FX}}$},
        ytick=data,
        yticklabel=\empty,
        ytick pos=right,
        tick align=outside,
        ytick style={
            black, 
            opacity=0.8, 
            line width=0.2pt
        },
        ylabel style={xshift=-17pt, yshift=56pt},
        major tick length=3pt
    ]   
        % Add the Q label coordinate
        \coordinate (labelpos) at (axis description cs:0, 0.5);
        % Add levels
        % FIXME: forget plot does not work!
        \addplot[draw=none] table [x=x, y=Q(x), col sep=semicolon] {Data/data_fixed_map_levels.csv};
        % Add x=y axis
        \addplot[forget plot, mark=none, mymediumgray, dotted, domain=0:1] {x};
        % Add mapping
        \addplot[draw=tableau1] table [x=x, y=Q(x), col sep=semicolon] {Data/data_fixed_map_error.csv};
        % Draw the lines to axis
        \draw[mydarkgray] (axis cs:0.425,0.575) -- (axis cs:0.575,0.575) -- (axis cs:0.575,0.425) -- (axis cs:0.425,0.425) -- cycle;
        \coordinate (zoom_fixed_org_ne) at (axis cs:0.575,0.575);
        \coordinate (zoom_fixed_org_sw) at (axis cs:0.425,0.425);
        % Add legend
        % \legend{$\,\setn{X}_{\tn{safe}}$,$\,\mathrm{ker}(\mtrx{C}) \oplus \setn{B}_{\eta}$,$\,\ubar{\setn{X}}_{\tn{reach}}$,$\,\mathrm{ker}(\mtrx{C})$,$\,\vctr{x}_{k}$}
    \end{axis}
    % Add the Q label coordinate
    \node[anchor=east] at (labelpos) {\footnotesize$Q(x)$};
    % === FIXED, ZOOM ===
    \begin{axis}[%
        name=fixed_zoom,
        at={($(fixed_map.south east) + (-2mm,7mm)$)},
        anchor=south east,
        height=25mm,
        width=25mm,
        axis line style=mydarkgray,
        xmin = 0.425,
        xmax = 0.575,
        ymin = 0.425,
        ymax = 0.575,
        xmajorticks=false,
        ymajorticks=false
    ]   
        % Add x=y axis
        \addplot[forget plot, mark=none, mymediumgray, dotted, domain=0:1] {x};
        \addplot[tableau4, ycomb, dashdotted, no marks] table [x=x, y expr=1] {%
            x
            0.46875
            0.53125
        };
        % Add mapping
        \addplot[draw=tableau1] table [x=x, y=Q(x), col sep=semicolon] {Data/data_fixed_map_error.csv};
        % Add corner labels
        \coordinate (zoom_fixed_zoom_ne) at (axis description cs:1, 1);
        \coordinate (zoom_fixed_zoom_sw) at (axis description cs:0, 0);
        % Add stripe labels
        \coordinate (left_bar) at (axis cs:0.46875, 0.425);
        \coordinate (right_bar) at (axis cs:0.53125, 0.425);
        % Add legend
        % \legend{$\,\setn{X}_{\tn{safe}}$,$\,\mathrm{ker}(\mtrx{C}) \oplus \setn{B}_{\eta}$,$\,\ubar{\setn{X}}_{\tn{reach}}$,$\,\mathrm{ker}(\mtrx{C})$,$\,\vctr{x}_{k}$}
    \end{axis}
    % Draw the nodes for connection
    \draw[mydarkgray] (zoom_fixed_org_ne) -- (zoom_fixed_zoom_ne);
    \draw[mydarkgray] (zoom_fixed_org_sw) -- (zoom_fixed_zoom_sw);
    % Add labels for spaced points
    \node[inner sep=1pt] at ($(left_bar) + (-10mm,-4mm)$) (kd) {\color{tableau4}$\scriptscriptstyle{}k \ccdot \delta$};
    \node[inner sep=1pt] at ($(right_bar) + (-5mm,-4mm)$) (kdp) {\color{tableau4}$\scriptscriptstyle{}(k+1) \ccdot \delta$};
    \draw[tableau4] (kd.north) -- (left_bar);
    \draw[tableau4] (kdp.north) -- (right_bar);
    % === FLOATING, MAP ===
    \begin{axis}[%
        name=floating_map,
        at={(fixed_map.north east)},
        anchor=north west,
        xshift=4mm,
        height=50mm,
        width=50mm,
        xmin = 0,
        xmax = 4.35,
        ymin = 0,
        ymax = 4.35,
        title style={%
            at={(0.5,1)},
            font=\footnotesize,
            anchor=south,
            yshift=-2mm},
        title={Floating-point},
        xmajorticks=false,
        ylabel={$\setn{F}_{\tn{FL}}$},
        ytick=data,
        yticklabel=\empty,
        ytick pos=right,
        tick align=outside,
        ytick style={
            black, 
            opacity=0.8, 
            line width=0.2pt
        },
        ylabel style={xshift=-17pt, yshift=56pt},
        major tick length=3pt
    ]   
        % Add levels
        % FIXME: forget plot does not work!
        \addplot[draw=none] table [x=x, y=Q(x), col sep=semicolon] {Data/data_floating_map_levels.csv};
        % Add x=y axis
        \addplot[forget plot, mark=none, mymediumgray, dotted, domain=0:5] {x};
        % Add switching points
        \addplot[tableau4, ycomb, dashdotted, no marks] table [x=x, y expr=5] {Data/data_floating_deltas.csv};
        % Add labels for switching points
        % \node[draw=tableau4, fill=white, inner sep=2pt] at (axis cs:1.5,0.5) {\color{tableau4}$\scriptscriptstyle{}2^{k-1} \ccdot \Delta$};
        \node[draw=tableau4, fill=white, inner sep=2pt] at (axis cs:2,3.8) {\color{tableau4}$\scriptscriptstyle{}2^{k} \ccdot \Delta$};
        \node[draw=tableau4, fill=white, inner sep=2pt] at (axis cs:3.5,1.5) {\color{tableau4}$\scriptscriptstyle{}2^{k + 1} \ccdot \Delta$};
        % Add mapping
        \addplot[draw=tableau1] table [x=x, y=Q(x), col sep=semicolon] {Data/data_floating_map_error.csv};
        % Add legend
        % \legend{$\,\setn{X}_{\tn{safe}}$,$\,\mathrm{ker}(\mtrx{C}) \oplus \setn{B}_{\eta}$,$\,\ubar{\setn{X}}_{\tn{reach}}$,$\,\mathrm{ker}(\mtrx{C})$,$\,\vctr{x}_{k}$}
    \end{axis}
    % === FIXED, ERROR ===
    \begin{axis}[%
        name=fixed_error,
        at={(fixed_map.south west)},
        anchor=north west,
        yshift=-2mm,
        height=25mm,
        width=50mm,
        xmin = 0,
        xmax = 1,
        ymin = 0,
        ymax = 0.052,
        xlabel={$x$},
        ylabel={$\vert{}q(x)\vert{}$},
        xmajorticks=false,
        ymajorticks=false
    ]   
        % Add bound
        \addplot[tableau2, dashed, domain=0:1] {0.03125};
        % Add error
        \addplot[draw=tableau3] table [x=x, y=|e(x)|, col sep=semicolon] {Data/data_fixed_map_error.csv};
        % Add legend
        % \legend{$\,\setn{X}_{\tn{safe}}$,$\,\mathrm{ker}(\mtrx{C}) \oplus \setn{B}_{\eta}$,$\,\ubar{\setn{X}}_{\tn{reach}}$,$\,\mathrm{ker}(\mtrx{C})$,$\,\vctr{x}_{k}$}
        % Add bound
        \node[inner sep=1pt] at (axis cs:0.5,0.041) (e_bar) {\color{tableau2}$\scriptscriptstyle{}\ubar{e}$};
    \end{axis}
    % === FLOATING, ERROR ===
    \begin{axis}[%
        name=floating_error,
        at={(fixed_error.north east)},
        anchor=north west,
        xshift=4mm,
        height=25mm,
        width=50mm,
        xmin = 0,
        xmax = 4.35,
        ymin = 0,
        ymax = 0.36,
        xlabel={$x$},
        xmajorticks=false,
        ymajorticks=false
    ]   
        % Add bound
        \addplot[tableau2, dashed, domain=0:5] {0.0625 * x};
        % Add switching points
        \addplot[tableau4, ycomb, dashdotted, no marks] table [x=x, y expr=5] {Data/data_floating_deltas.csv};
        % Add mapping
        \addplot[draw=tableau3] table [x=x, y=|e(x)|, col sep=semicolon] {Data/data_floating_map_error.csv};
        % Add legend
        % \legend{$\,\setn{X}_{\tn{safe}}$,$\,\mathrm{ker}(\mtrx{C}) \oplus \setn{B}_{\eta}$,$\,\ubar{\setn{X}}_{\tn{reach}}$,$\,\mathrm{ker}(\mtrx{C})$,$\,\vctr{x}_{k}$}
        % Add bound
        \node[inner sep=1pt] at (axis cs:2.7,0.25) (epsilon) {\color{tableau2}$\scriptscriptstyle{}\epsilon \ccdot x$};
    \end{axis}
    % % Draw bounding box
    % \draw[brown] (current bounding box.south west) rectangle (current bounding box.north east);
\end{tikzpicture}

%% file: Diagrams/diagram_networked_control.tex
\begin{tikzpicture}[font=\tikzfont]
    % FROM: https://tex.stackexchange.com/questions/111660/intersection-of-2-lines-not-really-connected-in-tikz
    \tikzset{
        connect/.style args={(#1) to (#2) over (#3) by #4}{
            insert path={
                let \p1=($(#1)-(#3)$), \n1={veclen(\x1,\y1)}, 
                \n2={atan2(\y1,\x1)}, \n3={abs(#4)}, \n4={#4>0 ?180:-180}  in 
                (#1) -- ($(#1)!\n1-\n3!(#3)$) 
                arc (\n2:\n2+\n4:\n3) -- (#2)
            }
        },
    }
    % ------------ ELEMENTS ------------

    % Plant
    \node [draw, fill=white, minimum width=\dynamicswidth, minimum height=\dynamicsheigth] (plant) {$\sysn{P}$};
    \node [above=\headersep of plant] (plant_lab) {Plant};

    % Sensors
    \node [draw, fill=white, minimum width=1.7\dynamicswidth, minimum height=\dynamicsheigth, right=1.5\horizontalblocksep of plant] (sensors) {$\breve{\sysn{S}}\vphantom{\big(}\:\:\qquad\strut$};
    \node [above=\headersep of sensors] (sensors_lab) {Sensors};

    % Coding scheme
    \node[draw=tableau1, fill=tableau7, fill opacity=0.5, minimum width=0.7\dynamicswidth, minimum height=0.7\dynamicsheigth, right=0mm of sensors.center] (coding) {$L$};

    % Splits signals above network
    \coordinate [below left=0.4\verticalsep and 1.5\verticalsependpoints of sensors] (splitUL);
    \coordinate [below right=0.4\verticalsep and 1.5\verticalsependpoints of sensors] (splitUR);

    % Splits signals below network
    \coordinate [below=0.8\networkspacingsep of splitUL] (splitLL);
    \coordinate [below=0.8\networkspacingsep of splitUR] (splitLR);

    % Controller
    \node [draw, minimum width=\dynamicswidth, minimum height=\dynamicsheigth] at ($(splitLL)!0.5!(splitLR) + (0mm,-8mm)$) (controller) {$\sysn{C}$};
    \node [above=\headersep of controller] (controller_lab) {Controller};

    % Detector
    \node [draw, fill=white, minimum width=\dynamicswidth, minimum height=\dynamicsheigth, right=2\horizontalblocksep of controller] (detector) {$\sysn{D}$};

    % Alarm symbol
    \node [draw, fill=white, diamond, minimum size=4mm, above=6mm of detector] (alarm_outer) {};
    \node [draw, fill=white, diamond, minimum size=3mm, label=center:{\tiny \textbf{!}}, inner sep=0mm] (alarm) at (alarm_outer.center) {};

    % ------------ PATHS ------------
    
    % \draw [->, above] (splitUL) |- (plant);
    \draw [->, above] (plant) -- node[above, pos=0.5] {$\vctr{x}(t)$} (sensors);
    \draw [->, right] (sensors) -| node [pos=0.6, right] {$\breve{\vctr{y}}(t) \in \setn{F}^{\dimy}$} (splitUR);
    \draw [->, left] (splitLR) |- node[above, pos=0.7] {$\recieved{\breve{\vctr{y}}}(t)$} (controller);
    \draw [left] (controller) -| node[pos=0.25, above] {$\vctr{u}(t)$} ($(plant.west) + (-1.5\verticalsependpoints, 0mm)$) [->] -- (plant);
    \draw [->] (splitLR |- detector) |- (detector);
    \draw [->] (detector) -- node[right, pos=0.5] {$g(t)$} (alarm_outer);
    
    % Communications network
    \begin{scope}[draw=mymediumgray, text=mymediumgray]
        % Splits network lines
        \coordinate [left=\networklinewidth of splitUL] (splitCAL);
        \coordinate [right=\networklinewidth of splitUL] (splitCAR);
        \coordinate [left=\networklinewidth of splitUL] (splitCSL);
        \coordinate [right=\networklinewidth of splitUR] (splitSCR);
        \coordinate (splitCAL_lower) at (splitLL -| splitCAL);
        \coordinate [right=\networklinewidth of splitLR] (splitCAR_lower);
        
        % Network label
        \coordinate (center_upper_comm) at ($(splitUL)!0.5!(splitUR)$);
        \node [anchor=center, align=center] at ($(splitUL)!0.4!(splitUR) + (0mm, -3mm)$) (actuators_lab) {Communications network};
    
        % Make network lines
        \draw [dashed, thick] (splitCAL) -- (splitCAR);
        \draw [dashed, thick] (splitCSL) -- (splitSCR);
        \draw [dashed, thick] (splitCAL_lower) -- (splitCAR_lower);
    \end{scope}

    % Draw dots at outputs of network
    \draw[fill=black] (splitLR) circle (1pt);

    % ------ ADVERSARY ------

    % Sum
    \node[draw, circle, minimum size=\summationsize] (sum) at ($(splitUR)!0.7!(splitLR)$) {};

    % Adversary
    \node [draw, dashed, fill=white, minimum width=\dynamicswidth, minimum height=\dynamicsheigth, left=1.6\horizontalblocksep of sum] (adversary) {$\sysn{A}$};
    \node [above=\headersep of adversary] (adversary_label) {Adversary};

    % Lines
    \draw[->, mydashdot] (splitUR) -- node[right, pos=0.83] {\scriptsize{}$+$} (sum);
    \draw[->, mydashdot] (adversary) -- node[above, pos=0.45] {$\vctr{a}(t)$} node[above, pos=0.85] {\scriptsize{}$+$} (sum);
    \draw[->, mydashdot] (sum) -- (splitLR);

    % ------ BACKGROUND ------

    \begin{scope}[on background layer]
        \draw[draw=tableau1, fill=tableau7, fill opacity=0.5] ($(detector.north west) + (-2mm,12mm)$) rectangle ($(detector.south east) + (3mm,-2mm)$);
    \end{scope}

    % ------ CHANNELS ------

    % S-C channel
    \node [draw, fill=white, minimum width=3mm, minimum height=2mm, inner sep=1mm, anchor=west, right=0mm of splitSCR] (S-C) {\channelabelsize S2C};

    % Draw bounding box
    % \draw[brown] (current bounding box.south west) rectangle (current bounding box.north east);
\end{tikzpicture}

%% file: Sections/arXiv/proof_prop_1.tex
\begin{proof}
    Given a nonzero $\vctr{a}(t)$ such that $\recieved{\breve{\vctr{y}}}(t) \neq \breve{\vctr{y}}(t)$, due to~\cref{def:hmac}.iii, the digests $\breve{\mf{h}}_{i}(t)$ and $\recieved{\breve{\mf{h}}}_{i}(t)$ are different. \Cref{def:hmac}.ii and~\cref{ass:key_secrecy} ensure that computing the digest $\recieved{\breve{\mf{h}}}_{i}(t)$ for which $\sysn{D}$ will not raise an alarm is infeasible. Then, combining~\cref{ass:random_oracle} and~\cref{ass:key_secrecy}, the best strategy for the adversary $\sysn{A}$ is to pick the altered digest $\recieved{\breve{\mf{h}}}_{i}(t)$ uniformly from $\setn{B}_{L}$. In the best case scenario for the adversary, only a single component $\mf{y}_{i}(t)$ needs to be compromised to launch a successful attack of length $T$. Given that a total of $L$ digits need to be correct, combined with the look-ahead window of size $\windowsize$, this amounts to a binomial distribution with success probability $2^{-L}$ and $\windowsize$ trails. For a successful attack, the former needs to succeed $T$ consecutive times, leading to
    \begin{equation}
        (1 - (1 - 2^{-\nbitswm})^{\windowsize})^{\attacklength} \textcomma{}
    \end{equation}
    which is the probability of a stealthy attack of length $T$.
    % Note that
    % \FIXME{Evidently, under the assumption that H is computationally indistinguishable from uniform, the best the adversary can do, is select the bits from a uniform distribution (or, leave the bits as is, or, put them all to zero). As such, the derived probability arises.}
    % For a false data injection attack, suppose that the modified measurement $\recieved{\vctr{y}}(t) \neq \vctr{y}'(t)$. As a result, some of the digits in the base 2 representation of $\vctr{y}'(t)$ will be different, and due to the properties of the hashing function $H$, this will result in a different $\mf{c}(t)$, which will raise an alarm at the detector. Similarly, for a replay attack, if the adversary tries to transmit $\recieved{\vctr{y}(t)} = \vctr{y}(t - T)$ for some $T > 0$, whilst $\mf{z}(t)$ will be the same, $\mf{key}_{\tn{d}}(t) \neq \mf{key}_{\tn{d}}(t - T)$, and as such, the authentication code $\mf{c}(t - T) \neq H(\mf{z}(t -T) , \mf{key}_{\tn{d}}(t) )$, meaning the attack will be detected at the detector.\FIXME{This proof is super wobbly/shakey}
\end{proof}

%% file: Sections/arXiv/proof_prop_2.tex
\begin{proof}
    From~\cref{eqn:error_two_independent_source}, we can write
    \begin{subequations}\begin{align}
        \vnorm{\vctr{e}(t)}_{\ell_{\infty}} &= \vnorm{ \vctr{x}(t) - (Q(\vctr{x}(t)) + \vctr{d}(t) ) }_{\ell_{\infty}} \\ 
        &\leq \vnorm{\vctr{x}(t) - Q(\vctr{x}(t))}_{\ell_{\infty}} + \vnorm{\vctr{d}(t) }_{\ell_{\infty}} = \ubar{e} \textperiod{}
    \end{align}\end{subequations}
    It it well-known that for uniform quantization $\vert{}x - Q(x)\vert{} \leq 2^{-(\nbitsm + 1)}$~\cite{franklinDigitalControlDynamic2002}. As for the bound $\vnorm{\vctr{d}(t) }_{\ell_{\infty}}$, note that the worst-case error occurs when the last $L$ bits being equal are all flipped (either from all-zero to all-ones, or vice versa). This difference implies $\vnorm{\vctr{d}(t) }_{\ell_{\infty}} \leq 2^{-(\nbitsm - L)} - 2^{-\nbitsm}$, meaning $\ubar{e} = 2^{-(\nbitsm + 1)} + 2^{-(\nbitsm - L)} - 2^{-\nbitsm}$. We can write the closed-loop system as
    % outlined in~\cref{fig:diagram_networked_control_with_watermarking}, as $\vctr{u}(t) = -\mtrx{K} Q'(\vctr{x}(t) ) = - \mtrx{K}(\vctr{x}(t) + \vctr{e}(t))$ from~\cref{eqn:controller}, we can write the closed-loop system as
    %
    \begin{equation}\label{eqn:closed_loop_dynamics_noise_driven}
        \vctr{x}(t + 1) = \mtrx{A}_{\tn{cl}} \vctr{x}(t) + \begin{bmatrix} -\mtrx{B} \mtrx{K} & \mtrx{B}_{\tn{w}} \end{bmatrix} \begin{bmatrix}
            \vctr{e}(t) \\ \vctr{w}(t)
        \end{bmatrix} \textperiod{}
    \end{equation}
    % 
    %which is driven both by the process noise $\vctr{w}(t)$ and the combined quantization and coding error $\vctr{e}(t)$. 
    %
    Importantly, $\vnorm{\vctr{e}(t) }_{\ell_{\infty}} \leq \ubar{e}$ and $\vnorm{\vctr{w}(t) }_{\ell_{\infty}} \leq \ubar{w}$, meaning we can invoke~\cite[Theorem 1]{kafashConstrainingAttackerCapabilities2018}, which gives us the~\ac{BMI} in~\eqref{eqn:bmi_ellipsoid} corresponding to~\cref{eqn:closed_loop_dynamics_noise_driven}. The solution to~\eqref{eqn:bmi_ellipsoid} is the minimum volume ellipsoid $\mtrx{P}$, and the set $\{\vctr{x}(t) \in \R^{\dimx} \forwhich{} \vctr{x}(t)^{\T} \mtrx{P} \vctr{x}(t) \leq 1 \}$. Finally, we note that $\vctr{x}(t) = \mtrx{Q}^{-\frac{1}{2}} \vctr{z}(t)$ from~\cref{eqn:virtual_performance_output}, and as such the scaling factor is equal to
    \begin{equation}
        \mathrm{det}(\mtrx{Q}^{-\frac{1}{2}}) = \textstyle\frac{1}{\mathrm{det}(\mtrx{Q}^{\frac{1}{2}})} = \textstyle\frac{1}{\sqrt{\mathrm{det}(\mtrx{Q})}} \textcomma{}
    \end{equation}
    where the latter follows from $\mtrx{Q} \succ 0$.
\end{proof}

%% file: Sections/arXiv/corollary.tex
In the special noise-free case, meaning $\ubar{w} = 0$, a closed-form expression for $\rho_{\tn{FX}}(\nbitsm, L)$ can be found.

\begin{corollary}
    Consider a fixed-point quantizer $\breve{Q}$ with $\nbitsm$ decimal bits, the $L$-bit coding scheme, and suppose $\ubar{w} = 0$. Then, $\rho_{\tn{FX}}(\nbitsm, L) \propto \ubar{e}$, which implies $\rho_{\tn{FX}}(\nbitsm, L) \approx 2^{-(\nbitsm - L)}$.
\end{corollary}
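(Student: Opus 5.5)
The plan is to specialise the BMI of \cref{thm:volume_ellipsoid} to $\ubar{w}=0$ and exploit a scaling symmetry.

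\emph{Reduction.} With $\ubar{w}=0$, the disturbance channel $\mtrx{B}_{\tn{w}}\vctr{w}(t)$ vanishes. The closed loop \cref{eqn:closed_loop_dynamics_noise_driven} is then driven only by $\vctr{e}(t)$, which satisfies $\vnorm{\vctr{e}(t)}_{\ell_{\infty}}\leq\ubar{e}$. In \eqref{eqn:bmi_ellipsoid} I therefore drop the $\vctr{w}$-block: I take $\bar{\mtrx{B}}=-\mtrx{B}\mtrx{K}$ and $\mtrx{R}=\ubar{e}^{-2}\mtrx{I}$. This avoids the singular entry $1/\ubar{w}^{2}$; the reduced BMI is exactly \cite[Theorem 1]{kafashConstrainingAttackerCapabilities2018} applied to the noise-free system.

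\emph{Scaling.} Write $\mtrx{P}=\ubar{e}^{-2}\tilde{\mtrx{P}}$. The top-left block and the off-diagonal blocks of \eqref{eqn:bmi_ellipsoid_bmi} are linear in $\mtrx{P}$, so they pick up the factor $\ubar{e}^{-2}$. The bottom-right block becomes $(1-\alpha)\ubar{e}^{-2}\mtrx{I}-\ubar{e}^{-2}\bar{\mtrx{B}}^{\T}\tilde{\mtrx{P}}\bar{\mtrx{B}}$, so it also carries the same factor. The whole matrix is therefore $\ubar{e}^{-2}$ times a matrix that depends only on $(\alpha,\tilde{\mtrx{P}})$ and not on $\ubar{e}$. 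Positive semidefiniteness is preserved under multiplication by the positive scalar $\ubar{e}^{-2}$, so $(\alpha,\mtrx{P})$ is feasible if and only if $(\alpha,\tilde{\mtrx{P}})$ is feasible for the normalized BMI with $\ubar{e}=1$. The objective satisfies $-\log\det\mtrx{P}=2n\log\ubar{e}-\log\det\tilde{\mtrx{P}}$, which differs from the normalized objective by a constant. Hence the minimisers correspond, with $\mtrx{P}^{\star}=\ubar{e}^{-2}\tilde{\mtrx{P}}^{\star}$, where $\tilde{\mtrx{P}}^{\star}$ depends only on $\mtrx{A}_{\tn{cl}}$ and $\mtrx{B}\mtrx{K}$.

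\emph{Conclusion.} Substituting into \cref{thm:volume_ellipsoid} gives
\begin{equation*}
\rho_{\tn{FX}}(\nbitsm,L)=\frac{V(n)}{\sqrt{\det(\ubar{e}^{-2}\tilde{\mtrx{P}}^{\star})\det\mtrx{Q}}}=\frac{V(n)}{\sqrt{\det\tilde{\mtrx{P}}^{\star}\det\mtrx{Q}}}\,\ubar{e}^{\,n}.
\end{equation*}
So $\rho_{\tn{FX}}$ is proportional to $\ubar{e}^{n}$, with a constant independent of $\nbitsm$ and $L$. Each semi-axis of the ellipsoid, and hence its linear size, scales as $\ubar{e}$, and that linear sense is how I read the stated proportionality. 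Finally, \eqref{eqn:upper_bound_quantization_error} gives $\ubar{e}\approx2^{-(\nbitsm-L)}$, which yields the stated approximation.

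\emph{Main obstacle.} The main difficulty is interpretive rather than computational. The volume metric scales as $\ubar{e}^{n}$, not $\ubar{e}$, so the corollary is literally true for the linear size $\det(\cdot)^{-1/(2n)}$, or for $n=1$. I would state this explicitly, say which reading is meant, and make clear that the scaling argument gives exact homogeneity in either case.
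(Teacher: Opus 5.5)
Your argument is essentially the paper's: it also drops the $\vctr{w}$-channel, passes via \cite[Remark~1]{kafashConstrainingAttackerCapabilities2018} to the normalized BMI in $\hat{\mtrx{P}}$ with $\mtrx{P} = \ubar{e}^{-2}\hat{\mtrx{P}}$, and reads off the determinant. Your argument that the feasible sets correspond and the objective shifts by a constant justifies more cleanly the paper's brief claim that $\hat{\mtrx{P}}$ is constant.

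Your exponent is right and the paper's is not. The paper writes $\sqrt{\det(\ubar{e}^{-2}\hat{\mtrx{P}})} = \ubar{e}^{-1}\sqrt{\det\hat{\mtrx{P}}}$, but this equals $\ubar{e}^{-\dimx}\sqrt{\det\hat{\mtrx{P}}}$. So the volume metric scales as $\ubar{e}^{\dimx}\approx 2^{-\dimx(\nbitsm - L)}$, and the stated $\rho_{\tn{FX}}\propto\ubar{e}$ holds literally only for $\dimx = 1$ or for the linear size of the ellipsoid, exactly as you flag.
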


\begin{proof}
    Note that $\ubar{w} = 0$ means the closed-loop dynamics are given by
    \begin{equation}
        \vctr{x}(t + 1) = \mtrx{A}_{\tn{cl}} \vctr{x}(t)  -\mtrx{B} \mtrx{K} \vctr{e}(t) \textcomma{}
    \end{equation}
    where $\vnorm{\vctr{e}(t)}_{\ell_{\infty}} \leq \ubar{e} $ implies $ \vert{}e_{i}(t) \vert{} \leq \ubar{e}$ for all $i$. As such, all input bounds are equal, and we can leverage~\cite[Remark~1]{kafashConstrainingAttackerCapabilities2018}, where we replace~\cref{eqn:bmi_ellipsoid_bmi} by
    \begin{equation*}
        \begin{bmatrix}
            \alpha \ccdot \hat{\mtrx{P}} - \mtrx{A}_{\tn{cl}}^{\T} \hat{\mtrx{P}} \mtrx{A}_{\tn{cl}} &\mtrx{A}_{\tn{cl}}^{\T} \hat{\mtrx{P}} \mtrx{B}\mtrx{K} \\
            (\mtrx{B} \mtrx{K})^{\T} \hat{\mtrx{P}} \mtrx{A}_{\tn{cl}} & (1 - \alpha) \ccdot \mtrx{I} - (\mtrx{B} \mtrx{K})^{\T} \hat{\mtrx{P}} \mtrx{B} \mtrx{K}
        \end{bmatrix} \succeq 0 \textcomma{}
    \end{equation*}
    with decision variable $\hat{\mtrx{P}} \succ 0$. The minimal volume invariant ellipsoid $\mtrx{E}$ is given by
    \begin{subequations}\label{eqn:vol_ellps_simple_explicit}\begin{align}
        \label{eqn:vol_ellps_simple_explicit_propto_first}
        \rho_{\tn{FX}}(\nbitsm, L) &\propto \textstyle\frac{1}{\sqrt{\mathrm{det}(\mtrx{P})}} = \textstyle\frac{1}{\sqrt{\mathrm{det}(\frac{1}{\ubar{e}^{2}} \ccdot \vphantom{\big(} \smash{\hat{\mtrx{P}}})}} \\
        \label{eqn:vol_ellps_simple_explicit_propto}
        &= \textstyle\frac{1}{ (\frac{1}{\ubar{e}} \ccdot \sqrt{\mathrm{det}(\smash{\hat{\mtrx{P}}})})} \propto \ubar{e} \approx 2^{-(\nbitsm - L)}
    \end{align}\end{subequations}
    where~\cref{eqn:vol_ellps_simple_explicit_propto_first} follows from $\mtrx{Q}$ being independent of the values $\nbitsm$ and $\nbitswm$, and~\cref{eqn:vol_ellps_simple_explicit_propto} follows from $\hat{\mtrx{P}}$ being constant.
\end{proof}

%% file: Sections/arXiv/proof_prop_3.tex
\begin{proof}
    % Consider the floating-point quantizer as discussed in~\cref{sec:quantization} and the~\acp{LSB} authentication scheme as discussed in~\cref{sec:lsb_coding}, we can write the error induces as
    Note that
    \begin{subequations}\begin{align}
        \vnorm{\vctr{e}(t)}_{\ell_{\infty}} &= \vnorm{ \vctr{x}(t) - (Q(\vctr{x}(t)) + \vctr{d}(t) ) }_{\ell_{\infty}} \\ 
        &\leq \vnorm{\vctr{x}(t) - Q(\vctr{x}(t))}_{\ell_{\infty}} + \vnorm{\vctr{d}(t) }_{\ell_{\infty}} \textperiod{}
    \end{align}\end{subequations}
    The error due to quantization $\vnorm{\vctr{x}(t) - Q(\vctr{x}(t))}_{\ell_{\infty}}$ can be upper bounded as a multiplicative error $\vnorm{\vctr{x}(t) - Q(\vctr{x}(t))}_{\ell_{\infty}} \leq \epsilon \ccdot \vnorm{\vctr{x}(t)}_{\ell_{\infty}} + c \approx \epsilon \ccdot \vnorm{\vctr{x}}_{\ell_{\infty}}$\FIXME{removed ~\cite{millerQuantizerEffectsSteadystate1989} for space}, where the constant term is negligible\footnotemark{}~\cite{kontroFloatingpointArithmeticSignal1992}. Let $\nu_{k}$ denote the height of the $k$-th interval\FIXME{This height is not specified...} (see~\cref{fig:graph_mapping_quantization})\FIXME{Nothing to see here....}, which is given by~\cite{nikolicPerformanceAnalysisTwo2025}\FIXME{Also kind of appears in~\cite{sripadQuantizationErrorsFloatingpoint1978}}\FIXME{See if we can remove~\cite{nikolicPerformanceAnalysisTwo2025} for space...}
    \begin{equation}\label{eqn:levels_of_floating}
        \nu_{k} = 2^{k - \bias - \nbitsm} \textperiod{}
    \end{equation}
    Combining~\cref{eqn:levels_of_floating} with the spacing $\Delta = 2^{\nbitsm} \ccdot \smallesstepsize_{\tn{FL}}$~\cite{widrowQuantizationNoiseRoundoff2008}, we find that the rise-over-run $\epsilon$ is given by
    %
    % \begin{subequations}\begin{align}
    %     \epsilon &= \frac{\nu_{k+1} - \nu_{k}}{\Delta_{k+1} - \Delta_{k}} = \frac{2^{k + 1 - \bias- \nbitsm} - 2^{k - \bias- \nbitsm}}{(2 \ccdot \Delta_{k}) - \Delta_{k}} \\
    %     &= \frac{(2^{k + 1} - 2^{k}) \ccdot 2^{- \bias- \nbitsm}}{2^{k} \ccdot \Delta } = \frac{2^{k} \ccdot (2^{1} - 1) \ccdot 2^{- \bias- \nbitsm}}{2^{k} \ccdot \Delta} \\
    %     &= \frac{2^{- \bias- \nbitsm}}{\Delta} = \frac{2^{- \bias}\ccdot 2^{-\nbitsm} }{2^{\nbitsm} \ccdot q} = \frac{2^{- \bias}\ccdot 2^{-\nbitsm} }{2^{\nbitsm} \ccdot (2^{-\nbitsm} \ccdot 2^{-\bias+ 1})} \\
    %     &= \frac{2^{- \bias}\ccdot 2^{-\nbitsm} }{2^{\nbitsm} \ccdot 2^{-\nbitsm} \ccdot 2^{-\bias}\ccdot 2^{1}} 
    %     = \frac{1}{2^{\nbitsm} \ccdot 2^{1}}
    %     = 2^{-(\nbitsm + 1)} \textperiod{}
    % \end{align}\end{subequations}
    \begin{equation}
        \epsilon = \frac{\nu_{k+1} - \nu_{k}}{2^{k+1} \ccdot \Delta - 2^{k} \ccdot \Delta} = \frac{2^{- \bias- \nbitsm}}{\Delta} = 2^{-(\nbitsm + 1)} \textperiod{}
    \end{equation}
    For the coding error $\vctr{d}(t)$, we similarly find $\vnorm{\vctr{d}(t) }_{\ell_{\infty}} \leq \eta \ccdot \vnorm{\vctr{x}(t)}_{\ell_{\infty}}$. The error bound $\eta$ is determined by the first bit which is not a part of the $L$~\acp{LSB}, meaning that $\eta = 2^{-(\nbitsm - L)} - 2^{-\nbitsm}$~\cite{kontroFloatingpointArithmeticSignal1992}. We can write the total error as 
    \begin{equation}\label{eqn:gain_inequality_watermarking}
        \vnorm{\vctr{e}(t)}_{\ell_{\infty}} \leq (\epsilon + \eta) \ccdot \vnorm{\vctr{x}(t)}_{\ell_{\infty}} \textperiod{}
    \end{equation}
    The closed-loop system is given by
    \begin{equation}
        \vctr{x}(t+1) = \mtrx{A}_{\tn{cl}} \vctr{x}(t) - \mtrx{B} \mtrx{K} \mtrx{\Psi}(\vctr{x}(t) ) + \mtrx{B}_{\tn{w}} \vctr{w}(t) \textcomma{}
    \end{equation}
    where $\mtrx{\Psi}: \R^{\dimx} \to \R^{\dimx}$ is a static nonlinearity (see~\cref{fig:diagram_small_gain_watermarking_feedback_interconnection_with_disturbance})\FIXME{Removed~\cite{picassoStabilizationDiscretetimeQuantized2008} for space}. 

    \begin{figure}[htb]
        \centering%
        \input{Diagrams/diagram_small_gain_watermarking_feedback_interconnection_with_disturbance}
        \caption{Error propagation with floating-point}\label{fig:diagram_small_gain_watermarking_feedback_interconnection_with_disturbance}
    \end{figure}

    Noting the the $\ell_{\infty}$-induced norm of an~\acs{LTI} system is its $L_{1}$ norm, we can write~\cref{eqn:gain_inequality_watermarking} as
    \begin{subequations}\label{eqn:float_inf_bound_x}\begin{align}
        \vnorm{\vctr{x}(t)}_{\ell_{\infty}} &\leq \gamma_{E} \ccdot \vnorm{\vctr{e}(t)}_{\ell_{\infty}} +  \gamma_{W} \ccdot \vnorm{\vctr{w}(t)}_{\ell_{\infty}} \\
        &\!\!\!\!\!\leq (\epsilon + \eta) \ccdot \gamma_{E} \ccdot \vnorm{\vctr{x}(t)}_{\ell_{\infty}} +  \gamma_{W} \ccdot \vnorm{\vctr{w}(t)}_{\ell_{\infty}} \textcomma{}
    \end{align}\end{subequations}
    where $\gamma_{E} = \vnorm{\mtrx{E}(z)}_{1}$ and $\gamma_{W} = \vnorm{\mtrx{W}(z)}_{1}$. Here, $\mtrx{E}(z) = -(z \ccdot \mtrx{I} - \mtrx{A}_{\tn{cl}})^{\shortminus{}1} \mtrx{B} \mtrx{K}$ and $\mtrx{W}(z) = (z \ccdot \mtrx{I} - \mtrx{A}_{\tn{cl}})^{\shortminus{}1} \mtrx{B}_{\tn{w}}$ are the transfer function matrices from $\vctr{e}(t)$ to $\vctr{x}(t)$ and $\vctr{w}(t)$ to $\vctr{x}(t)$, respectively. Substituting~\cref{eqn:virtual_performance_output} into~\eqref{eqn:float_inf_bound_x} gives
    \begin{equation}
        \vnorm{\vctr{z}(t)}_{\ell_{\infty}} \leq \frac{\vnorm{\mtrx{Q}}_{\infty} \ccdot \vnorm{\mtrx{W}(z)}_{1}}{1 - (\epsilon + \eta) \ccdot \vnorm{\mtrx{E}(z)}_{1}} \ccdot \vnorm{\vctr{w}(t)}_{\ell_{\infty}} \textperiod{}
    \end{equation}
    Recognizing from~\cref{eqn:metric_floating_gain} that $\vnorm{\mtrx{G}(z)}_{1}$ is the $\ell_{\infty}$-induced gain from $\vctr{w}(t)$ to $\vctr{z}(t)$ proves the result.
\end{proof}

\footnotetext{More precisely, $c = \frac{\smallesstepsize_{\tn{FL}}}{2} \ll \epsilon$. Thus, the upper bound does not hold whenever $x \leq c$.}

%% file: Diagrams/diagram_small_gain_watermarking_feedback_interconnection_with_disturbance.tex
\begin{tikzpicture}[font=\tikzfont]
    % FROM: https://tex.stackexchange.com/questions/111660/intersection-of-2-lines-not-really-connected-in-tikz
    \tikzset{
        connect/.style args={(#1) to (#2) over (#3) by #4}{
            insert path={
                let \p1=($(#1)-(#3)$), \n1={veclen(\x1,\y1)}, 
                \n2={atan2(\y1,\x1)}, \n3={abs(#4)}, \n4={#4>0 ?180:-180}  in 
                (#1) -- ($(#1)!\n1-\n3!(#3)$) 
                arc (\n2:\n2+\n4:\n3) -- (#2)
            }
        },
    }
    % ------------ ELEMENTS ------------

    % Plant
    \node [draw, fill=white, minimum width=6.5\dynamicswidth, minimum height=1.05\dynamicsheigth, inner sep=0pt] (plant) {$\vctr{x}(t+1) = \mtrx{A}_{\tn{cl}} \vctr{x}(t) - \mtrx{B} \mtrx{K} \vctr{e}(t) + \mtrx{B}_{\tn{w}} \vctr{w}(t) $};
    \node [above=\headersep of plant] (plant_lab) {Plant};

    % Splits signals above network
    \coordinate [below left=0.2\verticalsep and 1.5\verticalsependpoints of plant] (splitUL);
    \coordinate [below right=0.2\verticalsep and 1.5\verticalsependpoints of plant] (splitUR);

    % Splits signals below network
    \coordinate [below=0.1\networkspacingsep of splitUL] (splitLL);
    \coordinate [below=0.1\networkspacingsep of splitUR] (splitLR);

    % Controller
    \node [draw, fill=white, minimum width=0.7\dynamicswidth, minimum height=0.7\dynamicsheigth, outer sep=0pt] at ($(splitLL)!0.5!(splitLR) + (0mm,-3mm)$) (controller) {$\nbitsm$};
    \node [above=-1mm of controller.north] (controller_lab) {Quantizer};

    % Sum
    \node [draw, fill=white, circle, minimum size=\summationsize, left=\horizontalblocksep of controller] (sum) {};

    % Watermarker
    \node[draw, fill=white, minimum width=0.7\dynamicswidth, minimum height=0.7\dynamicsheigth, below=5mm of controller, outer sep=0pt] (watermarker) {$L$};
    \node [above=-1mm of watermarker.north] (watermarker_lab) {Coding};]

    % ------------ PATHS ------------
    
    \draw (plant.357) -- ($(plant.357) + (1\verticalsependpoints, 0mm)$) [->] |- node [pos=0.25, right] {$\vctr{x}(t)$} (controller);
    \draw (sum) -| node[pos=0.75, left] {$\vctr{e}(t)$} ($(plant.183) + (-1\verticalsependpoints, 0mm)$) [->] -- (plant.183);
    \draw[->] (controller.west) -- (sum);
    \draw[->] (watermarker) -| (sum);
    \draw[->] ($(controller) + (15mm,0mm)$) |- (watermarker);
    \draw[->] ($(plant.177) + (-10mm,0mm)$) -- node[above] {$\vctr{w}(t)$} (plant.177);
    \draw[->] (plant.3) -- node[above] {$\vctr{z}(t)$} ($(plant.3) + (10mm,0mm)$);

    % ------------ BACKGROUND ------------

    \begin{scope}[on background layer]
        \draw[draw=tableau3, fill=tableau3, fill opacity=0.5, dashed] ($(controller.center) + (-20mm,6mm)$) rectangle ($(controller.center) + (20mm,-12.5mm)$);
        \node[] at ($(controller)!0.5!(watermarker) + (-23mm,0mm)$) {$\color{tableau3}\vctr{\Psi}$};
    \end{scope}
    
\end{tikzpicture}

%% file: Tables/table_metric_computation.tex
\begin{tabular}{cCCCCCCCCCCCC}
    \toprule%
    \multicolumn{2}{c}{$L$} & 0 & 1 & 2 & 3 & 4 & 5 & 6 & 7 & 8 & 9 & 10 \\
    \midrule%
    \multirow{2}{*}{\texttt{Q7.8}}& \rho_{\tn{FX}} & 16.8 \ccdot 10^{\shortminus 3} & 12.6 \ccdot 10^{\shortminus2} & 31.3 \ccdot 10^{\shortminus2} & \coloruline{tableau3}{4.35} & \coloruline{tableau3}{14.1} & \coloruline{tableau3}{35.4} & \coloruline{tableau3}{97.2} & \coloruline{tableau3}{229} &  \coloruline{tableau3}{935} & & \\
     & J & 5.1 \ccdot 10^{\shortminus 5} & 4.58 \ccdot 10^{\shortminus 4} & 2.49 \ccdot 10^{\shortminus 3} & 1.14 \ccdot 10^{\shortminus 2} & 4.89 \ccdot 10^{\shortminus 2}  & 0.20 & 0.82 & 3.31 & 13.27 & & \\[8pt]
     \multirow{2}{*}{\text{Half-precision}}& \rho_{\tn{FL}} & 1.2009 & 1.2026 & 1.2060 & 1.213 & 1.227 & 1.256 & 1.319 & 1.47 & 1.89 & 4.43 & \coloruline{tableau3}{13.3} \\
     & J \ccdot 10^{7} & 95.3 \ccdot 10^{\shortminus2} & 95.3 \ccdot 10^{\shortminus2} & 95.3 \ccdot 10^{\shortminus2} & 95.3 \ccdot 10^{\shortminus2} & 95.4 \ccdot 10^{\shortminus2} & 0.96 & 0.97 & 1.02 & 1.25 & 2.60 & 13.8 \\
    \bottomrule%  
\end{tabular}

%% file: Sections/arXiv/proof_prop_4.tex
\begin{proof}
    According to the closed-loop dynamics given by~\eqref{eqn:closed_loop_dynamics_noise_driven}, the stationary distribution, if it exists, must satisfy
    \begin{subequations}\begin{align}
        \mathbb{E}[\vctr{x}(t)\vctr{x}(t)^{\T}] &= \mathbb{E}[(\mtrx{A}_{\tn{cl}}\vctr{x}(t) + \mtrx{B} \mtrx{K} \mtrx{e}(t) + \mtrx{B}_{\tn{w}} \vctr{w}(t) ) \nonumber\\
        &\!\!\!\!\!\!\!(\mtrx{A}_{\tn{cl}}\vctr{x}(t) + \mtrx{B} \mtrx{K} \mtrx{e}(t) + \mtrx{B}_{\tn{w}} \vctr{w}(t) )^{\T}] \implies \\
        \mtrx{\Sigma}_{\tn{x}} = \mtrx{A}_{\tn{cl}} \mtrx{\Sigma}_{\tn{x}} &\mtrx{A}_{\tn{cl}}^{\T} + \mtrx{B} \mtrx{K} \mtrx{\Sigma}_{\tn{e}} (\mtrx{B} \mtrx{K})^{\T} + \mtrx{B}_{\tn{w}} \mtrx{\Sigma}_{\tn{w}} \mtrx{B}_{\tn{w}}^{\T} \textcomma{}
    \end{align}\end{subequations}
    which is the generalized Lyapunov equation given by~\cref{eqn:generalized_lyaponuv}. For fixed-point, the quantization error and coding error (due to~\cref{ass:random_oracle}) can both be modeled as independent uniformly distributed noise. Combining~\cref{eqn:quantization_error_statistics} and~\cref{eqn:upper_bound_quantization_error}, we have
    \begin{equation}
        \mtrx{\Sigma}_{\tn{e}} \approx \frac{1}{12} \ccdot ( 2^{-2 \ccdot(\nbitsm + 1) } + (2^{-(\nbitsm - L)})^{2} )\ccdot \mtrx{I}  \approx \frac{1}{12} \ccdot \ubar{e}^{2} \ccdot \mtrx{I} \textperiod{}
    \end{equation}
    Similarly, for floating-point,~\cref{ass:random_oracle} and~\cref{eqn:quantization_error_statistics} lead to
    \begin{equation}
        \mtrx{\Sigma}_{\tn{e}} \approx (0.180 \ccdot 2^{-2 \ccdot \nbitsm } + \frac{1}{12} \ccdot 2^{-2\ccdot(\nbitsm - L)}) \ccdot \mtrx{\Sigma}_{\tn{x}} \textperiod{}
    \end{equation}
    Finally, substituting $\vctr{z}(t) = \mtrx{Q}^{-\frac{1}{2}} \vctr{x}(t)$ into~\cref{eqn:distribution_cost}, we get $J(\nbitsm, \nbitswm) = \mathbb{E}[\sum_{t=0}^{\infty} \vctr{x}(t) \mtrx{Q} \vctr{x}(t)^{\T}] = \mathrm{trace}(\mtrx{Q} \mtrx{\Sigma}_{\tn{x}})$.
\end{proof}

%% file: Graphs/graph_attacks_synchronization.tex
\begin{tikzpicture}
    \pgfplotsset{%
        every axis/.append style={%
            yshift=-3mm
        },
        % FROM: ChatGPT 4o
        yticklabel style={/pgf/number format/fixed}, % Keeps y-ticks like 1.0, 1.2
        scaled y ticks=false
        % scaled y ticks=base 10:-3, % Moves 10^3 to the axis label area
        % ytick scale label code/.code={$\ccdot 10^{3}$}, % Formats the top label
    }
    % --- GLOBAL VARS ----
    \pgfmathsetmacro{\attackstartreplay}{20}
    \pgfmathsetmacro{\attackdetectionreplay}{20}
    \pgfmathsetmacro{\attackstartfdi}{21}
    \pgfmathsetmacro{\attackdetectionfdi}{23}
    \newcommand{\axiswave}[1]{\draw ($(#1) + (0mm,0.8mm)$) [bend left=30] to [bend right] (#1) to [bend left=30] ($(#1) + (0mm,-0.8mm)$);}
    % === STATE NORM (replay) ===
    \begin{axis}[%
        name=x_k_norm,
        height=30mm,
        width=40mm,
        yshift=0mm,
        xmin = 0,
        xmax = 33,
        ymin = 0,
        ymax = 0.5,
        xticklabels={\empty},
        axis y line*=left,
        legend style={%
            legend entries={$\vnorm{\vctr{x}(t)}$}
        }
    ]   
        % Add attack instance
        \draw[tableau2, dashdotted, draw opacity=0.5] (axis cs:\attackstartreplay, \pgfkeysvalueof{/pgfplots/ymin}) -- (axis cs:\attackstartreplay, \pgfkeysvalueof{/pgfplots/ymax});
        % Add detection instance
        \draw[tableau5, draw opacity=0.5] (axis cs:\attackdetectionreplay, \pgfkeysvalueof{/pgfplots/ymin}) -- (axis cs:\attackdetectionreplay, \pgfkeysvalueof{/pgfplots/ymax});
        % Add state norm
        \addplot[draw=tableau1] table [x=k, y=x_k_norm, col sep=semicolon] {Data/data_replay_attack_dt.csv};
    \end{axis}
    % Add discontiniuty waves' at end of plot
    \axiswave{x_k_norm.north east}
    \axiswave{x_k_norm.south east}
    % === OUTPUTS (replay) ===
    \begin{axis}[%
        name=y_k,
        at={(x_k_norm.south west)},
        anchor=north west,
        height=34mm,
        width=40mm,
        xmin = 0,
        xmax = 33,
        ymin = -0.4,
        ymax = 0.4,
        xticklabels={\empty},
        axis y line*=left,
        legend style={%
            legend entries={$\recieved{y}_{1}(t)$,$\recieved{y}_{2}(t)$}
        },
        legend columns=2
    ]   
        % Add attack instance
        \draw[tableau2, dashdotted, draw opacity=0.5] (axis cs:\attackstartreplay, \pgfkeysvalueof{/pgfplots/ymin}) -- (axis cs:\attackstartreplay, \pgfkeysvalueof{/pgfplots/ymax});
        % Add detection instance
        \draw[tableau5, draw opacity=0.5] (axis cs:\attackdetectionreplay, \pgfkeysvalueof{/pgfplots/ymin}) -- (axis cs:\attackdetectionreplay, \pgfkeysvalueof{/pgfplots/ymax});
        % Add output y_k_1
        \addplot[draw=tableau2, const plot] table [x=k, y=y_k_1, col sep=semicolon] {Data/data_replay_attack_dt.csv};
        % Add output y_k_2
        \addplot[draw=tableau3, const plot] table [x=k, y=y_k_2, col sep=semicolon] {Data/data_replay_attack_dt.csv};
    \end{axis}
    % Add discontiniuty waves' at end of plot
    \axiswave{y_k.north east}
    \axiswave{y_k.south east}
    % === ATTACK (replay) ===
    \begin{axis}[%
        name=a_k,
        at={(y_k.south west)},
        anchor=north west,
        height=34mm,
        width=40mm,
        xmin = 0,
        xmax = 33,
        ymin = -0.2,
        ymax = 0.6,
        xticklabels={\empty},
        axis y line*=left,
        legend style={%
            legend entries={$a_{1}(t)$,$a_{2}(t)$}
        },
        legend columns=2
    ]   
        % Add attack instance
        \draw[tableau2, dashdotted, draw opacity=0.5] (axis cs:\attackstartreplay, \pgfkeysvalueof{/pgfplots/ymin}) -- (axis cs:\attackstartreplay, \pgfkeysvalueof{/pgfplots/ymax});
        % Add detection instance
        \draw[tableau5, draw opacity=0.5] (axis cs:\attackdetectionreplay, \pgfkeysvalueof{/pgfplots/ymin}) -- (axis cs:\attackdetectionreplay, \pgfkeysvalueof{/pgfplots/ymax});
        % Add output a_y_k_1
        \addplot[draw=tableau4, const plot] table [x=k, y=a_y_k_1, col sep=semicolon] {Data/data_replay_attack_dt.csv};
        % Add output a_y_k_2
        \addplot[draw=tableau5, const plot] table [x=k, y=a_y_k_2, col sep=semicolon] {Data/data_replay_attack_dt.csv};
    \end{axis}
    % Add discontiniuty waves' at end of plot
    \axiswave{a_k.north east}
    \axiswave{a_k.south east}
    % === DETECTION (replay) ===
    \begin{axis}[%
        name=g_k,
        at={(a_k.south west)},
        anchor=north west,
        height=22.3mm,
        width=40mm,
        xmin = 0,
        xmax = 33,
        ymin = -0.2,
        ymax = 1.3,
        ytick={0,1},
        yticklabels={0,1},
        axis y line*=left,
        legend style={%
            legend entries={$g(t)$}
        }
    ]   
        % Add attack instance
        \draw[tableau2, dashdotted, draw opacity=0.5] (axis cs:\attackstartreplay, \pgfkeysvalueof{/pgfplots/ymin}) -- (axis cs:\attackstartreplay, \pgfkeysvalueof{/pgfplots/ymax});
        % Add detection instance
        \draw[tableau5, draw opacity=0.5] (axis cs:\attackdetectionreplay, \pgfkeysvalueof{/pgfplots/ymin}) -- (axis cs:\attackdetectionreplay, \pgfkeysvalueof{/pgfplots/ymax});
        % Add detection signal g_k
        \addplot[draw=tableau8, const plot] table [x=k, y=g_k, col sep=semicolon] {Data/data_replay_attack_dt.csv};
    \end{axis}
    % Add discontiniuty waves' at end of plot
    \axiswave{g_k.north east}
    \axiswave{g_k.south east}
    % === STATE NORM (FDI) ===
    \begin{axis}[%
        name=x_k_norm_fdi,
        at={(x_k_norm.north east)},
        anchor=north west,
        xshift=2mm,
        height=30mm,
        width=37mm,
        yshift=3mm,
        y axis line style={opacity=0},
        ytick=\empty,
        xmin = 5,
        xmax = 35,
        ymin = 0,
        ymax = 1,
        xtick={10,20,30},
        xticklabels={\empty},
        axis y line*=none,
    ]   
        % Add attack instance
        \draw[tableau2, dashdotted, draw opacity=0.5] (axis cs:\attackstartfdi, \pgfkeysvalueof{/pgfplots/ymin}) -- (axis cs:\attackstartfdi, \pgfkeysvalueof{/pgfplots/ymax});
        % Add detection instance
        \draw[tableau5, draw opacity=0.5] (axis cs:\attackdetectionfdi, \pgfkeysvalueof{/pgfplots/ymin}) -- (axis cs:\attackdetectionfdi, \pgfkeysvalueof{/pgfplots/ymax});
        % Add state norm
        \addplot[draw=tableau1] table [x=k, y=x_k_norm, col sep=semicolon] {Data/data_bias_injection_attack_dt.csv};
    \end{axis}
    % Add discontiniuty waves' at end of plot
    \axiswave{x_k_norm_fdi.north east}
    \axiswave{x_k_norm_fdi.south east}
    \axiswave{x_k_norm_fdi.north west}
    \axiswave{x_k_norm_fdi.south west}
    % === OUTPUTS (FDI) ===
    \begin{axis}[%
        name=y_k_fdi,
        at={(x_k_norm_fdi.south west)},
        anchor=north west,
        height=34mm,
        width=37mm,
        y axis line style={opacity=0},
        ytick=\empty,
        xmin = 5,
        xmax = 35,
        ymin = -0.7,
        ymax = 0.7,
        xtick={10,20,30},
        xticklabels={\empty},
        axis y line*=none
    ]   
        % Add attack instance
        \draw[tableau2, dashdotted, draw opacity=0.5] (axis cs:\attackstartfdi, \pgfkeysvalueof{/pgfplots/ymin}) -- (axis cs:\attackstartfdi, \pgfkeysvalueof{/pgfplots/ymax});
        % Add detection instance
        \draw[tableau5, draw opacity=0.5] (axis cs:\attackdetectionfdi, \pgfkeysvalueof{/pgfplots/ymin}) -- (axis cs:\attackdetectionfdi, \pgfkeysvalueof{/pgfplots/ymax});
        % Add output y_k_1
        \addplot[draw=tableau2, const plot] table [x=k, y=y_k_1, col sep=semicolon] {Data/data_bias_injection_attack_dt.csv};
        % Add output y_k_2
        \addplot[draw=tableau3, const plot] table [x=k, y=y_k_2, col sep=semicolon] {Data/data_bias_injection_attack_dt.csv};
    \end{axis}
    % Add discontiniuty waves' at end of plot
    \axiswave{y_k_fdi.north east}
    \axiswave{y_k_fdi.south east}
    \axiswave{y_k_fdi.north west}
    \axiswave{y_k_fdi.south west}
    % === ATTACK (FDI) ===
    \begin{axis}[%
        name=a_k_fdi,
        at={(y_k_fdi.south west)},
        anchor=north west,
        height=34mm,
        width=37mm,
        y axis line style={opacity=0},
        ytick=\empty,
        xmin = 5,
        xmax = 35,
        ymin = -0.2,
        ymax = 0.6,
        xtick={10,20,30},
        xticklabels={\empty},
        axis y line*=none
    ]   
        % Add attack instance
        \draw[tableau2, dashdotted, draw opacity=0.5] (axis cs:\attackstartfdi, \pgfkeysvalueof{/pgfplots/ymin}) -- (axis cs:\attackstartfdi, \pgfkeysvalueof{/pgfplots/ymax});
        % Add detection instance
        \draw[tableau5, draw opacity=0.5] (axis cs:\attackdetectionfdi, \pgfkeysvalueof{/pgfplots/ymin}) -- (axis cs:\attackdetectionfdi, \pgfkeysvalueof{/pgfplots/ymax});
        % Add output a_y_k_1
        \addplot[draw=tableau4, const plot] table [x=k, y=a_y_k_1, col sep=semicolon] {Data/data_bias_injection_attack_dt.csv};
        % Add output a_y_k_2
        \addplot[draw=tableau5, const plot] table [x=k, y=a_y_k_2, col sep=semicolon] {Data/data_bias_injection_attack_dt.csv};
    \end{axis}
    % Add discontiniuty waves' at end of plot
    \axiswave{a_k_fdi.north east}
    \axiswave{a_k_fdi.south east}
    \axiswave{a_k_fdi.north west}
    \axiswave{a_k_fdi.south west}
    % === DETECTION (FDI) ===
    \begin{axis}[%
        name=g_k_fdi,
        at={(a_k_fdi.south west)},
        anchor=north west,
        height=22.3mm,
        width=37mm,
        y axis line style={opacity=0},
        ytick=\empty,
        xmin = 5,
        xmax = 35,
        ymin = -0.2,
        ymax = 1.3,
        xlabel={$t$},
        xtick={11,21,31},
        xticklabels={70,80,90},
        axis y line*=none
    ]   
        % Add attack instance
        \draw[tableau2, dashdotted, draw opacity=0.5] (axis cs:\attackstartfdi, \pgfkeysvalueof{/pgfplots/ymin}) -- (axis cs:\attackstartfdi, \pgfkeysvalueof{/pgfplots/ymax});
        % Add detection instance
        \draw[tableau5, draw opacity=0.5] (axis cs:\attackdetectionfdi, \pgfkeysvalueof{/pgfplots/ymin}) -- (axis cs:\attackdetectionfdi, \pgfkeysvalueof{/pgfplots/ymax});
        % Add detection signal g_k
        \addplot[draw=tableau8, const plot] table [x=k, y=g_k, col sep=semicolon] {Data/data_bias_injection_attack_dt.csv};
    \end{axis}
    % Add discontiniuty waves' at end of plot
    \axiswave{g_k_fdi.north east}
    \axiswave{g_k_fdi.south east}
    \axiswave{g_k_fdi.north west}
    \axiswave{g_k_fdi.south west}
    % === STATE NORM (SYNC) ===
    \begin{axis}[%
        name=x_k_norm_sync,
        at={(x_k_norm_fdi.north east)},
        anchor=north west,
        xshift=2mm,
        height=30mm,
        width=40mm,
        yshift=3mm,
        ytick=\empty,
        xmin = 28,
        xmax = 58,
        ymin = 0,
        ymax = 0.6,
        xtick={10,20,30},
        xticklabels={\empty},
        axis y line*=right,
        legend style={%
            legend entries={$\Xi(t)$},
            at={($(1,1) + (-1mm,-1mm)$)},
            anchor=north east
        }
    ]   
        % Add package drops
        \addplot[draw=tableau4, ycomb, dashed, forget plot] table [x expr={\thisrow{xi_k}==1?\thisrow{k}:nan}, y expr=2, col sep=semicolon] {Data/data_synchronization_dt.csv};
        % Add state norm
        \addplot[draw=tableau1, forget plot] table [x=k, y=x_k_norm, col sep=semicolon] {Data/data_synchronization_dt.csv};
        % For legend entry
        \addplot[draw=tableau4, dashed] {-1};
    \end{axis}
    % Add discontiniuty waves' at end of plot
    \axiswave{x_k_norm_sync.north west}
    \axiswave{x_k_norm_sync.south west}
    % === OUTPUTS (SYNC) ===
    \begin{axis}[%
        name=y_k_sync,
        at={(x_k_norm_sync.south west)},
        anchor=north west,
        height=34mm,
        width=40mm,
        ytick=\empty,
        xmin = 28,
        xmax = 58,
        ymin = -0.4,
        ymax = 0.4,
        xtick={10,20,30},
        xticklabels={\empty},
        axis y line*=right
    ]   
        % Add output y_k_1
        \addplot[draw=tableau2, const plot] table [x=k, y=y_k_1, col sep=semicolon] {Data/data_synchronization_dt.csv};
        % Add output y_k_2
        \addplot[draw=tableau3, const plot] table [x=k, y=y_k_2, col sep=semicolon] {Data/data_synchronization_dt.csv};
    \end{axis}
    % Add discontiniuty waves' at end of plot
    \axiswave{y_k_sync.north west}
    \axiswave{y_k_sync.south west}
    % === ATTACK (SYNC) ===
    \begin{axis}[%
        name=a_k_sync,
        at={(y_k_sync.south west)},
        anchor=north west,
        height=34mm,
        width=40mm,
        ytick=\empty,
        xmin = 28,
        xmax = 58,
        ymin = -0.2,
        ymax = 0.6,
        xtick={10,20,30},
        xticklabels={\empty},
        axis y line*=right
    ]   
        % Add output a_y_k_1
        \addplot[draw=tableau4, const plot, domain=0:100] {0};
        % Add output a_y_k_2
        \addplot[draw=tableau5, const plot, domain=0:100] {0};
    \end{axis}
    % Add discontiniuty waves' at end of plot
    \axiswave{a_k_sync.north west}
    \axiswave{a_k_sync.south west}
    % === DETECTION (SYNC) ===
    \begin{axis}[%
        name=g_k_sync,
        at={(a_k_sync.south west)},
        anchor=north west,
        height=22.3mm,
        width=40mm,
        ytick=\empty,
        xmin = 28,
        xmax = 58,
        ymin = -0.2,
        ymax = 1.3,
        xtick={30,40,50},
        xticklabels={120,130,140},
        axis y line*=right
    ]   
        % Add package drops
        \addplot[draw=tableau4, ycomb, dashed] table [x expr={\thisrow{xi_k}==1?\thisrow{k}:nan}, y expr=2, col sep=semicolon] {Data/data_synchronization_dt.csv};
        % Add detection signal g_k
        \addplot[draw=tableau8, const plot] table [x=k, y=g_k, col sep=semicolon] {Data/data_synchronization_dt.csv};
    \end{axis}
    % Add discontiniuty waves' at end of plot
    \axiswave{g_k_sync.north west}
    \axiswave{g_k_sync.south west}
    % % Draw bounding box
    % \draw[brown] (current bounding box.south west) rectangle (current bounding box.north east);
\end{tikzpicture}

%% file: Sections/acronyms.tex
\begin{acronym}[\hspace{0.8in}] % 0.8in is also used by the nomenclature
    \acro{A/D}{\ul{a}nalog-to-\ul{d}igital}
    \acro{ACK}{\ul{ack}nowledgement}
    \acro{ARE}{\ul{a}lgebraic \ul{R}iccati \ul{e}quation}%
    \acro{BMI}{\ul{b}ilinear \ul{m}atrix \ul{i}nequality}%
    \acrodefplural{BMI}{\ul{b}ilinear \ul{m}atrix \ul{i}nequalities}%
    \acro{BITW}{\ul{b}ump-\ul{i}n-\ul{t}he-\ul{w}ire}%
    \acro{C2A}{\ul{c}ontroller-\ul{to}-\ul{a}ctuators}%
    \acro{C2S}{\ul{c}ontroller-\ul{to}-\ul{s}ensors}%
    \acro{CETC}{\ul{c}ontinuous \ul{e}vent-\ul{t}riggered \ul{c}ontrol}%
    \acro{CIA}{\ul{c}onfidentiality, \ul{i}ntegrity, \ul{a}vailability}%
    \acro{CPS}{\ul{c}yber-\ul{p}hysical \ul{s}ystem}%
    \acro{DARE}{\ul{d}iscrete \ul{a}lgebraic \ul{R}iccati \ul{e}quation}%
    \acro{DC}{\ul{d}irect \ul{c}urrent}%
    \acro{DoS}{\ul{d}enial-\ul{o}f-\ul{s}ervice}%
    \acro{DTU}{\ul{D}anmarks \ul{T}ekniske \ul{U}niversitet}%
    \acro{EISS}{\ul{e}xponentially \ul{i}nput-to-\ul{s}tate \ul{s}table}%
    % FIXME: Define pluralities such as S: stable P: stability
    \acro{ETC}{\ul{e}vent-\ul{t}riggered \ul{c}ontrol}%
    \acro{FDI}{\ul{f}alse \ul{d}ata \ul{i}njection}%
    \acro{FWL}{\ul{f}inite \ul{w}ord \ul{l}ength}%
    \acro{FWT}{\ul{f}loating \ul{w}ind \ul{t}urbine}%
    \acro{FIR}{\ul{f}inite \ul{i}mpulse \ul{r}esponse}%
    \acro{FPGA}{\ul{f}ield-\ul{p}rogrammable \ul{g}ate \ul{a}rray}%
    \acro{GES}{\ul{g}lobally \ul{e}xponentially \ul{s}table}%
    \acrodefplural{GES}[GES]{\ul{g}lobal \ul{e}xponential \ul{s}tability}%
    \acro{GUB}{\ul{g}lobally \ul{u}ltimately \ul{b}ounded}%
    \acro{HMAC}{\ul{h}ash-based \ul{m}essage \ul{a}uthentication \ul{c}ode}%
    \acro{ICS}{\ul{i}ndustrial \ul{c}ontrol \ul{s}ystem}%
    \acro{IET}{\ul{i}nter-\ul{e}vent \ul{t}ime}%
    \acro{IDS}{\ul{i}ntrusion \ul{d}etection \ul{s}ystem}%
    \acro{i.i.d.}{\ul{i}ndependent and \ul{i}dentically \ul{d}istributed}%
    \acro{iff}{\ul{if} and only i\ul{f}}%
    \acro{IoT}{\ul{i}nternet \ul{o}f \ul{t}hings}%
    \acro{ISS}{\ul{i}nput-to-\ul{s}tate \ul{s}table}%
    \acrodefplural{ISS}[ISS]{\ul{i}nput-to-\ul{s}tate \ul{s}tability}
    \acro{IT}{\ul{i}nformation \ul{t}echnology}%
    \acro{KDF}{\ul{k}ey \ul{d}erivation \ul{f}unction}%
    \acro{KKT}{\ul{K}arush–\ul{K}uhn–\ul{T}ucker}%
    \acro{KL}{\ul{K}ullback-\ul{L}eibler}%
    \acro{LE}{\ul{l}ittle-\ul{e}ndian}
    \acro{LI}{\ul{l}inear \ul{i}mpulsive}%
    \acro{LLS}{\ul{l}inear \ul{l}east \ul{s}quares}%
    \acro{LMI}{\ul{l}inear \ul{m}atrix \ul{i}nequality}%
    \acrodefplural{LMI}{\ul{l}inear \ul{m}atrix \ul{i}nequalities}%
    \acro{LSB}{\ul{l}east \ul{s}ignificant \ul{b}it}%
    \acro{LQG}{\ul{l}inear–\ul{q}uadratic–\ul{G}aussian}%
    \acro{LQR}{\ul{l}inear–\ul{q}uadratic–\ul{r}egulator}%
    \acro{LTI}{\ul{l}inear \ul{t}ime-\ul{i}nvariant}%
    \acro{MACE}{\ul{m}inimum-\ul{a}verage-\ul{c}ycle-\ul{e}quivelant}%
    \acro{MILP}{\ul{m}ixed-\ul{i}nteger \ul{l}inear \ul{p}rogramming}%
    \acro{MIMO}{\ul{m}ultiple-\ul{i}nput and \ul{m}ultiple-\ul{o}utput}%
    \acro{MITM}{\ul{m}an-\ul{i}n-\ul{t}he-\ul{m}iddle}%
    \acro{MSS}{\ul{m}ean \ul{s}quare \ul{s}table}%
    \acrodefplural{MSS}[MSS]{\ul{m}ean \ul{s}quare \ul{s}tability}%
    \acro{MPC}{\ul{m}odel \ul{p}redictive \ul{c}ontrol}%
    \acro{NaN}{\ul{n}ot-\ul{a}-\ul{n}umber}%
    \acro{NCPS}{\ul{n}etworked \ul{c}yber-\ul{p}hysical \ul{s}ystem}%
    \acro{NCS}{\ul{n}etworked \ul{c}ontrol \ul{s}ystem}%
    \acro{NMP}{\ul{n}on-\ul{m}inimum \ul{p}hase}%
    \acro{OT}{\ul{o}perational \ul{t}echnology}%
    \acro{PD}{\ul{p}ositive-\ul{d}efinite}%
    \acro{PDF}{\ul{p}robability \ul{d}ensity \ul{f}unction}%
    \acro{PETC}{\ul{p}eriodic \ul{e}vent-\ul{t}riggered \ul{c}ontrol}%
    \acro{PID}{\ul{p}roportional–\ul{i}ntegral–\ul{d}erivative}%
    \acro{PLC}{programmable logic controller}%
    \acro{PMF}{\ul{p}robability \ul{m}ass \ul{f}unction}%
    \acro{PRF}{\ul{p}seudo\ul{r}andom \ul{f}unction}%
    \acro{PSD}{\ul{p}ositive \ul{s}emi-\ul{d}efinite}%
    \acro{PWL}{\ul{p}iece\ul{w}ise \ul{l}inear}%
    \acro{PQN}{\ul{p}seudorandom \ul{q}uantization \ul{n}oise}%
    \acro{QP}{\ul{q}uadratic \ul{p}rogram}%
    \acro{QTP}{\ul{q}uadruple-\ul{t}ank \ul{p}rocess}%
    \acro{RTU}{\ul{r}emote \ul{t}erminal \ul{u}nit}%
    \acro{S2C}{\ul{s}ensors-\ul{to}-\ul{c}ontroller}%
    \acro{SCADA}{\ul{s}upervisory \ul{c}ontrol \ul{a}nd \ul{d}ata \ul{a}cquisition}%
    \acro{SD}{\ul{s}tandard \ul{d}eviation}%
    \acro{s.e.}{\ul{s}tandard \ul{e}rror}%
    \acro{SOCP}{\ul{s}econd \ul{o}rder \ul{c}one \ul{p}rogramming}%
    \acro{RHS}{\ul{r}ight-\ul{h}and \ul{s}ide}%
    \acro{SISO}{\ul{s}ingle-\ul{i}nput and \ul{s}ingle-\ul{o}utput}%
    \acro{SL}{\ul{s}witched \ul{l}inear}%
    \acro{STC}{\ul{s}elf-\ul{t}riggered \ul{c}ontrol}%
    \acro{TTC}{\ul{t}ime-\ul{t}riggered \ul{c}ontrol}%
    \acro{VHMPC}{\ul{v}ariable \ul{h}orizon \ul{m}odel \ul{p}redictive \ul{c}ontrol}%
    \acro{WM}{\ul{w}ater\ul{m}arking}%
    \acro{ZDA}{\ul{z}ero \ul{d}ynamics \ul{a}ttack}%
    \acro{ZOH}{{\ul{z}}ero-\ul{o}rder \ul{h}old}%
\end{acronym}%